\newcommand{\acmart}{\True}
\newcommand{\PR}[1]{\href{https://github.com/ocaml/ocaml/issues/#1}{\##1}}
\newcommand{\Fsharp}{F$\sharp$}
\newcommand{\longfootnote}[1]{\Long{\footnote{#1}}{\stepcounter{footnote}}}
\title{A Practical Mode System for Recursive Definitions}
\author{Alban Reynaud}
\affiliation{
  \institution{ENS Lyon}
  \country{France}
}
\author{Gabriel Scherer}
\affiliation{
  \institution{INRIA}
  \country{France}
}
\author{Jeremy Yallop}
\affiliation{
  \institution{University of Cambridge}
  \country{United Kingdom}
}
\begin{document}

\Draft{\hfill\textbf{Draft, \today}}{}

\keywords{
  recursion,
  call-by-value,
  types,
  semantics,
  ML,
  functional programming
}

\begin{abstract}
In call-by-value languages, some mutually-recursive definitions
can be safely evaluated to build recursive functions or cyclic data
structures, but some definitions (\code{let rec x = x + 1}) contain
vicious circles and their evaluation fails at runtime. We propose
a new static analysis to check the absence of such runtime failures.

We present a set of declarative inference rules, prove its soundness
with respect to the reference source-level semantics of
\citet*{dynamic-semantics-2008}, and show that it can be
directed into an algorithmic backwards analysis check in a surprisingly
simple way.

Our implementation of this new check replaced the existing check used
by the OCaml programming language, a fragile syntactic
criterion which let several subtle bugs slip through as the language
kept evolving. We document some issues that arise when advanced
features of a real-world functional language (exceptions in
first-class modules, GADTs, etc.) interact with safety checking for
recursive definitions.
\end{abstract}

\begin{CCSXML}
<ccs2012>
<concept>
<concept_id>10011007.10011006.10011008</concept_id>
<concept_desc>Software and its engineering~General programming languages</concept_desc>
<concept_significance>500</concept_significance>
</concept>
<concept>
<concept_id>10011007.10011006.10011008.10011024.10011033</concept_id>
<concept_desc>Software and its engineering~Recursion</concept_desc>
<concept_significance>500</concept_significance>
</concept>
</ccs2012>
\end{CCSXML}

\ccsdesc[500]{Software and its engineering~General programming languages}
\ccsdesc[500]{Software and its engineering~Recursion}

\maketitle

\section{Introduction}

Recursion pervades functional programs.
Functional programmers often start out
by writing simple recursive definitions
such as the Fibonacci function,
shown here in OCaml:
\begin{lstlisting}
let rec fib = fun x -> if x <= 1 then x
                         $\;$else fib (x-1) + fib (x-2)
\end{lstlisting}
This definition is elegant but, alas, impractical:
computing \lstinline!fib n!
takes time exponential in \lstinline!n!.
One way to improve performance is to memoize:
in place of a function,
we might (recursively) define a lazy list, \lstinline{lfibs},
whose $n$th element represents \lstinline!fib n!:
\begin{lstlisting}
let rec lfibs = lazy (0 :: lazy (1 :: map2 (+) lfibs (tail lfibs)))
\end{lstlisting}
%


As these definitions show, recursion is useful for defining both
functions and values.
However, lazy-list memoization is rarely used in eager languages,
since elegance suffers from the need to make all laziness explicit.
Here is a more idiomatic memoized function \lstinline{mfib},
mutually-defined with a record \lstinline{mfibs},
which pairs the function with a memo-table of its previously-computed values:
\begin{lstlisting}
let rec mfib = fun x -> if x <= 1 then x
                          $\;$else remember mfibs (x-1) + remember mfibs (x-2)
    $\,$and mfibs = { f = mfib; values = empty_table () }
\end{lstlisting}
(The \lstinline!remember! function retrieves previously-computed
values and computes and stores new entries.)
This definition exposes
both \lstinline{mfib} and its table \lstinline{mfibs},
risking inadvertent modification of the table.
A cautious programmer might avoid this danger by making \lstinline{mfibs} local to \lstinline{mfib}:
\begin{lstlisting}
let rec mfib' = let mfibs' = { f = mfib'; values = empty_table () } in
                 $\;$fun x -> if x <= 1 then x
                           $\;$else remember mfibs' (x-1) + remember mfibs' (x-2)
\end{lstlisting}

\subsection{Recursion: Expressiveness vs Safety}

As this tapestry of \lstinline{fib}s suggests, the \emph{usefulness} of
recursion is not limited to simple function definitions: our examples
build record values, call functions and bind local names.
However, in eager languages not all recursion is \emph{safe}.
For example, here is an unsafe eager variant of \lstinline{lfibs}:
\begin{lstlisting}
let rec efibs = 0 :: 1 :: map2 (+) efibs (tail efibs)   (* unsafe! *)
\end{lstlisting}
Evaluation of \lstinline{efibs} fails, since
\lstinline{map2} and \lstinline{tail} access \lstinline{efibs} 
before it is fully constructed.

Existing functional languages incorporate various approaches to
balancing usefulness and safety.


In some languages, such as Scheme~\citep{r6rs}, mutually-recursive
bindings are evaluated immediately, and it is a run-time error for
evaluation to encounter any identifier being bound.
Evaluating \lstinline{mfibs}, which refers to \lstinline{mfib}, would
produce such an error.

Other languages, such as \Fsharp{}, provide a kind of lazy evaluation
for recursive value bindings~\citep{syme-2006} (discussed
in \S\ref{subsec:related-work}) which supports the construction of
recursive objects.
\Fsharp{}'s approach allows more programs to execute without error --- it is
sufficient to support \lstinline{mfib}, though not \lstinline{mfib'}
--- but it does not entirely eliminate run-time errors from ill-formed
recursive bindings.
Consequently, an additional syntactic check~\cite{syme2005alternative}
rejects some cases of self-reference that would result in
run-time errors.  (However, even this additional check is not
sufficient to eliminate \emph{all} such errors.)

Finally, some languages, such as Standard ML~\citep{sml97}, incorporate
a more severe approach, permitting recursion only through syntactic
function definitions such as \lstinline{fib},
and statically rejecting \lstinline{mfib}, \lstinline{mfib'} and \lstinline{efib}.
(Some implementations also support laziness, and
allow \lstinline{lfibs}.)


Of these design choices, Standard ML's most fully embodies Milner's dictum:
\emph{well-typed programs do not go wrong}.
However, as we show in this paper, Standard ML's treatment of
recursive definitions is unnecessarily restrictive: it is possible
to define a much more permissive criterion for recursive definitions
that still ensures the absence of run-time errors.
Our criterion allows useful definitions such as \lstinline{lfibs},
\lstinline{mfib} and \lstinline{mfib'}, while rejecting incorrect
programs such as \lstinline{efibs}.  We present the criterion as a mode system
(proved sound with respect to an operational semantics), suitable for
incorporating into a compiler --- indeed, our implementation has
already been merged into the mainline OCaml compiler.

\subsection{OCaml Needed Fixing}

Before the work described in this paper, OCaml took an approach
similar to \Fsharp{}'s (although somewhat more precise, and based on
OCaml's existing eager semantics rather than a translation
into thunks), checking for vicious recursive definitions via syntactic
analysis of an intermediate representation of programs. We believe
OCaml's check as originally
defined\footnote{\url{https://ocaml.org/releases/4.05/htmlman/extn.html\#sec217}}
was correct, but it proved fragile and difficult to maintain as the
language evolved and new features interacted with recursive
definitions. Over the years, several bugs were found where the check
was unduly lenient (\S\ref{section:issues-with-the-previous-check}).
In conjunction with OCaml's efficient compilation scheme for recursive
definitions~\citep{hirschowitz-compilation-2009}, this leniency
resulted in memory safety violations, and led to segmentation faults
when definitions that accessed recursively-defined objects before they
were initialized were allowed through undetected.

\subsection{Generalized Recursion in Practice}

In order to determine whether generalized recursive definitions were
used in practice, we searched a large subset of packages in the OCaml
software repository (OPAM) for instances of generalized recursion
(i.e.~recursive definitions that Standard ML would reject).
At the time of our analysis in July 2020, we found 309 distinct
examples of such definitions in around 74 packages.
%
%
The definitions variously made use of local bindings and other local
constructs (such as local exception definitions),
evaluation of sub-terms, record and variant construction and laziness,
in both singly- and mutually-recursive bindings.
Generalized recursion definitions are used in useful data structures
and important libraries that many other packages in turn depend on. We
found that 1613 packages out of 2819 in the repository at the time
(57\%) depend on at least one of those packages using generalized
recursion.

\subsection{Our Analysis}

The present document formally describes our analysis using a core ML
language (\Sref{section:core-language}). We present inference rules
(\Sref{section:inference-rules}) and study the meta-theory of the
analysis. We propose a source-level operational semantics, refreshing
semantics proposed in earlier
works~\citep*{ariola-felleisen,dynamic-semantics-2008,hirschowitz-compilation-2009}
with explicit substitutions using \emph{reduction at a distance}
(\Sref{section:meta-theory}), and show that our analysis is sound for
this semantics. We also propose a semantics that uses mutable
updates to a global store, closer to production-compiler compilation
strategies (\Sref{section:global-store}), for which our analysis is also
sound. Finally, we discuss the challenges caused by scaling the
analysis to OCaml (\Sref{section:extension-full}), a full-fledged
functional language, in particular the delicate interactions with
non-uniform value representations (\Sref{section:float-arrays}), with
exceptions and first-class modules (\Sref{section:exceptions}), and
with Generalized Algebraic Datatypes (GADTs) (\Sref{section:gadts}).

\subsection{Adoption in OCaml}

The initial version of our new system was originally released in OCaml
4.06.0 (3 Nov 2017).
We formalised the new system after its release, and reworked the
implementation to better match the formalisation.  The updated
implementation was released in OCaml 4.08.0 (13 June 2019).

Before releasing our new system, we sought to determine whether it was
significantly more restrictive in practice than the previous check by
running it over the packages in OPAM.  This investigation did not
uncover any code that was accepted by the previous check but rejected
by our new system.
Six major OCaml releases later there has been only one report of such
a program (\PR{7767}); it was straightforward to update our system to
accept it.

Of course, since we looked only at existing OCaml code, our
investigation could not uncover definitions that were rejected by the
previous check but accepted by our new system, which is more
permissive in some cases
(\S\ref{section:issues-with-the-previous-check}).
However, our goal was not to increase the expressivity of OCaml's
value definitions, but to design a static check that was
backwards-compatible with the previous check, while being easier to
reason about and evolve in tandem with the
language~(\S\ref{section:extension-full}).
Our aim was to make the check as simple as possible within the
constraints, not as expressive as possible.

Furthermore, moving the check from the compiler middle end into
the type checker has another benefit: it is convenient for tools
that reuse OCaml's type-checker without performing compilation, such
as the multi-stage language
MetaOCaml~\cite{DBLP:conf/flops/Kiselyov14} (which type-checks code
quotations) and the Merlin language
server~\cite{DBLP:journals/pacmpl/BourRS18} (which type-checks code
during editing).

\subsection{Contributions}

We claim the following contributions:
\begin{itemize}
\item We propose a new system of inference rules that captures the
  safety conditions for recursive definitions in an eager language
  (\Sref{section:inference-rules}), previously enforced in OCaml
  by ad-hoc syntactic restrictions.
\item We prove the analysis sound with respect to a source-level
  operational semantics: accepted recursive terms evaluate without
  vicious-circle failures (\Sref{section:meta-theory}).  Our
  source-level semantics is justified by a simulation result with
  lower-level backpatching semantics with a global store
  (\Sref{section:global-store}), for which our analysis is also sound.
\item We have implemented a checker derived from the rules, scaled up to
  the full OCaml language (\S\ref{section:extension-full}) and
  integrated in the OCaml implementation.
\item Our analysis is less fine-grained on functions than existing
  works (\S\ref{subsec:related-work}), thanks to a less demanding
  problem domain (ML functions rather than ML functors), but in exchange
  it provides finer-grained handling of cyclic data and an effective
  inference algorithm.
\end{itemize}

\section{Overview}

\subsection{Access Modes}
\label{subsec:intro-modes}

Our analysis is based on the classification of each use of a
recursively-defined variable using ``access modes'' or ``usage modes''
$m$.  These modes represent the degree of access needed to the value
bound to the variable during evaluation of the recursive definition.

For example, in the recursive function definition
\begin{lstlisting}
let rec f = fun x -> $\ldots$ f $\ldots$
\end{lstlisting}
the recursive reference to \code{f} in the right-hand-side does not
need to be evaluated to define the function value
\code{fun x -> $\ldots$}
since its value will only be required later, when the function is applied.
We say that, in this right-hand-side, the mode of use of the variable \code{f}
is $\Delay$.

In contrast, in the vicious definition
\code{let rec x = 1 + x}
evaluation of the right-hand side 
involves accessing the
value of |x|; we call this usage mode a $\Dereference$. Our static
check rejects mutually-recursive definitions that access
recursively-bound names under this mode.

Some patterns of access fall between the extremes of $\Delay$ and
$\Dereference$.  For example, in the cyclic datatype construction
\lstinline!let rec obj = { self = obj }!
the recursively-bound variable \code{obj} appears on the right-hand side
without being placed inside a function abstraction.  However, since it
appears in a ``guarded'' position, within the record constructor |{self = -}|,
evaluation only needs to access its address, not its
value. We say that the mode of use of the variable \code{ones} is
$\Guard$.

Finally, a variable |x| may also appear in a position where its value
is not inspected, neither is it guarded beneath a constructor, as in
the expression \code{x}, or \code{let y = x in y}, for example. In
such cases we say that the value is ``returned'' directly and use the
mode $\Return$. As with |Dereference|, recursive definitions that
access variables at the mode $\Return$, such as 
\code{let rec x = x},
would be under-determined and are rejected.

We also use a last $\Ignore$ mode to classify variables that are not
used at all in a term.

\subsection{An Inference System (and Corresponding Backwards Analysis)}
\label{subsec:intro-right-to-left}
\label{subsec:intro-rules}

The central contribution of our work is a simple system of inference
rules for a judgment of the form $\der \Gamma t m$, where $t$ is
a program term, $m$ is an access mode, and the environment $\Gamma$
maps term variables to access modes. Modes classify terms and
variables, playing the role of types in usual type systems. The
example judgment
%
$
\der {x : \Dereference, y : \Delay} {(x+1, \lazy{y})} \Guard
$
can be read alternatively
\begin{description}
\item[forwards:] If we know that $x$ can safely be used in
  $\Dereference$ mode, and $y$ can safely be used in $\Delay$ mode,
  then the pair $(x+1, \lazy{y})$ can safely be used under a value
  constructor (in a $\Guard$-ed context).
\item[backwards:] If a context accesses the program fragment
  $(x+1, \lazy{y})$ under the mode $\Guard$, then this means that the
  variable $x$ is accessed at the mode $\Dereference$, and the variable $y$
  at the mode $\Delay$.
\end{description}
\hspace{\parindent}
This judgment uses access modes for two purposes: to classify
variables, and to classify the constraints imposed on a subterm by its
surrounding context. If a context $C[\hole]$ uses its hole $\hole$ at
the mode $m$, then any derivation for the plugged context $C[t]
: \Return$ will contain a sub-derivation of the form $t : m$ for the
term $t$.

In general, we can define a notion of mode composition: if we try to
prove $C[t] : m'$, then the sub-derivation will check
$t : \mcomp {m'} m$, where $\mcomp {m'} m$ is the composition of the
access-mode $m$ under a surrounding usage mode $m'$, and $\Return$ is
neutral for composition.

Our judgment $\der \Gamma t m$ can be directed into an algorithm
following our backwards interpretation. Given a term $t$ and a
mode $m$ as inputs, our algorithm computes the least demanding environment
$\Gamma$ such that $\der \Gamma t m$ holds.

For example, the inference rule for function abstractions in our
system is as follows:
\begin{mathpar}
  \infer
  {\der {\Gamma, \of x {m_x}} t {\mcomp m \Delay}}
  {\der \Gamma {\lam x t} m}
\end{mathpar}
The backwards reading of the rule is as follows. To compute the
constraints $\Gamma$ on $\lam x t$ in a context of mode $m$, it
suffices to check the function body $t$ under the weaker mode
$\mcomp m \Delay$, and remove the function variable $x$ from the
collected constraints --- its mode does not matter. If $t$ is
a variable $y$ and $m$ is $\Return$, we get the environment
$\of y \Delay$ as a result.

Given a family of mutually-recursive definitions
$\letrecfam {i \in I} {x_i} {t_i}$, we run our algorithm on each $t_i$
at the mode $\Return$, and obtain a family of environments
$\fam{i \in I}{\Gamma_i}$ such that all the judgments
$\fam{i \in I}{\der {\Gamma_i} {t_i} \Return}$ hold. The definitions
are rejected if one of the $\Gamma_i$ contains one of the
mutually-defined names $x_j$ under the mode $\Dereference$ or
$\Return$ rather than $\Guard$ or $\Delay$.

\subsection{Issues with the Previous Check}
\label{section:issues-with-the-previous-check}

 Before this work, the
safety criterion used by OCaml for recursive value definitions was an
ad-hoc grammatical restriction, formulated essentially as
a context-free grammar of accepted definitions (see its description in
\href{http://caml.inria.fr/pub/docs/manual-ocaml-4.09/manual023.html}{the
  reference manual}). Furthermore, this syntactic check was not
performed on the source program directly, but on an intermediate
representation (the Lambda code) --- so that it wouldn't have to take
into account various surface-language forms that desugar to the same
intermediate-language construct.

We list below some of the known issues with the previous check. They
were solved by our work.

\paragraph{\PR{7231}: unsoundness with nested recursive bindings} The
previous check accepted the following unsafe program.
\begin{lstlisting}
let rec r = let rec x () = r
                  and y () = x ()
             in y ()
       in r "oops" (* segfault *)
\end{lstlisting}
The problem is that while the declarations of \lstinline{x} and
\lstinline{y} are ``safe'' (in some sense) with respect to
\lstinline{r}, using \lstinline{y} is not safe --- it returns
\lstinline{r} itself. This subtlety was lost on the previous
check. With the current check, \lstinline{y ()} uses \lstinline{r} at
mode \lstinline{Return}, which is stricter than $\Guard$, so
this program is rejected.

\paragraph{\PR{7215}: unsoundness with GADTs} The previous check accepted the following unsafe
program.
\begin{lstlisting}
let is_int (type a) =
  let rec (p : (int, a) eq) = match p with Refl -> Refl in p
\end{lstlisting}
This program uses a recursive value declaration of a GADT value to
build a type-equality between \lstinline{int} and an arbitrary type
\lstinline{a}. Our check rejects the program because
\lstinline{match p with Refl -> ...}
is a dereferencing use of \lstinline{p}. The previous check was run on
an intermediate form, after various optimizations, one of which would
eliminate the single-case match away, resulting in the (unsound)
program passing the check.

\paragraph{\PR{6939}: unsoundness with float arrays}$~$
\begin{lstlisting}
let rec x = ([| x |]; 1.) in ()
\end{lstlisting}
This program defines \lstinline{x} to be the floating-point value
\lstinline{1.} after ignoring the value of the one-element array
\lstinline{[| x |]}. Although the program was accepted by the previous check,
OCaml's non-uniform value representation makes it unsafe, and it would
fail with a segmentation fault when run, as explained in
\myfullref{Section}{section:float-arrays}. Our algorithm uses
typing information, which is needed to detect this case: construction
of a float array is treated as a $\Dereference$ context for its
elements.

\paragraph{\PR{4989}: inconveniently rejected program}$~$
\begin{lstlisting}
let rec f = let g = fun x -> f x in g
\end{lstlisting}
This program, which gives a local name to an expression
that accesses \lstinline{f} at mode $\Delay$, is perfectly safe, but
was rejected by the previous check.
A grammar-based check lacks a form of composability
that would allow the use of local bindings to give names to
sub-expressions in an analysis-preserving way. This issue (\PR{4989}) dates back to 2010: this form of
composability had been requested by users for a long time as
a convenience feature, but the previous check could not be extended to
allow it. On the contrary, proper handling of inner
\lstinline{let}-bindings falls out naturally from our type-system-inspired
approach.

\section{A Core Language of Recursive Definitions}
\label{section:core-language}

\paragraph{Family notation} We write $\fam {i \in I} {\dots}$ for
a family of objects parametrized by an index $i$ over finite set $I$,
and $\emptyset$ for the empty family. Furthermore, we assume that
index sets are totally ordered, so that the elements of the family are
traversed in a predetermined linear order; we write
$\fam {i_1 \in I_1} {t_{i_1}}, \fam{i_2 \in I_2} {t_{i_2}}$ for the
combined family over $I_1 \uplus I_2$, with the indices in $I_1$
ordered before the indices of $I_2$. We often omit the index set,
writing $\fam i {\dots}$. Families may range over two indices (the
domain is the cartesian product), for example $\fam {i,j} {t_{i,j}}$.

Our syntax, judgments, and inference rules will often use families:
for example, $\letrecfam i {x_i} {t_i}$ is a mutually-recursive
definition of families $\fam i {t_i}$ of terms bound to corresponding
variables $\fam i {x_i}$ --- assumed distinct, following the Barendregt
convention. Sometimes a family is used where a term is expected, and
the interpretation should be clear: when we say
``$\fam i {\der {\Gamma_i} {t_i} {m_i}}$ holds'', we implicitly use
a conjunctive interpretation: each of the judgments in the family
holds.

\subsection{Syntax}
\begin{mathparfig}[t!]{fig:syntax}{Core language syntax}
  \begin{array}{l@{~}r@{~}l}
    \set{Terms} \ni t, u & \bnfeq & x, y, z \\
    & \bnfor & \letrecin b u \\
    & \bnfor & \lam x t 
      \bnfor 
               \app t u \\
    & \bnfor & \constr K {\fam i {t_i}} 
      \bnfor 
               \match t h \\ \\
  \end{array}

  \begin{array}{l@{~}r@{~}l}
    \set{Bindings} \ni b & \bnfeq & {\fam i {x_i = t_i}} \\
    \set{Handlers} \ni h & \bnfeq & {\fam i {\clause {p_i} {t_i}}} \\
    \set{Patterns} \ni p, q & \bnfeq & \constr K {\fam i {x_i}} \\
  \end{array}
\end{mathparfig}

\myref{Figure}{fig:syntax} introduces a minimal subset of ML containing
the interesting ingredients of OCaml's recursive values:
\begin{itemize}
\item A multi-ary \code{let rec} binding $\letrecin {\fam i {x_i = t_i}} u$.
\item Functions ($\lambda$-abstractions) $\lam x t$ to write recursive
  occurrences whose evaluation is delayed.
\item Datatype constructors $\constr K (t_1, t_2, \dots)$ to write
  (safe) cyclic data structures; these stand in both for user-defined
  constructors and for built-in types such as lists and tuples.
\item Shallow pattern-matching
  $(\match t {\fam i {\clause {\constr {K_i} {\fam j {x_{i,j}}}} {u_i}}})$,
  to write code that inspects values, in particular code with vicious
  circles.
\end{itemize}

The following common ML constructs do not need to be primitive forms,
as we can desugar them into our core language. In particular, the full
inference rules for OCaml (and our check) exactly correspond to the
rules (and check) derived from this desugaring.

Besides dispensing with many constructs whose essence is captured by
our minimal set, we further simplify matters by using an untyped ML
fragment: we do not need to talk about ML types to express our check,
or to assume that the terms we are working with are
well-typed.\footnote{In more expressive settings, the structure of
  usage modes does depend on the structure of values, and checks need
  to be presented as a refinement of a ML type system. We discuss this
  in \myref{Section}{subsec:related-work}. Our modes are
  a degenerate case, a refinement of uni-typed ML.}
However, we do assume that our terms are well-scoped --- note
that, in $\letrecin {\fam i {x_i = v_i}} u$, the $\fam i {x_i}$ are in
scope of $u$ but also of all the $v_i$.

\begin{remark} Recursive values are a controversial feature as they
  break the assumption that structurally-decreasing recursive
  functions will terminate on all inputs. The uses we found in the
  wild in OCaml programs typically combine ``negative'' constructs
  (functions, lazy, records) rather than infinite lists or
  trees. A possible design would be to distinguish an ``inductive''
  sub-space of recursive types whose recursive occurrences are
  forbidden in negative positions, and whose constructors are not
  given the $\Guard$ mode in our system. In another direction,
  \citet*{cocaml} propose language extensions to make it easier to
  operate over cyclic structures.
\end{remark}


\section{A Mode System for Recursive Definitions}
\label{section:inference-rules}

\subsection{Access/Usage Modes}
\label{subsec:modes}
\begin{mathparfig}[t!]{fig:modes}{Access/usage modes and operations}
  \begin{array}{l@{~}r@{~}l}
    \set{Modes} \ni m & \bnfeq & \Ignore \\
    & \bnfor & \Delay \\
    & \bnfor & \Guard \\
    & \bnfor & \Return \\
    & \bnfor & \Dereference \\
  \end{array}

\begin{array}{l} \text{Mode order:} \\
\Ignore \prec \Delay \prec \Guard \prec \Return \prec \Dereference
\end{array}

\begin{array}{l}
\text{Mode composition rules}: \\
  \begin{array}{lll}
    \mcomp \Ignore m & = & \Ignore \quad = \quad \mcomp m \Ignore
    \\
    \mcomp \Delay {m \succ \Ignore} & = & \Delay
    \\
    \mcomp \Guard \Return & = & \Guard
    \\
    \mcomp \Guard {m \neq \Return} & = & m
    \\
    \mcomp \Return m & = & m
    \\
    \mcomp \Dereference {m \succ \Ignore} & = & \Dereference
    \\
\end{array}

\\
\\

\text{Mode composition as a table:} \\
\begin{array}{lccccccc}
\mcomp m {m'} & \Ignore & \Delay & \Guard       & \Return      & \Dereference & m \\
\hline
\Ignore      & \Ignore & \Ignore & \Ignore      & \Ignore      & \Ignore      \\
\Delay       & \Ignore & \Delay  & \Delay       & \Delay       & \Dereference \\
\Guard       & \Ignore & \Delay  & \Guard       & \Guard       & \Dereference \\
\Return      & \Ignore & \Delay  & \Guard       & \Return      & \Dereference \\
\Dereference & \Ignore & \Delay  & \Dereference & \Dereference & \Dereference \\
m' &
\end{array}
\end{array}
\end{mathparfig}

\myref{Figure}{fig:modes} defines the access/usage modes that we
introduced in \myref{Section}{subsec:intro-modes}, their order
structure, and the mode composition operations.
The modes are as follows: \Xgabriel{Some items below explain both
  views ``mode of a subterm'' and ``mode of a variable'', it would be
  nice to consistently use this style for all modes.}
\begin{description}
\item[$\Ignore$] is for sub-expressions that are not used at all during
  the evaluation of the whole program. This is the mode of a variable
  in an expression in which it does not occur.
\item[$\Delay$] means that the context can be evaluated (to a weak
  normal-form) without evaluating its argument. $\lam x \hole$ is
  a delay context.
\item [$\Guard$] means that the context returns the value as a member
  of a data structure, for example a variant constructor or
  record. $\constr K {(\hole)}$ is a guard context. The value can
  safely be defined mutually-recursively with its context, as in
  $\letrec {x = {\constr K {(x)}}}$.
\item [$\Return$] means that the context returns its value without
  further inspection. This value cannot be defined
  mutually-recursively with its context, to avoid self-loops: in
  $\letrec {x = x}$ and $\letrec {x = \letin y x y}$, the rightmost
  occurrence of $x$ is in $\Return$ context.
\item [$\Dereference$] means that the context consumes, inspects and
  uses the value in arbitrary ways. Such a value must be fully defined
  at the point of usage; it cannot be defined mutually-recursively
  with its context. $\match \hole h$ is a $\Dereference$ context.
\end{description}

\begin{remark}[Discarding]\label{rem:discarding}
  The $\Guard$ mode is also used for subterms whose result is
  discarded by the evaluation of their context. For example, the hole
  $\hole$ is in a $\Guard$ context in $(\letin x \hole u)$, if $x$ is
  never used in $u$; even if the hole value is not needed,
  call-by-value reduction will first evaluate it and discard it. When
  these subterms participate in a cyclic definition, they cannot
  create a self-loop, so we consider them guarded.
\end{remark}

Our ordering $m \prec m'$ places less demanding, more permissive modes
that do not involve dereferencing variables (and so permit their use
in recursive definitions), below more demanding, less permissive
modes.

Each mode is closely associated with particular expression contexts.
For example, $\app t \hole$ is a $\Dereference$ context, since the function
$t$ may access its argument in arbitrary ways, while $\lam x \hole$ is a
$\Delay$ context.

Mode composition corresponds to context composition, in the sense that
if an expression context $\plug E \hole$ uses its hole at mode $m$
(to compute a result), and a second expression context
$\plug {E'} \hole$ uses its hole at mode $m'$, then the composition of
contexts $\plug E {\plug {E'} \hole}$ uses its hole at mode
$\mcomp m {m'}$.
Like context composition, mode composition is associative, but not
commutative: $\mcomp \Dereference \Delay$ is
$\Dereference$, but $\mcomp \Delay \Dereference$ is $\Delay$.

%

Continuing the example above, the context $\app t {(\lam x \hole)}$,
formed by composing $\app t \hole$ and $\lam x \hole$, is
a $\Dereference$ context: the intuition is that the function $t$ may
pass an argument to its input and then access the result in arbitrary
ways.
In contrast, the context $\lam x {(\app t \hole)}$,
formed by composing $\lam x \hole$ and $\app t \hole$,
is a $\Delay$ context: the contents of the hole will
not be touched before the abstraction is applied.

Finally, $\Ignore$ is the absorbing element of mode composition
($\mcomp m \Ignore = \Ignore = \mcomp \Ignore m$), $\Return$ is
an identity ($\mcomp \Return m = m = \mcomp m \Return$),
and composition is idempotent ($\mcomp m m = m$).

\subsection{Inference Rules}

\paragraph{Environment notations} Our environments $\Gamma$ associate
variables $x$ with modes $m$. We write $\Gamma_1, \Gamma_2$ for the
union of two environments with disjoint domains, and
$\envsum {\Gamma_1} {\Gamma_2}$ for the merge of two overlapping
environments, taking the maximum mode for each variable. We sometimes
use family notation for environments, writing  $\fam i {\Gamma_i}$
to indicate the disjoint union of the members,
and $\envbigsum {\fam i {\Gamma_i}}$
for the non-disjoint merge of a family of environments.

\begin{mathparfig}[t!]{fig:rules}{Mode inference rules}
\fbox{Term judgment $\der \Gamma t m$}
  \\
  \infer{ }
  {\der {\Gamma, \of x m} x m}

  \infer
  {\der \Gamma t m\\
   m \succ m'}
  {\der \Gamma t {m'}}
  \\
  \infer
  {\der {\Gamma, \of x {m_x}} t {\mcomp m \Delay}}
  {\der \Gamma {\lam x t} m}

  \infer
  {\der {\Gamma_t} t {\mcomp m \Dereference} \\
   \der {\Gamma_u} u {\mcomp m \Dereference}}
  {\der {\envsum {\Gamma_t} {\Gamma_u}} {\app t u} m}
  \\
  \infer
  {\fam i {\der {\Gamma_i} {t_i} {\mcomp m \Guard}}}
  {\der {\envbigsum {\fam i {\Gamma_i}}} {\constr K {\fam i {t_i}}} m}

  \infer
  {\der {\Gamma_t} t {\mcomp m \Dereference} \\
   \derclause {\Gamma_h} h m}
  {\der {\envsum {\Gamma_t} {\Gamma_h}} {\match t h} m}
  \\
  \infer
  {\derbinding {\fam i {\of {x_i} {\Gamma_i}}} b \\
   \fam i {m'_i} \defeq \fam i {\max(m_i, \Guard)} \\
   \der {\Gamma_u, \fam i {\of {x_i} {m_i}}} u m}
  {\der
   {\envsum
     {\envbigsum {\fam i {\mcomp {m'_i} {\Gamma_i}}}}
     {\Gamma_u}}
   {\letrecin b u}
   m}
 \\
  \fbox{Clause judgments $\derclause \Gamma h m$ and
    $\derclause \Gamma {\clause p u} m$}
  \\
  \infer
    {\fam i {\derclause {\Gamma_i} {\clause {p_i} {u_i}} m}}
    {\derclause
      {\envbigsum {\fam i {\Gamma_i}}}
      {\fam i {\clause {p_i} {u_i}}}
      m}

  \infer
  {\der {\Gamma, {\fam i {\of {x_i} {m_i}}}} u m}
  {\derclause \Gamma {\clause {\constr K {\fam i {x_i}}} u} m}
\\
  \fbox{Binding judgment $\derbinding {\fam i {\of {x_i} {\Gamma_i}}} b$}
\\
  \infer
  {\fam {i \in I}
    {\der
      {\Gamma_i, \fam {j \in I} {\of {x_j} {m_{i,j}}}}
      {t_i}
      \Return}
   \\
   \fam {i,j} {m_{i,j} \preceq \Guard}
   \\\\
   \fam i {\Gamma'_i =
     \envsum {\Gamma_i} {\envbigsum {\fam j {\mcomp {m_{i,j}} {\Gamma'_j}}}}}
  }
  {\derbinding
    {\fam {i \in I} {\of {x_i} {\Gamma'_i}}}
    {\fam {i \in I} {x_i = t_i}}}
\end{mathparfig}

\paragraph{Inference rules}
\myref{Figure}{fig:rules} presents the inference rules for access/usage modes.
The rules are composed into several different judgments, even though
our simple core language makes it possible to merge them. In the full
system for OCaml the decomposition is necessary to make the system
manageable.

\myfullref{Section}{subsec:examples} contains examples of mode
judgments in our system, corresponding to recursive definitions that
are accepted or rejected. Looking at those examples in parallel may
help understand some of the inference rules, in particular for
$\kwd{let}~\kwd{rec}$.

\paragraph{Variable and subsumption rules} The variable rule is as one
would expect: the usage mode of $x$ in an $m$-context is $m$. In this
declarative presentation, we let the rest of the environment $\Gamma$ be
arbitrary; we could also have imposed that it map all variables to
$\Ignore$. Our algorithmic check returns the ``least
demanding'' environment $\Gamma$ for all satisfiable judgments, so it
uses $\Ignore$ in any case.

We have a subsumption rule; for example, if we want to check
$t$ under the mode $\Guard$, it is always sound
to attempt
to check it under the stronger mode $\Dereference$.
Our algorithmic check will never use this rule; it is here for
completeness.
The direction of the comparison may seem unusual.
We can coerce
a $\der \Gamma t m$ into $\der \Gamma t m'$ when $m \succ m'$ holds,
while we might expect $m \leq m'$.
This comes from the fact that our backwards reading is opposite to
the usual reading direction of type judgments, and influenced our
order definition. When $m \succ m'$ holds, $m$ is \textit{more demanding}
than $m'$, which means (in the usual subtyping sense) that it
classifies \emph{fewer} terms.

\paragraph{Simple rules}

We have seen the $\lam x t$ rule already, in
\myref{Section}{subsec:intro-rules}. Since $\lambda$ delays evaluation,
checking $\lam x t$ in a usage context $m$ involves checking the body $t$ under
the weaker mode $\mcomp m \Delay$.
The necessary constraints $\Gamma$ are returned, after removing the
constraint over $x$
\longfootnote{In situations where it
  is desirable to have a richer mode structure to analyze function
  applications, as considered by some of the related work
  (\myref{Section}{subsec:related-work}), we could use the mode $m_x$
  in a richer return mode $m_x \to m$.}.

The application rule checks both the function and its argument in
a $\Dereference$ context, and merges the two resulting environments,
taking the maximum (most demanding) mode on each side; a variable $y$
is dereferenced by $\app t u$ if it is dereferenced by either $t$ or
$u$.

The constructor rule is similar to the application rule, except that
the constructor parameters appear in $\Guard$ context, rather than
$\Dereference$.

\paragraph{Pattern-matching} The rule for $\match t h$ relies on
a different \emph{clause judgment} $\derclause \Gamma h m$ that checks each
clause in turn and merges their environments. On a single clause
$\clause {\constr K {\fam i {x_i}}} u$, we check the right-hand-side
expressions $u$ in the ambient mode $m$, and remove the pattern-bound
variables $\fam i {x_i}$ from the environment.\longfootnote{If we
  wanted a finer-grained analysis of usage of the sub-components of
  our data, we would use the sub-modes $\fam i {m_i}$ of the pattern
  variables to enrich the datatype of the pattern
  scrutinee.}

\paragraph{Recursive definitions} The rule for mutually-recursive
definitions $\letrecin b u$ is split into two parts with disjoint
responsibilities. First, the binding judgment
$\derbinding {\fam i {\of {x_i} {\Gamma_i}}} b$ computes, for each
definition $x_i = e_i$ in a recursive binding $b$, the usage
$\Gamma_i$ of the ambient context before the recursive binding --- we
detail its definition below.

Second, the $\letrecin b u$ rule of the term judgment takes these
$\Gamma_i$ and uses them under a composition
$\mcomp {m'_i} {\Gamma_i}$, to account for the actual usage mode of
the variables. (Here $\mcomp m \Gamma$ denotes the pointwise lifting of
composition for each mode in $\Gamma$.) The usage mode $m'_i$ is
a combination of the usage mode in the body $u$ and $\Guard$, used to
indicate that our call-by-value language will compute the values now,
even if they are not used in $u$, or only under a delay --- see
\myfullref{Remark}{rem:discarding}.

\paragraph{Deriving a simple $\kwd{let}$ rule} Before we delve into
the more general rule for mutually-recursive definitions, let us
mention the particular case of a single, non-recursive definition
$\letin x t u$. The general rule simplifies itself into the following:
\begin{mathline}
\infer
{\der {\Gamma_t} t \Return \\
 m' \defeq \max(m_x, \Guard) \\
 \der {\Gamma_u, \of x {m_x}} u m}
{\der
 {\envsum
   {\mcomp {m'} {\Gamma_t}}
   {\Gamma_u}}
 {\letin x t u}
 m}
\end{mathline}

\paragraph{Binding judgment and mutual recursion}
The \emph{binding judgment}
$\derbinding {\fam {i \in I} {\of {x_i} {\Gamma_i}}} b$ is independent
of the ambient context and usage mode; it checks recursive bindings
in isolation in the $\Return$ mode, and relates each name $x_i$
introduced by the binding $b$ to an environment $\Gamma_i$ on the
ambient free variables.

In the first premise, for each binding $(x_i = t_i)$ in $b$, we check
the term $t_i$ in a context split in two parts, some usage context
$\Gamma_i$ on the ambient context around the recursive definition, and
a context $\fam {j \in I} {\of {x_j} {m_{i,j}}}$ for the
recursively-bound variables, where $m_{i,j}$ is the mode of use of
$x_j$ in the definition of $x_i$.

The second premise checks that the modes $m_{i,j}$ are $\Guard$ or
less demanding, to ensure that these mutually-recursive definitions
are valid. This is the check mentioned at the end of
\myfullref{Section}{subsec:intro-right-to-left}.

The third premise makes mutual-recursion safe by turning the
$\Gamma_i$ into bigger contexts $\Gamma'_i$ taking transitive mutual
dependencies into account: if a recursive definition $x_i = e_i$ uses
the mutually-defined variable $x_j$ under the mode $m_{i,j}$, then we
ask that the final environment $\Gamma'_i$ for $e_i$ contains what you
need to use $e_j$ under the mode $m_{i,j} $, that is
$\mcomp {m_{i,j}} {\Gamma'_j}$. This set of recursive equations
corresponds to the fixed point of a monotone function, so in particular
it has a unique least solution.

Note that because the $m_{i,j}$ must be below $\Guard$, we can show
that $\mcomp {m_{i,j}} {\Gamma_j} \preceq \Gamma_j$. In particular, if
we have a single recursive binding, we have
$\Gamma_i \succeq \mcomp {m_{i,i}} {\Gamma_i}$, so the third premise
is equivalent to just $\Gamma'_i \defeq \Gamma_i$: the $\Gamma'_i$ and
$\Gamma_i$ only differ for non-trivial mutual recursion.

\paragraph{Unique minimal environment}

In \myappendixfullref{ann:properties} we develop some direct
meta-theoretic properties of our inference rules. We summarize here
the key results.
For each $t : m$, there exists a \emph{minimal} environment $\Gamma$ such that
$\der \Gamma t m$ holds.
\begin{theorem}[Principal environments]
  \label{thm:principal-environments}
  Whenever both $\der {\Gamma_1} t m$ and $\der {\Gamma_2} t m$ hold,
  then $\der {\min(\Gamma_1, \Gamma_2)} t m$ also holds.
\end{theorem}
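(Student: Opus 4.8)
The statement to prove is Theorem (Principal environments): whenever $\der {\Gamma_1} t m$ and $\der {\Gamma_2} t m$ both hold, then $\der {\min(\Gamma_1, \Gamma_2)} t m$ holds as well, where $\min$ is the pointwise minimum with respect to the mode order $\prec$.

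Let me think about the structure. The inference rules compute environments compositionally. Environment operations involve $\envsum$ (pointwise max) and $\mcomp{m}{\Gamma}$ (pointwise composition). The subsumption rule allows weakening the mode $m$ downward (since $m \succ m'$). The key difficulty: two derivations of $t : m$ might use the subsumption rule differently, so I can't just match rule applications syntactically. I should first normalize using the Subsumption elimination lemma so that subsumption appears only above variable uses.

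The proof should be by induction on $t$. For each syntactic form, I invert both derivations and combine the premise environments using the induction hypothesis, checking that $\min$ distributes appropriately over $\envsum$ and $\mcomp{\cdot}{\cdot}$. The crucial algebraic facts are: $\min$ distributes over $\envsum$ (since on a totally ordered set $\min(a \vee c, b \vee d) \geq$ is not quite right — need the right inequality direction), and composition $\mcomp{m}{-}$ is monotone so it commutes with $\min$ in the needed direction. The main obstacle is the $\letrecin b u$ case, where the binding judgment defines $\Gamma'_i$ as a least fixed point of a monotone system; I must show the fixed points for the two derivations have a common lower bound that is itself a valid solution, using monotonicity of the fixed-point operator.

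The approach I would take:

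\begin{itemize}
\item Normalize both derivations via \myref{Lemma}{} (Subsumption elimination) so subsumption only sits above variable rules. This lets me treat each structural rule as applied with a definite principal mode at each node.
\item Proceed by structural induction on $t$. The \textbf{variable} case is the base: $\der {\Gamma_1} x m$ and $\der {\Gamma_2} x m$ both require the binding $\of x m$ (possibly reached through subsumption from some $m'' \succ m$, but normalization fixes the mode), and the rest of each $\Gamma_k$ is unconstrained; so $\min(\Gamma_1,\Gamma_2)$ still contains $\of x m$ and the variable rule applies.
\item For \textbf{abstraction}, \textbf{application}, \textbf{constructor}, and \textbf{match}, invert both derivations, apply the induction hypothesis to the corresponding subterms (at the composed modes $\mcomp m \Delay$, $\mcomp m \Dereference$, $\mcomp m \Guard$, which are identical in both derivations once the outer mode $m$ is fixed), and then verify the algebraic identity relating $\min$ to the environment combinator used by that rule. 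The load-bearing facts are that $\min$ (pointwise, on a total order) commutes with $\envsum$/$\envbigsum$ up to the inequality we need, and that composition $\mcomp{m}{-}$ is monotone, hence $\mcomp{m}{\min(\Gamma,\Delta)} \preceq \min(\mcomp m \Gamma, \mcomp m \Delta)$ with equality sufficing for the rule.
\end{itemize}

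The hard part will be the \textbf{recursive-definition} case. The binding judgment fixes each $\Gamma'_i$ as the least solution of the monotone system $\Gamma'_i = \envsum {\Gamma_i} {\envbigsum {\fam j {\mcomp {m_{i,j}} {\Gamma'_j}}}}$, and both derivations may use different mode matrices $m_{i,j}$ and different inner environments. I would first argue the $m_{i,j}$ are forced by the structure of $t_i$ (up to subsumption, again controlled by normalization) so the two derivations agree on the matrix; then apply the induction hypothesis to each $t_i$ to obtain minimal inner environments, and finally invoke monotonicity of the least-fixed-point operator: since the least fixed point is monotone in the system's parameters, the pointwise $\min$ of the two solution families is itself a post-fixed point below both, and the least solution of the $\min$-system lies below it. I would then confirm that this least solution is exactly $\min(\Gamma'_{1,i}, \Gamma'_{2,i})$, closing the case. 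An alternative, possibly cleaner route is to observe that the entire rule system computes a principal (least) environment as an algorithm (the directed reading described in \myref{Section}{subsec:intro-right-to-left}), and that principality is precisely the statement that the computed $\Gamma$ is the $\min$ over all valid ones; but I would still need the fixed-point monotonicity argument to justify that the \code{let rec} case of the algorithm returns the minimum, so the obstacle is the same either way.
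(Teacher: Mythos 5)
Your proposal follows the same route as the paper's own proof, which is exactly: perform subsumption elimination on both derivations, then argue by simultaneous induction on the two now syntax-directed derivations, so that (except at variables) the same rule is matched on both sides at every node. In outline you agree with the paper, and the paper itself gives no more detail than this.

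However, two of your elaborations would fail as stated, and both matter because they sit in the cases you identify as load-bearing. First, the inequality-direction problem you flag for $\envsum$ is real but you never resolve it: in the application case, $\envsum {\min(\Gamma_{1,t},\Gamma_{2,t})} {\min(\Gamma_{1,u},\Gamma_{2,u})}$ can be \emph{strictly} below $\min(\envsum {\Gamma_{1,t}} {\Gamma_{1,u}}, \envsum {\Gamma_{2,t}} {\Gamma_{2,u}})$ --- take a variable dereferenced by $t$ in the first derivation but by $u$ in the second --- so the rule does not directly produce the required environment; the missing step is \myfullref{Lemma}{lem:weakening}, which lifts the smaller derived environment up to exactly $\min(\Gamma_1,\Gamma_2)$ (since $\envsum \Gamma {\Gamma'} = \Gamma'$ whenever $\Gamma \preceq \Gamma'$). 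Second, in the \code{let rec} case, your claim that the mode matrices $m_{i,j}$ are ``forced by the structure of $t_i$'' and hence agree across the two derivations is false: even after subsumption elimination, the variable rule leaves the rest of the environment arbitrary, so a variable $x_j$ absent from $t_i$ may receive mode $\Ignore$ in one derivation and $\Delay$ in the other. Fortunately this claim is unnecessary: applying the induction hypothesis to the premises of the binding judgment directly yields the pointwise-min matrix $\min(m^1_{i,j}, m^2_{i,j})$, which still satisfies the side condition $\preceq \Guard$, and your monotonicity argument for the fixed-point equation goes through with it. For the same reason you should not insist that the least solution of the min-system equal $\min(\Gamma'_{1,i},\Gamma'_{2,i})$ exactly --- only an inequality is available, and an inequality followed by weakening is all you need.
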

We also define minimal \emph{derivations}, which restrict the
non-determinism in the variable and binding rules, and the way
subsumption may be used. Minimal derivations have minimal environments
and a syntax-directed structure. They precisely characterize the
behavior of our algorithm, implemented in the OCaml compiler: given
$t : m$, it constructs a minimal derivation of the form
$\der \Gamma t m$, and returns the environment $\Gamma$, which is
minimal for $t : m$.

\subsection{Examples}
\label{subsec:examples}

Our checker accepts a definition $\letrec {x = t}$ if there exists
a mode judgment $\der \Gamma t \Return$ that assigns a mode to $x$ in
$\Gamma$ that is $\Guard$ or smaller. The definition is rejected if
the mode of $x$ in the minimal judgment is $\Return$ or
$\Dereference$. Let us discuss various examples of definitions that are
accepted or rejected, along with the corresponding minimal judgments.

\paragraph{Separating $\Return$ from $\Guard, \Delay$}

The definitions $\letrec {x = \constr {\kwd{Fix}} x}$ or
$\letrec {f = \lam x {\app f x}}$ are valid: the definitions admit the
mode judgments $\der {\of x \Guard} {\constr {\kwd{Fix}} x} \Return$
and $\der {\of f \Delay} {\lam x {\app f x}} \Return$. On the other
hand, the definition $\letrec {x = x}$ is invalid, as the best
judgment for its body is $\der {\of x \Return} x \Return$, with
$\of x \Return$ (stricter than $\Guard$) in $\Gamma$.

\paragraph{Separating $\Guard$ from $\Delay$}

In the valid example $\letrec {f = \lam x {\app f x}}$, the subterm
$\app f x$ is dereferencing $f$; the mode of $f$ in the outer
environment is still $\Delay$ thanks to the composition
$\mcomp \Delay \Dereference = \Delay$. On the other hand, if we had
moved the application outside the delay,
$\letrec {f = \app {(\lam x f)} u}$, this definition would be
rejected, as we would have $\of f \Dereference$ thanks to the
composition $\mcomp \Dereference \Delay = \Dereference$.

$\Guard$ behaves differently than $\Delay$: our constructors are
strict, so dereferencing inside a guard is also a dereference. For
example, considering a function $g$ defined outside, both
$\letrec {x = \app g {(\constr {\kwd{Fix}} x})}$ and
$\letrec {x = \constr {\kwd{Fix}} {(\app g x)}}$ are rejected, as the
mode of $x$ is $\Dereference$ in the right-hand side of the
definition. In the first case this mode comes from the composition
$\mcomp \Dereference \Guard = \Dereference$, in the second case from
$\mcomp \Guard \Dereference = \Dereference$.

\paragraph{Separating $\Delay$ from $\Ignore$} Notice that if we have
$\der {\of x \Delay} {\plug t x} \Return$ then we have
$\der {\of x \Dereference, \of g \Dereference} {\app g {(\plug t x)}} \Return$
but if we have $\der {\of x \Ignore} {\plug t x} \Return$ then we have
$\der {\of x \Ignore, \of g \Dereference} {\app g {(\plug t x)}} \Return$.
So a declaration of the form $\letrec {x =\app g {(\plug t x})}$ is
rejected if $t$ uses the variable $x$ at mode $\Delay$, but accepted
if $x$ is not used ($\of x \Ignore$).

\paragraph{Separating $\Return$ from $\Dereference$} Similarly, if we
have $\der {\of x \Return} {\plug t x} \Return$ (for example
$\plug t x$ is $x$ or $(\letin y x y)$), then we have
$\der {\of x \Guard} {\constr {\kwd{Fix}} {(\plug t x)}} \Return$, but
if we have $\der {\of x \Dereference} {\plug t x} \Return$ then we have
$\of x \Dereference$ in the environment of
${\constr {\kwd{Fix}} {(\plug t x)}}$. In particular,
$\letrec {x = \constr {\kwd{Fix}} {(\plug t x})}$ is accepted in the
first case and rejected in the second.

\paragraph{Simple $\kwd{let}$ examples} (These examples are easier to
follow by using the simple $\kwd{let}$ rule than the general
$\kwd{let}~\kwd{rec}$ rule.). What is the mode $x$ in
$\letin z {\lam y {\constr {\kwd{Pair}} {(x, y)}}} {\constr {\kwd{Fix}} z}$?
The mode of $x$ in the definition
$\lam y {\constr {\kwd{Pair}} {(x, y)}}$ is $\Delay$, and the mode of
$z$ in the body $\constr {\kwd{Fix}} z$ is $\Guard$. The final context
(at global mode $\Return$) is $\mcomp \Guard {\of x \Delay}$, that is
$\of x \Delay$.

In the case of
$\letin z {\lam y {\constr {\kwd{Pair}} {(x, y)}}} {\app g z}$, the
final context is
$\mcomp \Dereference {\of x \Delay} = \of x \Dereference$. Finally,
the premise $m' = \max{(m, \Guard)}$ of the rule comes into play when
the body delays or ignores the defined variable: in the case of
$\letin z {\app g x} y$, the mode of $z$ in $y$ is $\Ignore$, but
the mode of $x$ in the whole term is not
$\mcomp \Ignore {\of x \Dereference}$, which would be $\of x \Ignore$,
but rather $\mcomp {\max(\Ignore,\Guard)} {\of x \Dereference}$, which
is $\of x \Dereference$.

\paragraph{$\kwd{let}~\kwd{rec}$ examples} The delicate aspect of the
$\derbinding {\fam i {\of {x_i} {\Gamma_i}}} {\fam i {x_i = t_i}}$
judgment is the fixpoint of equations
$ \Gamma'_i = \envsum {\Gamma_i} {\envbigsum {\fam j {\mcomp {m_{i,j}} {\Gamma'_j}}}} $,
which computes a ``transitive closure'' of usage modes of the
mutually-recursively-defined variables. Consider for example the term $t$ defined as
\begin{mathline}
  t
  \quad
  \defeq
  \quad
  \letrecin {
    x' = x
    \andlet
    y = \constr {\kwd{Fix}} {(x')}
    \andlet
    z = \app g y
  } z
\end{mathline}
We index the three definitions by $i \in I$ with
$I \defeq \{x', y, z\}$. The contexts $\fam i {\Gamma_i}$ of the rule
correspond to the dependency of each right-hand-side on
non-mutually-recursive variables. The $\fam i {\Gamma'_i}$ are defined
by a system of recursive equations, depending on each $\Gamma_j$ and
the mode of use of the variable $j$ in the definition of $i$. We have:
\begin{mathline}
  \begin{array}{l@{\defeq}l}
    \Gamma_{x'} & (\of x \Return) \\
    \Gamma_y & \emptyset \\
    \Gamma_z & (\of g \Dereference)
  \end{array}

  \begin{array}{l@{\defeq}l}
    \Gamma'_{x'} & \Gamma_x \\
    \Gamma'_y & \envsum {\Gamma_y} {\mcomp \Guard {\Gamma'_{x'}}}  \\
    \Gamma'_z & \envsum {\Gamma'_z} {\mcomp \Dereference {\Gamma'_y}} \\
  \end{array}
\end{mathline}
The smallest fixpoint solution has $\Gamma'_{x'} = (\of x \Return)$,
$\Gamma'_y = (\of x \Guard)$, and
$\Gamma'_z = (\of g \Dereference, \of x \Dereference)$. In particular,
notice how $x$ is accessed at mode $\Dereference$ by $z$, even though
it does not syntactically appear in its definition. The whole term
$t$, in the ambient mode $\Return$, is in the environment
$\Gamma'_z = (\of g \Dereference, \of x \Dereference)$. If we had used
a simpler $\kwd{rec}$ rule that would return the $\fam i {\Gamma_i}$
instead of the $\fam i {\Gamma'_i}$, immediate usage rather than
transitive usage, $t$ would typed in the environment
$\Gamma_z = \of g \Dereference$, that is with $\of x \Ignore$. This
would be unsound, for example the vicious definition
$\letrec {x = t}$ would be accepted.

\begin{version}{\Long}
\subsection{Discussion}

\paragraph{Declarative vs. algorithmic rules} A presentation of a type
system can be more ``algorithmic'' or more ``declarative'', sitting on
a continuous spectrum. More-declarative systems have convenient typing
rules corresponding to reasoning principles that are sound in the
metatheory, but may be harder to implement in practice. More-algorithmic
systems have more rigid inference rules, that are easier
to implement as a checking or inference system (typically they may be
syntax-directed); some valid reasoning principles may not be available
as rules but only as admissible rules (requiring a global rewrite of
the derivation) or not at all.

In our system, the variable and subsumption rules are typically
declarative: the variable rule has an undetermined $\Gamma$ context, and
the subsumption rule makes the system non-syntax-directed. Without
those two rules, it would not be possible to prove
$\der {x : \Guard, y : \Dereference} y \Return$, but only the
stronger judgment $\der {x : \Ignore, y : \Return} y \Return$.

\sloppy
On the other hand, our binding rule has an algorithmic flavor: it
introduces a family of contexts $\fam i {\Gamma'_i}$ that is uniquely
determined as the solution of a system of recursive equations
$
\fam i {\Gamma'_i =
  \envsum {\Gamma_i} {\envbigsum {\fam j {\mcomp {m_{i,j}} {\Gamma'_j}}}}}
$,
so its application requires computing a fixpoint.  A more declarative
presentation would allow ``guessing'' any family $\fam i {\Gamma_i}$
that satisfies the inequations necessary for soundness:
\begin{mathline}
\infer
  {\fam {i \in I}
    {\der
      {\Gamma_i, \fam {j \in I} {\of {x_j} {m_{i,j}}}}
      {t_i}
      \Return}
   \\
   \fam {i,j} {m_{i,j} \preceq \Guard}
   \\\\
   \fam i {\Gamma_i \succeq
     \envsum {\Gamma_i} {\envbigsum {\fam j {\mcomp {m_{i,j}} {\Gamma_j}}}}}
  }
  {\derbinding
    {\fam {i \in I} {\of {x_i} {\Gamma_i}}}
    {\fam {i \in I} {x_i = t_i}}}
\end{mathline}
\end{version}

\fussy

\paragraph{Backwards type systems}

Typing rules are a specialized declarative language to describe and
justify various computational processes related to a type system
(type checking, type inference, elaboration, etc.). Our mode system
read ``backwards'' is one possible way to describe the static analysis we
are capturing, which could also be described in many other ways: as
pseudocode, as a fixpoint of equations, through a denotational
semantics, etc. In general we believe that reading type systems
backwards can give a nice, compact, declarative presentation of
certain demand analyses, in a language that type designers are already
familiar with.

Our terminology follows the ``backward analysis'' notion described for
logic programming languages by
\citet*{genaim-codish}, i.e.~our algorithm answers the question (posed in that work) ``Given
a program and an assertion at a given program point, what are the
weakest requirements on the inputs to the program which guarantee that
the assertion will hold whenever execution reaches that point?''
For our analysis, the \emph{assertion} is the mode under which an
expression is checked, and the \emph{requirements on the inputs}
correspond to the computed environment $\Gamma$.

Backward analyses for functional languages also appear in work by
Hughes~\citep{backward-analysis}.
A notable difference is that Hughes-style ``demand analyses''\footnote{
For a recent example, see the unpublished
draft \citet*{haskell-demand-analysis}. Thanks are due to Joachim
Breitner for the reference.} are typically presented in a denotational
style, using tools of domain theory.
However, some recent work (such as the cardinality analysis
by \citet{DBLP:journals/jfp/SergeyVJB17}) presents backward analyses
in a syntactic style more similar to that used in the present paper.


\paragraph{Modes as modalities} Untyped or dynamically-typed languages
can be seen as ``uni-typed'', with a ``universal type'' $\Dyn$ of all
values. Language constructions can be presented as section/retraction
pairs from $\Dyn$ to a type that computes, such as $\Dyn \to \Dyn$ for
functions or $\Dyn \times \Dyn \times \dots \times \Dyn$ for tuples;
for example, the untyped term $\app {(\lam x t)} u$ can be
explicitated into $\mathsf{app}~(\mathsf{lam}(\lam x t))~u$, for
combinators $\mathsf{app} : \Dyn \to (\Dyn \to\Dyn )$ and
$\mathsf{lam} : (\Dyn \to \Dyn) \to \Dyn$ such that
$\mathsf{app} \circ \mathsf{lam}$ is the identity on
$\Dyn \to \Dyn$ --- but $\mathsf{lam} \circ \mathsf{app}$ gets stuck on
non-functions.

Rather than \emph{types}, it is more precise to see our modes as
\emph{modalities} on this universal type $\Dyn$. The hypothesis $x : m$ in
a context would be a modal hypothesis $x :^m \Dyn$, and the section
combinators are given modal types. Within the modal framework of
\citet*{unified-view-of-modalities} for example, writing
$\modfun m A B$ for the modal function type, some of our typing rules
could be modelled with
$\mathsf{lam}: \modfun \Delay {\modfun \Dereference \Dyn \Dyn} \Dyn$,
$\mathsf{app}: \modfun \Dereference \Dyn {\modfun \Dereference \Dyn \Dyn}$, and
for datatypes something like
$\mathsf{pack}^K_d : \modfun \Guard \Dyn^d \Dyn$, and
$\mathsf{unpack}^K_d : \modfun \Dereference \Dyn {(\Dyn + \Dyn^d)}$, where $K$
is a constructor of arity $d$, the $\Dyn +$ return value of unpacking
indicates an incompatible constructor, and $\Dyn^d$ is
$\Dyn \times \Dyn \times \dots$, $d$ times.

This view naturally extends to supporting finer-grained analyses and
abstraction of the sort found in other work (e.g.~the system
introduced by~\citet{dreyer-2004}).  In our system every function
argument has mode $\Dereference$; a refinement that
allowed types and modes to interact could support a range of modes for
functions that used their arguments in different ways.
However, this additional expressivity would require substantial
changes to OCaml's type system, and a proper treatment of abstraction
would require some form of mode polymorphism.  (For example, in the
appliction function \code{let h g x = g x}, mode polymorphism is
required to support applying the \code{h} to all possible
functions \code{g}, some of which dereference their argument, and some
of which do not.)

Our aim of replacing OCaml's existing syntactic check with a more
principled version way did not justify this substantial additional
complexity.
More generally, our view is that abstraction over modes is
more suited to module systems (for which the system described by
\citet{dreyer-2004} was developed), where types are already
explicit, than to the term languages that our system is designed for.
This view is supported by the fact that, as far as we know, although
existing languages support a variety of checks on well-formedness of
recursive definitions, there has been (as far as we know) no attempt
in any of these to incorporate a system with support for mode
abstraction.

\section{Meta-Theory: Soundness}
\label{section:meta-theory}

\subsection{Operational Semantics}
\label{subsec:operational-semantics}

\begin{mathparfig}[t!]{fig:semantics}{Operational semantics}
  \begin{array}{r@{~}r@{~}l}
    \set{Values} \ni v & \bnfeq
    & \lam x t
      \bnfor \constr K {\fam i {w_i}}
      \bnfor \plug L v
    \\
    \set{WeakValues} \ni w & \bnfeq
    & x,y,z \bnfor v \bnfor \plug L w \\
    \\
    \set{ValueBindings} \ni B & \bnfeq
    & \fam i {x_i = v_i} \\
    \set{BindingCtx} \ni L
    & \bnfeq
    & \square \bnfor \letrecin B L \\
  \end{array}
  \begin{array}{r@{~}r@{~}l}
    \set{EvalCtx} \ni E
    & \bnfeq
    & \square \bnfor \plug E F \\
    \set{EvalFrame} \ni F
    & \bnfeq & \app \hole t
    \;\bnfor\; \app t \hole \\
    & \bnfor & \constr K {(\fam i {t_i}, \hole, \fam j {t_j})} \\
    & \bnfor & \match \hole h \\
    & \bnfor & \letrecin {b, x = \hole, b'} u \\
    & \bnfor & \letrecin B \hole \\
  \end{array}
  \\
  \infer{ }
  {\rewhead {\app {\plug L {\lam x t}} v} {\plug L {\subst t {\by v x}}}}

  \infer
  {\forall {(\clause {\constr {K'} {\fam j {x'_j}}} u')} \in h,\; K \neq K'}
  {\rewhead
    {\match {\plug L {\constr K {\fam i {w_i}}}}
      {(h \mid \clause {\constr K {\fam i {x_i}}} u \mid h')}}
    {\plug L {\substfam u i {\by {w_i} {x_i}}}}}
  \\
  \infer
  {\rewhead t {t'}}
  {\rew {\plug E t} {\plug E {t'}}}

  \infer
  {\inctx x v E}
  {\rew {\plug E x} {\plug E v}}

  \infer
  {\inframe x v F \;\vee\; \inctx x v E}
  {\inctx x v {\plug E F}}

  \infer{\inbinding x v B}
  {\inframe x v {\letrecin B \hole}}

  \infer{\inbinding x v (b \cup b')}
  {\inframe x v {\letrecin {b, y = \hole, b'} u}}
\end{mathparfig}

\myref{Figure}{fig:semantics} presents our operational semantics,
largely reused from \citet*{dynamic-semantics-2008} with extensions
(support for algebraic datatypes) and changes (use of \emph{reduction
  at a distance}). Unless explicitly noted, the content and ideas in
this \myref{Subsec}{subsec:operational-semantics} come from their
work.

\paragraph{Weak values} As we have seen, constructors in recursive
definitions can be used to construct cyclic values. For example, the
definition
$\letrec {x = \constr {\kwd{Cons}} {(\constr {\kwd{One}} {(\emptyset)}, x)}}$
is normal for this reduction semantics. The occurrence of the variable
$x$ inside the $\kwd{Cons}$ cell corresponds to a back-reference, the
cell address in a cyclic in-memory representation.

This key property is achieved by defining a class of \emph{weak
  values}, noted $w$, to be either (strict) values or variables. Weak
values occur in the definition of the semantics wherever a cyclic
reference can be passed without having to dereference.

Several previous works (see \myfullref{Section}{subsec:related-work})
defined semantics where $\beta$-redexes have the form
$\app {(\lam x t)} w$, to allow yet-unevaluated recursive definitions
to be passed as function arguments. OCaml does not allow this
(a function call requires a fully-evaluated argument),
\Xgabriel{Is this actually true? Thinking more about this, I suspect
  that OCaml would be fine if passed partially-evaluated values
  (I thought of issues with the GC exploring them, but in fact normal
  computation of recursive values already exposes
  partially-uninitialized values to the GC.) I don't suggest relaxing
  the semantics here, but this sentence should probably be removed.}
so our redexes are the traditional $\app {(\lam x t)} v$. This is
a difference from \citet*{dynamic-semantics-2008}. On the other hand,
we do allow cyclic datatype values by only requiring weak values under
data constructors: the corresponding value form is
$\constr K {\fam i {w_i}}$.

\paragraph{Bindings in evaluation contexts} An \emph{evaluation
  context} $E$ is a stack of \emph{evaluation frames} $F$ under
which evaluation may occur. Our semantics is under-constrained
(for example, $\app t u$ may perform reductions on either
$t$ or $u$), as OCaml has unspecified evaluation order for
applications and constructors, but making it deterministic would not
change much.

\sloppy
One common aspect of most operational semantics for \code{let rec},
ours included, is that $\letrecin B \hole$ can be part of evaluation
contexts, where $B$ represents a recursive ``value binding'', an
island of recursive definitions that have all been reduced to
values. This is different from traditional source-level operational
semantics of $\letin x v u$, which is reduced to $\subst u {\by v x}$
before going further. In $\kwd{letrec}$ blocks this substitution reduction
is not valid, since the value $v$ may refer to the name $x$, and so
instead ``value bindings'' remain in the context, in the style of
explicit substitution calculi. We call these context fragments
``binding contexts'' $L$.

\fussy

\paragraph{Head reduction} Head redexes, the sources of the
head-reduction relation $\rewhead t {t'}$, come from applying
a $\lambda$-abstraction or from pattern-matching on a head
constructor. Following ML semantics, pattern-matching is
ordered: only the first matching clause is taken.

One mildly original feature of our head reduction is the use of
\emph{reduction at a distance}, where binding contexts $L$ are allowed
to be presented in the middle of redexes, and lifted out of the
reduced term. This presentation is common in explicit-substitution
calculi\footnote{See for example \citet*{accattoli:hal-00528228},
  which links to earlier references on the technique.}, as it gives
the minimal amount of lifting of explicit substitutions required to
avoid blocking reduction. In the calculus of
\citet*{dynamic-semantics-2008}, lifting was permitted in arbitrary
positions by the Merge rule. For example, the reduction sequence
$\rewstar {\app {(\letrecin B {\lam x t})} {v}} {\letrecin B {\subst t {\by u x}}}$
is admissible in both systems, but the ``useless'' reduction
$\rewstar {\app {(\letrecin B x)} {v}} {\letrecin B {\app x v}}$ is
not present in our system. Reduction at a distance tends to make
definitions crisper and simplify proofs.\footnote{For an example of
  beneficial use of reduction-at-distance in previous work from the
  rewriting community, see the at-a-distance presentation of the
  $\pi$-calculus in \citet*{pi-calculus-at-a-distance}.}

\paragraph{Reduction} Reduction $\rew t {t'}$ may happen under any
evaluation context. The first reduction rule is 
standard: any redex $\plug H v$ can be reduced under an evaluation
context $E$.

The second rule reduces a variable $x$ in an evaluation context $E$
by binding lookup: it is replaced by the value of the recursive
binding $B$ in the context $E$ which defines it. This uses the
auxiliary definition $\inctx x v E$ to perform this lookup.

The lookup rule has worrying consequences for our rewriting relation:
it makes it nondeterministic and non-terminating. Indeed, consider a
weak value of the form $\constr K {(x)}$ used, for example, in a
pattern-matching $\match {\constr K {(x)}} h$. It is possible to
reduce the pattern-matching immediately, or to first lookup the value
of $x$ and then reduce. Furthermore, it could be the case that $x$ is
precisely defined by a cyclic binding $x = \constr K {(x)}$. Then the
lookup rule would reduce to $\match {\constr K {(\constr K (x))}} h$,
and we could keep looking
indefinitely. \citet*{dynamic-semantics-2008} discuss this in detail
and prove that the reduction is in fact confluent modulo
unfolding. (Allowing these irritating but innocuous behaviors is a
large part of what makes their semantics simpler than previous
presentations.)

\paragraph{Example}
\newcommand{\cS}[1]{\constr {\kwd{S}} #1}

Consider the following program:
\begin{mathpar}
    \begin{array}{l}
      \letrecin {\infty = {(\letrecin {x = \cS x} x})} {}
      \\
      \quad \match \infty
               {(\kwd{Z} \to \kwd{None} \mid {\cS y} \to \constr {\kwd{Some}} y)}
    \end{array}
\end{mathpar}
The first binding $\letrecin {x = \cS x} x$ is not a value yet, only
a weak value. The first reduction this program can take is to lookup
the right-hand-side occurence of $x$:
\begin{mathpar}
  \begin{array}{l}
    \begin{array}{l}
       \rew{}
        {\letrecin {\infty = {(\letrecin {x = \cS x} {\cS x})}} {}}
        \\
        \qquad \match \infty
               {(\kwd{Z} \to \kwd{None} \mid {\cS y} \to \constr {\kwd{Some}} y)}
      \end{array}
  \end{array}
\end{mathpar}
After this reduction $\infty$, is bound to a value, so it can in turn
be looked up in $\kwd{match}~\infty$:
\begin{mathpar}
  \begin{array}{l}
    \begin{array}{l}
      \rew{}
      {\letrecin {\infty = {(\letrecin {x = \cS x} {\cS x})}} {}}
      \\
      \qquad \match {(\letrecin {x = \cS x} {\cS x})} {}
      \\
      \qquad {(\kwd{Z} \to \kwd{None} \mid {\cS y} \to \constr {\kwd{Some}} y)}
    \end{array}
  \end{array}
\end{mathpar}
At this point we have a $\kwd{match}$ redex of the form $\kwd{match}~\plug L {\cS x}$,
which gets reduced by lifting the binding context $L$:
\begin{mathpar}
  \begin{array}{l}
    \begin{array}{l}
      \rew{}
      {\letrecin {\infty = {(\letrecin {x = \cS x} {\cS x})}} {}}
      \\
      \qquad
      \letrecin {x = \cS x} {\constr {\kwd{Some}} x}
    \end{array}
  \end{array}
\end{mathpar}

\subsection{Failures}
\label{subsec:failures}

In this section, we are interested in formally defining dynamic
failures. When can we say that a term is ``wrong''? --- in particular,
when is a valid implementation of the operational semantics allowed to
crash? This aspect is not discussed in detail by
\citet*{dynamic-semantics-2008}, so we had to make our own
definitions; we found it surprisingly subtle.

The first obvious sort of failure is a type mismatch between a value
constructor and a value destructor: application of
a non-function, pattern-matching on a function instead of a head
constructor, or not having a given head constructor covered by the
match clauses. These failures would be ruled out by a simple type
system and exhaustivity check.

The more challenging task is defining failures that occur when trying to
access a recursively-defined variable too early. The lookup
reduction rule for a term $\plug E x$ looks for the value of $x$ in
a binding of the context $E$. This value may not exist (yet), and that
may or may not represent a runtime failure.

We assume that bound names are all distinct, so there
may not be several $v$ values. The only binders that we reduce under
are \code{let rec}, so $x$ must come from one; however, it is
possible that $x$ is part of a \code{let rec} block currently being
evaluated, with an evaluation context of the form
$E[\letrecin {(x = t, y = E')} u]$ for example, and that $x$'s binding has not yet
been reduced to a value.

However, in the presence of data constructors that permit building cyclic
values not all such cases are failures. For example the term
$\letrecin {x = \constr {\kwd{Pair}} {(x, t)}} x$ can be decomposed
into $\plug E x$ to isolate the occurrence of $x$ as the first member
of the pair. This occurrence of $x$ is in reducible position, but
there is no $v$ such that $\inctx x v E$, unless $t$ is already a weak
value.

To characterize failures during recursive evaluation, we propose to
restrict ourselves to \emph{forcing contexts}, denoted $\forcing E$,
that must access or return the value of their hole. A variable in a forcing
context that cannot be looked up in the context is a dynamic failure:
we are forcing the value of a variable that has not yet been
evaluated. If a term contains such a variable in lookup position, we
call it a \emph{vicious} term.
\myref{Figure}{fig:failures} gives a precise definition of these failure terms.

\begin{mathparfig}[h!]{fig:failures}{Failure terms}
  \begin{array}{r@{~}r@{~}l}
    \set{HeadFrame} \ni H
    & \bnfeq
    & \app \hole v
      \;\bnfor\; \match \hole h \\
    \set{ForcingFrame} \ni {\forcing F}
    & \bnfeq & \app \hole v
    \;\bnfor\; \app v \hole
    \;\bnfor\; \match \hole h \\
    \set{ForcingCtx} \ni {\forcing E}
    & \bnfeq & L
    \;\bnfor\; \plug E {\plug {\forcing F} L} \\
  \end{array}

  \set{Mismatch} \defeq \{ \plug E {\plug H v} \mid \nrewhead {\plug H v} \}

  \set{Vicious} \defeq \{ \plug {\forcing E} x
                  \mid \nexists v, \inctx x v {\forcing E} \}
\end{mathparfig}

Mismatches are characterized by \emph{head frames}, context fragments
that would form a $\beta$-redex if filled with a value of the
correct type. A term of the form $\plug H v$ that is stuck for
head-reduction is a constructor-destructor mismatch.

The definition of forcing contexts $\forcing E$ takes into account the
fact that recursive value bindings remain, floating around, in the
evaluation context. A forcing frame $\forcing F$ is a context fragment
that forces evaluation of its variable; it would be tempting to say
that a forcing context is necessarily of the form
$\hole$ or $\plug E {\forcing F}$, but for example
$\plug {\forcing F} {\letrecin B \hole}$ must also be considered
a forcing context.

Note that, due to the flexibility we gave to the evaluation order,
mismatches and vicious terms need not be stuck: they may have other
reducible positions in their evaluation context. In fact, a vicious
term failing on a variable $x$ may reduce to a non-vicious term if
the binding of $x$ is reduced to a value.

\subsection{Soundness}
\label{subsec:soundness}

The proofs of these results are in \myappendixref{ann:proofs}.

\begin{lemma}[Forcing modes]
  \label{lem:forcing-modes}
  If\;\;$\der {\Gamma, \of x {m_x}} {\plug {\forcing E} x} m$ with $m \succeq \Return$,
  then also $m_x \succeq \Return$.
\end{lemma}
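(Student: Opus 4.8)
The plan is to prove this by structural induction on the forcing context $\forcing E$, which amounts to inducting on the typing derivation with the outermost frame of $\forcing E$ selecting the rule applied last. The naive invariant "$m_x \succeq \Return$" turns out not to be stable under the induction, so I would instead prove a \emph{strengthened} claim: if $\der {\Gamma, \of x {m_x}} {\plug {\forcing E} x} m$ with the weaker hypothesis $m \succeq \Guard$, then $m_x \succeq m$ when $\forcing E$ is a bare value-binding context $L$ (first production), and $m_x = \Dereference$ whenever $\forcing E$ has the shape $\plug E {\plug {\forcing F} L}$, i.e. contains a forcing frame (second production). The lemma is then the instance $m \succeq \Return$: in the first case $m_x \succeq m \succeq \Return$, and in the second $m_x = \Dereference \succeq \Return$.

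I would first record the mode-composition facts that drive everything. First, every forcing frame $\forcing F$ uses its hole at $\Dereference$: inverting the application or match rule, the hole of $\app \hole v$, $\app v \hole$ or $\match \hole h$ checked at $m$ forces its contents at $\mcomp m \Dereference$. Second, the value-binding contexts $L$ (stacks of $\letrecin B \hole$ frames) are $\Return$-transparent: the body of a $\letrecin B \hole$ frame is checked at the frame's own mode and its environment is included unscaled, so $\plug L x$ passes the ambient mode down to $x$ unchanged. Third, the decisive arithmetic is that $\mcomp m \Dereference = \Dereference$ for every $m \succeq \Guard$; combined with the observations that each evaluation frame uses its hole at a mode in $S \defeq \{\Guard, \Return, \Dereference\}$ (application and match-scrutinee at $\Dereference$, constructors at $\Guard$, let-rec bodies at $\Return$), that $S$ is closed under composition, and that $m \succeq \Return$ implies $m \in S$. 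Crucially no evaluation frame is a $\Delay$- or $\Ignore$-context — because evaluation never descends under a $\lambda$ — so the ambient mode cannot collapse to $\Delay$ or $\Ignore$ as we walk down $E$.

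With these in hand the induction runs smoothly. The base case $\forcing E = \hole$ is the variable rule (modulo subsumption, which only raises $x$'s environment mode), giving $m_x \succeq m$. For each outer frame I invert the matching rule; the relevant sub-context is again a forcing context, its hole is reached at a mode still in $S$ by the third fact, and the induction hypothesis applies. Since every environment merge ($\envsum$, $\envbigsum$) takes the pointwise maximum, $x$'s mode in the conclusion dominates its mode in the premise that contains the hole. When the outer frame is the forcing frame itself, the first fact sends the sub-term to $\mcomp m \Dereference = \Dereference$, the second carries $\Dereference$ through $L$ down to $x$, and we land exactly in the $m_x = \Dereference$ case.

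I expect the main obstacle to be the let-rec definition frame $\letrecin {b, x_k = \hole, b'} u$ occurring inside $E$. There the binding judgment checks the definition in isolation at mode $\Return$, and the resulting environment is afterwards scaled by $\mcomp {m'_k} {\cdot}$ with $m'_k = \max(m_k, \Guard) \succeq \Guard$. The weaker invariant "$m_x \succeq \Return$" is \emph{not} preserved by this scaling, since $\mcomp \Guard \Return = \Guard \prec \Return$; this is precisely why the strengthened conclusion $m_x = \Dereference$ is essential, because $\mcomp {m'_k} \Dereference = \Dereference$ for any $m'_k \succeq \Guard$. As any such definition frame in the second production sits above the forcing frame $\forcing F$, the recursive call always lands in the $\Dereference$ case and the invariant survives the scaling. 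The residual work — threading subsumption (whose premise carries a strictly larger, hence still $\succeq \Guard$, mode) and verifying that the named sub-contexts are genuinely forcing contexts with $x$ uncaptured (Barendregt) — is routine bookkeeping.
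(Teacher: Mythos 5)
Your proof is correct, and at its core it is the same argument as the paper's: a forcing frame imposes $\Dereference$ on its hole, the value-binding contexts $L$ are mode-transparent, a small induction shows that evaluation contexts never let the ambient mode drop below $\Guard$ (no evaluation under $\lambda$), and the conclusion follows from $\mcomp m \Dereference = \Dereference$ for all $m \succeq \Guard$. The difference is one of rigor, and it is to your credit. The paper's proof simply case-splits on the two productions of $\forcing E$, writes $m_x = \mcomp {m_E} {\mcomp \Dereference \Return}$ for the second one, and concludes by monotonicity; identifying the environment entry $m_x$ with a composition of frame-imposed modes is exactly the delicate step. As you point out, when the path from the root down to $x$ crosses a definition frame $\letrecin {b, y = \hole, b'} u$, the definition is re-checked at mode $\Return$ by the binding judgment and its environment is then rescaled by $\mcomp {m'_i} \cdot$ with $m'_i \succeq \Guard$, an operation that does \emph{not} preserve lower bounds such as $\Return$ (since $\mcomp \Guard \Return = \Guard$ — the same observation that makes the naive invariant $m_x \succeq \Return$ fail). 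The paper's computation is nonetheless sound, because the entry actually being propagated upward is $\Dereference$, which is a fixed point of every such rescaling; but this is left entirely implicit. Your strengthened induction invariant — hypothesis weakened to $m \succeq \Guard$, conclusion strengthened to $m_x = \Dereference$ once a forcing frame has been crossed — is precisely the bookkeeping that turns the paper's one-line mode arithmetic into an actual induction, at the cost of a longer write-up. So you have not found a different proof; you have found, and filled, the step the written proof elides.
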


\begin{theorem}[Vicious]\label{lem:vicious}
  $\der \emptyset t \Return$ never holds for $t \in \set{Vicious}$.
\end{theorem}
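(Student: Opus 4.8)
The plan is to argue by contradiction: suppose $\der{\emptyset}{t}{\Return}$ holds for some $t = \plug{\forcing{E}}{x} \in \set{Vicious}$, and track down the binder of the distinguished occurrence of $x$. Since $t$ is typed in the empty environment it is closed, so this occurrence must be captured by a binder somewhere along the path from the root of $\plug{\forcing{E}}{x}$ down to its hole. Among the frames that can occur in a forcing context, only \code{let rec} frames introduce binders whose scope reaches the hole: application, constructor and match-scrutinee frames do not bind, and an evaluation context never descends under a $\lambda$ nor into a match branch. Hence $x$ is captured by a unique block $\letrecin{b^*}{u^*}$ occurring inside $\forcing{E}$, and the hole lies either in the body $u^*$ or in one of the definitions of $b^*$.

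First I would eliminate the body case using viciousness. If the hole is in $u^*$, then the frame this block contributes to $\forcing{E}$ must be $\letrecin{B}{\hole}$, so $b^* = B$ is a \emph{value} binding; but then the binding of $x$ in $B$ is some value $v$, which gives $\inctx{x}{v}{\forcing{E}}$ and contradicts $t \in \set{Vicious}$. Therefore the occurrence of $x$ lies inside a definition $t_{i_0}$ of $b^*$: the block contributes a frame $\letrecin{b, y = \hole, b'}{u}$, and the remainder of the forcing context sits within $y$'s definition.

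The structural heart of the argument --- and the step I expect to be the main obstacle --- is to show that this definition is itself a forcing context plugged with $x$, i.e. $t_{i_0} = \plug{\forcing{E'}}{x}$ for some $\forcing{E'} \in \set{ForcingCtx}$. This follows by inspecting the grammar $\forcing{E} \bnfeq L \bnfor \plug{E}{\plug{\forcing{F}}{L}}$ and checking that the sub-context lying below the binding \code{let rec} is again of this shape: if the binder lies in the outer part $E$, what remains below it still has the form $\plug{E''}{\plug{\forcing{F}}{L}}$; if it lies in the inner stack $L$ of \code{let rec} frames, what remains is a smaller $L$ --- in both cases a forcing context. The bookkeeping concentrates here, in tracking how the decomposition of $\forcing{E}$ interacts with the two \code{let rec} frame shapes and with the body/definition split above.

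Finally I would read off the contradiction from the binding judgment. The derivation for $\letrecin{b^*}{u^*}$ must go through the \code{let rec} term rule (the only rule, besides subsumption, matching a \code{let rec} term), which invokes $\derbinding{\fam{i}{\of{x_i}{\Gamma_i}}}{b^*}$ on $b^*$ independently of the ambient mode. Writing $x = x_{j_0}$, the first premise of the binding rule supplies a derivation $\der{\Gamma', \of{x}{m_{i_0,j_0}}}{t_{i_0}}{\Return}$, and its second premise forces $m_{i_0,j_0} \preceq \Guard$. Applying \myref{Lemma}{lem:forcing-modes} to $\der{\Gamma', \of{x}{m_{i_0,j_0}}}{\plug{\forcing{E'}}{x}}{\Return}$ (with $\Return \succeq \Return$) yields $m_{i_0,j_0} \succeq \Return$; since $\Guard \prec \Return$ this contradicts $m_{i_0,j_0} \preceq \Guard$. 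The degenerate case where $x$ is not bound at all (e.g. $t = x$) is already excluded by closedness, as no derivation $\der{\emptyset}{x}{\Return}$ exists: a bare occurrence forces $x$ into the inferred environment.
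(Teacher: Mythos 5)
Your proposal is correct and follows essentially the same route as the paper's proof: both locate the in-progress \code{let rec} block that binds $x$, extract from the binding judgment the premise typing that definition at mode $\Return$ with $x$ at some mode $\preceq \Guard$, and derive the contradiction via \myref{Lemma}{lem:forcing-modes}. The differences are purely presentational --- you argue by contradiction rather than contrapositively, and you spell out the steps the paper compresses (ruling out the body case via viciousness, and checking that the sub-context under the binding frame is again a forcing context, which the paper dispatches with ``still being evaluated'' and ``in consequence $E$ is not forcing either'').
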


\begin{theorem}[Subject reduction]\label{thm:subject-reduction}
  If\;\;$\der \Gamma t m$ and $\rew t {t'}$ then $\der \Gamma {t'} m$.
\end{theorem}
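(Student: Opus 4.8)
The plan is to prove the statement by a case analysis on the two rules that generate $\rew{}{}$ in \myref{Figure}{fig:semantics} -- head reduction under an evaluation context, and variable lookup -- after establishing three structural facts about the judgment. The first is \emph{weakening}: if $\der\Gamma t m$ and $\Gamma'$ is pointwise more demanding than $\Gamma$ on the same domain, then $\der{\Gamma'} t m$. The only rule that does not force the environment exactly is the variable rule, but there a leaf $x$ used at mode $\mu$ may equally be recorded at any $\mu'\succeq\mu$ by instantiating the variable rule at $\mu'$ and then subsuming down to $\mu$; since every environment operator ($\envsum{}{}$, $\envbigsum{}$ and the pointwise $\mcomp m\cdot$) is monotone, weakening then propagates through all rules. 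The second is a \emph{composition lemma}: $\der\Gamma t m$ implies $\der{\mcomp{m'}\Gamma} t{\mcomp{m'}m}$, by induction on the derivation, using associativity of mode composition and its distribution over the environment merges. The third is a \emph{substitution lemma} relating the derivations of $\subst s{\by w x}$, of $s$, and of a (weak) value $w$: if $\der{\Gamma,\of x{m_x}} s{m_s}$ and $w$ is derivable at $\Return$ with environment $\Delta$, then $\subst s{\by w x}$ is derivable at $m_s$ with an environment bounded by $\envsum\Gamma{\mcomp{m_x}\Delta}$; the proof is an induction on $s$ in which each occurrence of $x$, used at some leaf mode, is replaced by $w$ re-moded through the composition lemma.

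For the head-reduction rule, $\rew{\plug E{s}}{\plug E{t'}}$ with $\rewhead s{t'}$, I would first observe that the inference rules are syntax-directed, so any derivation of $\der\Gamma{\plug E s} m$ exposes a sub-derivation $\der{\Gamma_s} s{m_s}$ at the hole, and substituting for it any derivation of the \emph{same} judgment $\der{\Gamma_s}{t'}{m_s}$ reconstructs a derivation of $\der\Gamma{\plug E{t'}} m$. It therefore suffices to show that head reduction preserves typing at fixed environment and mode. For a $\beta$-redex $\app{(\lam x s')} v$ at mode $m_s\neq\Ignore$ (the mode $\Ignore$ is immediate), inversion of the application and abstraction rules gives $s'$ at mode $\mcomp{m_s}\Dereference=\Dereference$ with $x$ at some $m_x$, and $v$ at $\Dereference$; subsuming $v$ down to $\Return$ and applying the substitution lemma yields $\subst{s'}{\by v x}$ at $\Dereference$ with environment below $\Gamma_s$, after which subsumption on the mode and weakening on the environment recover $\der{\Gamma_s}{\subst{s'}{\by v x}}{m_s}$. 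The pattern-matching redex is handled in the same way, substituting the weak-value fields $\fam i{w_i}$ of the scrutinee (each available at $\Dereference$) for the pattern variables of the selected clause body.

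The variable-lookup rule $\rew{\plug E x}{\plug E v}$, with $\inctx x v E$, is the genuinely delicate case, and I expect it to be the main obstacle. Here the hole changes from the variable $x$ to the value $v$, which has different free variables, so the ``swap the sub-derivation'' argument does not apply directly: the free variables that $v$ drags into the hole must instead be shown to be already accounted for by the let-rec frame of $E$ that binds $x=v$. I would proceed by induction on $E$ to reach that frame and reason at the corresponding node. Writing $x=x_k$, $v=v_k$, the binding judgment provides $\der{\Gamma_k,\fam j{\of{x_j}{m_{k,j}}}}{v_k}\Return$ with every $m_{k,j}\preceq\Guard$, and the node already contributes $\mcomp{m'_k}{\Gamma'_k}$ to $\Gamma$, where $m'_k=\max(m_k,\Guard)$ and $\Gamma'_k$ is the least solution of $\Gamma'_k=\envsum{\Gamma_k}{\envbigsum{\fam j{\mcomp{m_{k,j}}{\Gamma'_j}}}}$. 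The single hole occurrence of $x_k$ sits at some mode $\mu\preceq m_k\preceq m'_k$; by the composition lemma, plugging $v_k$ in adds $\mcomp\mu{\Gamma_k}$ on the ambient variables and raises each mutually-bound $x_j$ by $\mcomp\mu{m_{k,j}}$. Because $\mcomp\mu{m_{k,j}}\preceq\mcomp\mu\Guard\preceq\mu$ and composition is monotone, each new contribution is dominated by a summand already present in $\mcomp{m'_k}{\Gamma'_k}$ -- the fixed-point equation for $\Gamma'_k$ being exactly what absorbs the raised recursive occurrences. The reduced term thus types at an environment below $\Gamma$, and weakening closes the case.

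The head-reduction cases I expect to be routine once the substitution lemma is in place; the real work is in the lookup case, and specifically in checking that the monotone fixed point $\Gamma'_i$ -- together with the $\max(\cdot,\Guard)$ bump that models call-by-value discarding -- is \emph{precisely} strong enough to dominate the environment obtained by unfolding one cyclic self-reference of $v$. Getting the inequalities to line up in the mutual-recursion case, rather than the single-binding case where $\Gamma'_i=\Gamma_i$ and everything simplifies, is where I would expect to spend most of the effort.
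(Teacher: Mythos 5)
Your overall architecture is the same as the paper's: weakening, a composition lemma (the paper's Localization theorem), a substitution lemma, derivation surgery at the hole for reduction under context, and, for the lookup case, absorption of the unfolded occurrences by the fixed-point equation defining $\Gamma'_k$. Your treatment of the lookup case --- the occurrence mode $\mu$ bounded by the collected mode $m_k$, composition of the binding derivation under $\mu$, and the observation that $\mcomp {m_{k,j}} {\Gamma'_j} \preceq \Gamma'_k$ makes $\mcomp {m'_k} {\Gamma'_k}$ dominate every new contribution --- is essentially the paper's argument, and you correctly identified it as the crux.

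There is, however, a genuine gap in your head-reduction case. The claim that $\mcomp {m_s} \Dereference = \Dereference$ for every $m_s \neq \Ignore$ is false: $\mcomp \Delay \Dereference = \Delay$. So when the redex is typed at mode $\Delay$ (which the theorem must handle, since $m$ is universally quantified --- take $E = \square$ and $t = \app {(\lam x x)} v$ typed at $\Delay$), inversion only gives the argument $v$ at mode $\Delay$, and your plan to ``subsume $v$ down to $\Return$'' is impossible: subsumption in this system moves from more demanding to less demanding modes ($m \succ m'$), and $\Delay \prec \Return$, so $v : \Return$ is unreachable from $v : \Delay$. Nor can you re-derive $v$ at $\Return$ in the same environment in general: for $v = \constr K {\fam i {w_i}}$, typing at $\Return$ demands the fields at $\Guard$, strictly more than the $\Delay$ that inversion provides. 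Hence your substitution lemma, whose hypothesis requires the substituted value at $\Return$, simply does not apply in this case; the same problem recurs in your pattern-matching redex, where the fields $w_i$ are only available at $\mcomp {\mcomp {m_s} \Dereference} \Guard$, which is $\Delay$ when $m_s = \Delay$. The paper closes exactly this hole with an explicit case distinction supported by three inversion lemmas: if $\mcomp m \Dereference = \Dereference$, then the $\Dereference$-inversion lemma shows either $m_x = \Dereference$ (so substitution applies) or $x$ does not occur in the body; otherwise $m \in \{\Ignore, \Delay\}$, and the $\Ignore$/$\Delay$ inversion lemmas show directly that the contractum is typeable at that weak mode in a suitable environment, with no substitution lemma needed at all. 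Your proof needs the same case analysis (or a strengthening of your substitution lemma covering weak modes); as written, the $\Delay$ case fails.
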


\begin{corollary}\label{cor:typed-programs-cannot-go-vicious}
  $\Return$-typed programs cannot go vicious.
\end{corollary}

\begin{version}{\False} 
\subsection{Completeness}

From the notes:
\begin{verbatim}
We could show that, for each potential relaxation of one of the typing
rules, we can build an example of invalid recursion.

If we did follow this approach exhaustively for all rules (some of
them only in appendix, of course), then the set of counter-examples
would form a formal proof of completeness of our rules. This is nice,
but not a hard goal --- the main point is to use the counter-examples
for reader intuition.
\end{verbatim}

One thing to be careful about is the amount of flexibility in the
evaluation order. Our check is agnostic to evaluation order, so it
rejects any definition that could fail at runtime in \emph{some}
evaluation order. If the reference semantics is ordered, there is
a slight mismatch here. We could either say that our check is complete
for any reordering of the definitions, or extend the
reference semantics to make evaluation non-deterministic
(stepping away from the reference citation a bit). \Xgabriel{I would
  prefer the second option, I think it gives the cleanest presentation
  overall, but we must be careful to explain the change.}
\end{version}

\section{Global Store Semantics}
\label{section:global-store}

In Section~\ref{subsec:operational-semantics}, we developed an
operational semantics for \texttt{letrec} using explicit substitutions
as ``local stores'' for recursive values. This technique was first
studied in more theoretical research communities (pure lambda-calculus
and rewriting), and imported in more applied programming-language
works in the 1990s~\citep*{felleisen-local-store} and in particular
local-store presentations of call-by-need and
\texttt{letrec}~\citep*{ariola-felleisen}.

Before local-store semantics were proposed, recursive values (and lazy
thunks) were modelled using a ``global store'' semantics, more closely
modelling the dummy-initialization-then-backpatching approach used in
real-world implementations.

Explicit-substitution, local-store semantics enjoy at least the
following advantages over the global-store semantics of
\texttt{letrec}.
\begin{itemize}
\item Our explicit-substitution semantics is defined directly at the
  level of our term syntax, without going through additional features
  (global store, initialization and setting of memory locations) that
  are not part of the source language. Source-level semantics are
  typically more high-level and more declarative than lower-level
  semantics; they allow to reason on the source directly, without
  going through an encoding layer.

  We can compare \texttt{letrec} to the problem of defining ``tail
  calls''. One can teach tail-calls and reason about them through
  a compilation to a machine with an explicit call stack. But there is
  a source-level explanation of tail-calls, by reasoning on the size
  (and the growth) of the evaluation context; this explanation is
  simpler and makes tail-calls easier to reason about.

\item Our local-store semantics allows more local reasoning: reducing
  a subterm in evaluation position only affects this subterm
  (which may contain explicit substitutions), instead of also
  affecting a global store. This makes it easier to define
  program-equivalence relations that are congruence (are stable
  under contexts), and generally to prove program-equivalence results.

  \begin{example}[Store duplication]\label{ex:store-duplication}
  For example, consider the two following programs, manipulating
  infinite lists of ones:
  \newcommand{\ones}{(\letrecin {x = \kwd{Cons}(1,x)} x)}
  \begin{mathpar}
    \letin o \ones {\app {\app f o} o}

    {\app {\app f \ones} \ones}
  \end{mathpar}
  With our explicit-substitution semantics, it is trivial to show that
  these two terms are equivalent: they reduce to the same term, namely
  \newcommand{\onesreduced}{(\letrecin {x = \kwd{Cons}(1,x)} {\kwd{Cons}(1,x)})}
  \begin{mathpar}
    {\app {\app f \onesreduced} \onesreduced}
  \end{mathpar}
  (Note that $\ones$ is not a value in our semantics, only a weak
  value; one needs to lookup $x$ once to get a value so that the
  $o$-binding can be reduced.)

  In a global-store semantics, on the contrary, it is not at all
  obvious that the two programs are equivalent; indeed, they reduce
  to configurations of the following forms:
  \begin{mathpar}
    ([x \mapsto \kwd{Cons}(1,x)], \app {\app f x} x)

    ([x_1 \mapsto \kwd{Cons}(1,x_1),
      x_2 \mapsto \kwd{Cons}(1,x_2)], \app {\app f {x_1}} {x_2})
  \end{mathpar}
  These configurations are not equivalent by reduction; 
  the equational theory would need a stronger equivalence principle
  that could de-duplicate bisimilar fragments of the store.
  \end{example}
\end{itemize}

However, since a large part of our community is unfamiliar with
local-store semantics, in the interest of accessibility, we also propose in
\Long{this section}{an extended version of this work} a
global-store semantics for our language. We show that the soundness
result for our analysis can be lifted to this semantics:
$\Return$-typed programs cannot go vicious in the global-store
semantics either. This is done by formalizing a compilation pass from
the local-store language to the
global-store+backpatching language, and showing a backward simulation
result --- effectively redoing the compilation-correctness work of
\citet*{hirschowitz-compilation-2003,hirschowitz-compilation-2009} in
our setting.

\begin{version}{\Long}
\subsection{Target Language}
\label{subsec:target-language}

The global-store language is called the ``target'' language as we will
prove the correctness of a compilation pass from the ``source''
(local-store) to ``target'' (global-store) languages. Its grammar and
operational semantics are given in
Figure~\ref{fig:target-syntax}.

A store location is created uninitialized with the $\newin x t$
constructor, and can be defined exactly once by the assignment
expression $\setref x t$; the heap binds locations to ``heap blocks''
that are either uninitialized ($\bot$) or a value. The write-once
discipline is enforced by the $\setref x t$ reduction rule, which
requires the heap block to be $\bot$. Trying to read an uninitialized
memory location is the dynamic error that our analysis prevents: it is
the target equivalent of vicious terms, and we call it a ``segfault''
by analogy with the unpleasant consequences of the corresponding error
in most compiled languages.

The reduction uses a distinguished data constructor, $\kwd{Done}$,
playing the role of a unit value in our untyped semantics: it is the
value returned by the evaluation of an assignment expression
$\setref x v$.

\begin{mathparfig}[t!]{fig:target-syntax}%
  {Syntax and reduction of the global-store (target) language}
  \begin{array}{l@{~}r@{~}l}
    \set{TTerms} \ni t, u & \bnfeq & x, y, z \\
    & \bnfor & \lam x t 
      \bnfor 
               \app t u \\
    & \bnfor & \constr K {\fam i {t_i}} 
      \bnfor 
               \match t h \\
    & \bnfor & \newin x t \\
    & \bnfor & \setref x t
  \end{array}

  \begin{array}{l}
    \text{Handlers, patterns:} \\
    \text{as in the source language}
  \end{array}

  \begin{array}{l@{~}r@{~}l}
    \set{TValues} \ni v & \bnfeq
    & \lam x t \bnfor \constr K {\fam i {w_i}} \\
    \set{TWeakValues} \ni w & \bnfeq
    & x,y,z \bnfor v \\
  \end{array}

  \begin{array}{r@{~}r@{~}l}
    \set{ValueHeaps} \ni B & \bnfeq & \emptyset \bnfor \heapext B x v \\
    \set{Heaps} \ni H & \bnfeq & \emptyset \bnfor \heapext H x {v?} \\
    \set{HeapBlocks} \ni v? & \bnfeq & v \bnfor \bot \\
  \end{array}

  \begin{array}{l@{~}r@{~}l}
    \set{TEvalCtx} \ni E
    & \bnfeq
    & \square \bnfor \plug E F \\
    \set{TEvalFrame} \ni F
    & \bnfeq & \app \hole t
    \;\bnfor\; \app t \hole \\
    & \bnfor & \constr K {(\fam i {t_i}, \hole, \fam j {t_j})} \\
    & \bnfor & \match \hole h \\
    & \bnfor & \setref x \hole \\
  \end{array}

  \begin{array}{l@{~}r@{~}l}
    \set{TFEvalCtx} \ni {E_f}
    & \bnfeq
    & \square \bnfor \plug E {F_f} \\
    \set{TForcingFrame} \ni F_f
    & \bnfeq & \app \hole t
    \;\bnfor\; \app t \hole \\
    & \bnfor & \match \hole h \\
  \end{array}
\\
  \infer[ctx]
  {\rewhead {\conf {H_\square} {t_h}} {\conf {H'_\square} {t'_h}}}
  {\rew
    {\conf {\heapjoin {H_E} {H_\square}} {\plug E {t_h}}}
    {\conf {\heapjoin {H_E} {H'_\square}} {\plug E {t'_h}}}}

  \infer{ }
  {\rewhead
    {\conf \emptyset {\app {(\lam x t)} u}}
    {\conf \emptyset {\subst t {\by u x}}}}

  \infer
  {\forall {(\clause {\constr {K'} {\fam j {x'_j}}} u')} \in h,\; K \neq K'}
  {\rewhead
    {\conf \emptyset
      {\match {\constr K {\fam i {w_i}}}
        {(h \mid \clause {\constr K {\fam i {x_i}}} u \mid h')}}}
    {\conf \emptyset
      {\substfam u i {\by {w_i} {x_i}}}}}

  \infer[new]
  { }
  {\rewhead
    {\conf \emptyset {\newin x t}}
    {\conf {\heapext {} x \bot} t}}

  \infer[set]
  { }
  {\rewhead
    {\conf {\heapext {} x \bot} {\setref x v}}
    {\conf {\heapext {} x v} \Done}
  }

  \infer[lookup]
  { }
  {\rewhead
    {\conf {\heapext {} x v} x}
    {\conf {\heapext {} x v} v}}

  \mathsf{Segfault} \defeq \conf {\heapext H x \bot} {\plug {E_f} x}
\end{mathparfig}

\begin{remark}[Variables as locations]
  We use variables as heap locations, an idea that goes back at least
to \cite{DBLP:conf/popl/Launchbury93}. This is equivalent to having
a distinguished syntactic category of locations thanks to the
Barendregt convention --- we work modulo $\alpha$-equivalence and can
thus always assume that bound variables are distinct from free
variables of the same term. Picking a ``fresh enough'' variable $x$ each
time we encounter a $\kwd{new}$ binder gives a new memory location. In
particular, those variables do not correspond to static binding
positions in the program we started computing; each time a function is
called, the $\kwd{new}$ binders in its body bind over
$\alpha$-equivalent fresh names. This technical choice will make it
easier to relate to the source language, where recursive values are
bound to variables in scope.
\end{remark}

\subsection{Parallel or Order-Independent Evaluation}

In Section~\ref{subsec:target-language} we chose a distinguished
$\kwd{Done}$ constructor to represent a unit value in our target
language. We now choose another distinguished constructor
$\constr {\kwd{Par}} {(t_1, \dots, t_n)}$ to represent a product of
values to evaluate in an arbitrary order. We use a single constructor
at arbitrary arities, but we could just as well use a family of
constructors $\kwd{Par}_n$.

\begin{definition}[Sequential and parallel evaluation]~\\
We define the following syntactic sugar:
\begin{mathpar}
  \begin{array}{l@{\quad}l@{\quad}l}
    (\seq t u)
    & \defeq &
    (\match t (\clause \Done u))
  \\
    \para \fam {i \in I} {t_i}
    & \defeq &
    (\match
      {\constr {\kwd{Par}} {\fam {i \in I} {t_i}}}
      {\clause {\constr {\kwd{Par}} {\fam {i \in I} {x_i}}} \Done})
  \end{array}
\end{mathpar}
\end{definition}

The term $\para {\fam i {t_i}}$ represents the evaluation of a family
of $\Done$-returning term, in an arbitrary order. In particular,
$(\seq \hole u)$ and
$\para {(\fam {i \in I} {t_i}, \hole, \fam {j \in J} {t_j})}$ are
evaluation contexts, and the following reduction rules are derivable:
\begin{mathpar}
  \rewhead {\conf \emptyset {(\seq \Done u)}} {\conf \emptyset u}

  \rewhead {\conf \emptyset {{\para {\fam {i \in I} \Done}}}} {\conf \emptyset \Done}
\end{mathpar}

\subsection{Translation: Compiling \texttt{letrec} into Backpatching}

We define in Figure~\ref{fig:compilation-into-store} our translation $\comp t$
from source to target terms, explaining recursive value bindings in
terms of backpatching. The interesting case is $\comp {\letrecin b u}$, the others
are just a direct mapping on all subterms.
\begin{mathpar}
  \comp {\letrecin {\fam i {x_i = t_i}} u}

  \defeq

  \newin {\fam i {x_i}}
  {\seq {\para {\fam i {\setref {x_i} {\comp {t_i}}}}} {\comp u}}
\end{mathpar}
To compile $\letrecin {\fam i {x_i = t_i}} u$, we create
uninitialized store cells for each
$x_i$,\footnote{$(\newin {\fam {i \in I} {x_i}} \square)$ denotes a sequence of $(\newin {x_i} \square)$ binders in an
  arbitrary order. In particular, $(\newin \emptyset t)$ is just $t$.}
then we compute the assignments $\setref {x_i} {\comp {t_i}}$ in an
arbitrary order, and finally we evaluate $\comp u$. Note that all the
$x_i$ are in the scope of each $\comp {t_j}$: the translation respects
the scoping of the $\letrec {\fam i {x_i = t_i}} {}$ construct.

\begin{mathparfig}{fig:compilation-into-store}{Compiling \texttt{letrec} into store updates}
  \begin{array}{rll}
    \comp x
    & \defeq &
    x
    \\
    \comp {\lam x t}
    & \defeq &
    \lam x {\comp t}
  \end{array}

  \begin{array}{rll}
    \comp {\app t u}
    & \defeq &
    \app {\comp t} {\comp u}
    \\
    \comp {\constr K {\fam i {t_i}}}
    & \defeq &
    \constr K {\fam i {\comp {t_i}}}
  \end{array}

  \begin{array}{rll}
    \comp {\match t {\fam i {\clause {p_i} {t_i}}}}
    & \defeq &
    \match {\comp t} {\fam i {\clause {p_i} {\comp {t_i}}}}
    \\
    \comp {\letrecin {\fam i {x_i = t_i}} u}
    & \defeq &
    \newin {\fam i {x_i}}
      {\seq {\para {\fam i {\setref {x_i} {\comp {t_i}}}}} {\comp u}}
  \end{array}
\end{mathparfig}

\subsection{Relating Target Terms back to Source Terms}

\begin{mathparfig}{fig:head-term-relation}{Head term relation $\termrel t {\conf H {t'}}$}
  \fbox{Simple rules} \hfill\\

  \infer
  { }
  {\termrel x {\conf \emptyset x}}

  \infer
  { }
  {\termrel {\lam x t} {\conf \emptyset {\lam x {\comp t}}}}

  \infer
  {\termrel t {\conf {H_t} {t'}}
   \\
   \termrel u {\conf {H_u} {u'}}}
  {\termrel
    {\app t u}
    {\conf {\heapjoin {H_t} {H_u}} {\app {t'} {u'}}}}

  \infer
  {\fam {i \in I} {\termrel {t_i} {\conf {H_i} {t'_i}}}}
  {\termrel
    {\constr K {\fam {i \in I} {t_i}}}
    {\conf {\fam {i \in I} {H_i}}
           {\constr K {\fam {i \in I} {t'_i}}}}
  }

  \infer
  {\termrel t {\conf H {t'}}}
  {\termrel
    {\match t h}
    {\conf H {\match {t'} {\comp h}}}}

  \\\fbox{$\kwd{letrec}$ rules} \hfill\\

  \infer[init]
  {J \neq \emptyset}
  {
    {\letrecin
      {\fam {i \in I} {x_i = t_i}, \fam {j \in J} {x_j = t_j}}
      u}
    \\ \termrel {}
    {\conf {\heapextfam {} {i \in I} {x_i} \bot}
      {\newin {\fam {j \in J} {x_j}}
        {\seq {\para {\fam {k \in I \uplus J} {\setref {x_k} {t_k}}}} {\comp u}}}
    }
  }

  \infer[write]
  {\fam {i \in I} {\termrel {v_{s,i}} {\conf {B_i} {v_{t,i}}}}
   \\
   \fam {j \in J} {\termrel {t_{s,j}} {\conf {H_j} {t_{t,j}}}}
  }
  {
   {\letrecin {\fam {i \in I} {x_i = v_{s,i}}, \fam {j \in J} {y_j = {t_{s,j}}}} u}
   \\ \termrel {}
    {\conf
      {\heapjoin
        {\heapextfam
          {\heapextfam {} {i \in I} {x_i} {v_{t,i}}}
          {j \in J} {y_j} \bot}
        {\heapjoin {\fam {i \in I} {B_i}} {\fam {j \in J} {H_j}}}}
      {\seq {\para {(\fam {i \in I} \Done, \fam {j \in J} {\setref {y_j} {t_{t,j}}})}} {\comp u}}
    }}

  \infer[done]
  {\fam {i \in I} {\termrel {v_{s,i}} {\conf {B_i} {v_{t,i}}}}}
  {\termrel
    {\letrecin {\fam {i \in I} {x_i = v_{s,i}}} u}
    {\conf
      {\heapjoin
        {\heapextfam {} {i \in I} {x_i} {v_{t,i}}}
        {\fam {i \in I} {B_i}} H}
      {\seq \Done {\comp u}}
    }
  }

  \infer[further]
  {\fam {i \in I} {\termrel {v_{s,i}} {\conf {B_i} {v_{t,i}}}}
   \\
   \termrel {u_s} {\conf H {u_t}}
  }
  {\termrel
    {\letrecin {\fam {i \in I} {x_i = v_{s,i}}} {u_s}}
    {\conf
      {\heapjoin
        {\heapextfam {} {i \in I} {x_i} {v_{t,i}}}
        {\heapjoin {\fam {i \in I} {B_i}} H}}
      {u_t}
    }
  }

  \\\fbox{Heap rules} \hfill\\

  \infer[heap-weaken]
  {\termrel {t_s} {\conf H {t_t}}
   \\
   \forall x \in \dom{B},\ x \notin \conf H {t_t}}
  {\termrel {t_s}
    {\conf {\heapjoin H B} {t_t}}}

  \newcommand{\twolines}[2]{{\begin{array}{c}#1 \\ #2\end{array}}}
  \infer[heap-copy]
  {\twolines{}{\termrel {t_s} {\conf H {t_t}}}
   \\
   \twolines
     {\compatible \phi H}
     {\dom \phi \subseteq \dom H}
  }
  {\termrel {t_s} {\conf {\phi(H)} {\phi(t_t)}}}
\end{mathparfig}

We want to prove that this translation scheme is safe; that source
terms that do not go vicious translate into target terms that do not
segfault. This requires a backward simulation property: any reduction
path from a translation in the target (in particular, a reduction path
that leads to a segfault) needs to be simulated by a reduction path on the
original source term (in particular, a reduction path that goes vicious).

We define a \emph{head term relation} $\termrel t {\conf H {t'}}$ that
relates the parts of a term and a configuration that are in reducible
position. For non-reducible positions, we use the direct embedding
$\comp t$ above. In this relation, $H$ contains not all the locations
that are bound in $t'$, but only those that correspond to
$\kwd{letrec}$-binding found in $t$; intuitively, $H$ is the disjoint union
of all the $\kwd{letrec}$-bindings in $t$, seen as local store fragments.

The rules for the $\kwd{letrec}$ constructs correspond to
a decomposition of the various intermediate states of the reduction of
their backpatching compilation.

Finally, the ``heap rules'' give reasoning principles to bridge the
difference between the way the local and global stores evolve during
reduction. In our source-level semantics, explicit substitutions
(local store) may be duplicated or erased during reductions involving
the values they belong to. Store duplication is expressed by
a variable-renaming substitution $\phi$, that ``merges'' different
fragments of the global store together; the side-condition
$(\compatible \phi H)$ guarantees that the resulting store is
well-formed.

For reasons of space, we moved the explanation of this relation,
including the definition of $(\compatible \phi H)$, as well as the
proofs of its properties, to
\myappendixfullref{ann:simulation-proof}. The two key results are
mentioned here, guaranteeing that our analysis is also sound for
this global-store semantics.

\begin{theorem}[Backward Simulation]\label{thm:backward-simulation}~\\
If
\begin{mathline}
  \termrel {t_s} {\conf H {t_t}}

  \rew {\conf H {t_t}} {\conf {H'} {t'_t}}
\end{mathline}
then $\exists t'_s$,
\begin{mathline}
  \rewopt {t_s} {t'_s}

  \termrel {t'_s} {\conf {H'} {t'_t}}
\end{mathline}
\end{theorem}

\begin{theorem}[$\Return$-typed programs cannot segfault]\label{thm:no-segfault}
\begin{mathline}
  \der \emptyset {t_s} \Return

  \wedge

  \rewstar {\conf \emptyset {\comp  {t_s}}} {\conf H {t'_t}}

  \implies

  {\conf H {t'_t}} \notin \mathsf{Segfault}
\end{mathline}
\end{theorem}
\end{version}

\myfullref{Theorem}{thm:no-segfault} guarantees that our analysis is
sound for both our source language and its backpatching
translation. \myfullref{Theorem}{thm:backward-simulation} also tells
us that our source semantics has ``enough'' reduction rules compared
to the global-store semantics. For example, if the global-store
semantics computes a value for a term, then the source semantics would
have computed a related value. 

\section{Extension to a Full Language}
\label{section:extension-full}

We now discuss the extension of our typing rules to the full OCaml
language, whose additional features (e.g.~exceptions, first-class
modules and GADTs) contain subtleties that need special care.

\subsection{The Size Discipline}
\label{subsec:size}

The OCaml compilation scheme, one of several possible ways of treating
recursive declarations, proceeds by reserving heap blocks for the
recursively-defined values, and using the addresses of these heap
blocks (which will eventually contain the values) as dummy values: it
adds the addresses to the environment and computes the values
accordingly. If no vicious term exists, the addresses are never
dereferenced during evaluation, and evaluation produces ``correct''
values. Those correct values are then moved into the space occupied by
the dummies, so that the original addresses contain the correct
result.

This strategy depends on knowing how much space to allocate for each
value. Not all OCaml types have a uniform size; e.g.~variant types
may contain constructors with different arities, resulting
in different in-memory sizes, and the size of a closure depends on the
number of free variables.

After checking that mutually-recursive definitions are meaningful
using the rules we described, the OCaml compiler checks that it can
realize them, by trying to infer a static size for each value. It then
accepts to compile each declaration if either:
\begin{itemize}
\item it has a static size, or
\item it doesn't have a statically-known size, but its usage mode of
  mutually-recursive definitions is always Ignore
\end{itemize}
(The second category corresponds to detecting some values that are
actually non-recursive and lifting them out. Such non-recursive values
often occur in standard programming practice, when it is more
consistent to declare a whole block as a single \code{let rec} but
only some elements are recursive.)

This static-size test may depend on lower-level aspects of
compilation, or at least value representation choices. For example,
\begin{lstlisting}
   if p then (fun x -> x) else (fun x -> not x)
\end{lstlisting}
has a static size (both branches have the same size), but
\begin{lstlisting}
   if p then (fun x -> x + 1) else (fun x -> x + offset)
\end{lstlisting}
does not: the second function depends on a free variable
\lstinline|offset|, so it will be allocated in a closure with an extra
field. (While \lstinline|not| is also a free variable, it is
statically resolvable to a global name.)

\paragraph{Relation to the mode system}

The mode system corresponds to a correctness criterion on the
operational semantics of programs; it is independent of compilation
schemes.
In contrast, the size discipline corresponds to a restrictive
compilation strategy for value recursion that involves rejecting
certain definitions.
The size discipline is formalized
by~\citet{hirschowitz-compilation-2009}; it would be possible to
incorporate it into our system, modelling it as a separate judgment to be
checked for well-moded definitions (rather than as an enrichment of
the mode judgment).
However, the resulting system would be less portable to programming
languages whose value representations differ from OCaml's, and which
consequently would not use the same size discipline.

\subsection{Dynamic Representation Checks: Float Arrays}
\label{section:float-arrays}

OCaml uses a dynamic representation check for its polymorphic arrays:
when the initial array elements supplied at array-creation time are
floating-point numbers, OCaml chooses a specialized, unboxed
representation for the array.

Inspecting the representation of elements during array creation means
that although array construction looks like a guarding context, it is
often in fact a dereference.  There are three cases to consider:
first, where the element type is statically known to be
\lstinline{float}, array elements will be unboxed during creation,
which involves a dereference;
second, where the element type is statically known not to be
\lstinline{float}, the inspection is elided;
third, when the element type is not statically known the elements will
be dynamically inspected --- again a dereference.

The following program must be rejected, for example:
\begin{lstlisting}
let rec x = (let u = [|y|] in 10.5)
    and y = 1.5
\end{lstlisting}
since creating the array \code{ [|y|] } will unbox the element \code{y},
leading to undefined behavior if \code{y} --- part of the same recursive
declaration --- is not yet initialized.

\subsection{Exceptions and First-Class Modules}
\label{section:exceptions}

In OCaml, exception declarations are generative: if a functor body contains
an exception declaration then invoking the functor twice will declare two
exceptions with incompatible representations, so that catching one of them
will not interact with raising the other.

Exception generativity is implemented by allocating a memory cell at
functor-evaluation time (in the representation of the resulting
module); and including the address of this memory cell as an argument
of the exception payload. In particular, creating an exception value
\lstinline|M.Exit 42| may dereference the module \lstinline|M| where
\lstinline|Exit| is declared.

Combined with another OCaml feature, first-class modules, this
generativity can lead to surprising incorrect recursive declarations,
by declaring a module with an exception and using the exception in the
same recursive block.

For instance, the following program is unsound and rejected by our
analysis:
\begin{lstlisting}
module type T = sig exception A of int end
let rec x = (let module M = (val m) in M.A 42)
and (m : (module T)) = (module (struct exception A of int end) : T)
\end{lstlisting}
In this program, the allocation of the exception value \code{M.A 42}
dereferences the memory cell generated for this exception in the
module \code{M}; but the module \code{M} is itself defined as the
first-class module value \code{(m : (module T))}, part of the same
recursive nest, so it may be undefined at this point.

(This issue was first
\href{https://github.com/ocaml/ocaml/pull/556#issuecomment-271159296}{pointed
  out} by Stephen Dolan.)

\subsection{GADTs}
\label{section:gadts}

The original syntactic criterion for OCaml was implemented not
directly on surface syntax, but on an intermediate representation
quite late in the compiler pipeline (after typing, type-erasure, and
some desugaring and simplifications).
In particular, at the point where the check took place,
exhaustive single-clause matches such as  \lstinline|match t with x -> $\ldots$|
or \lstinline|match t with () -> $\ldots$|) had been transformed into direct
substitutions.

This design choice led to programs of the following form being accepted:
\begin{lstlisting}
  type t = Foo
  let rec x = (match x with Foo -> Foo)
\end{lstlisting}
\noindent
While this appears innocuous, it becomes unsound with the
addition of GADTs to the language:
\begin{lstlisting}
  type (_, _) eq = Refl : ('a, 'a) eq
  let universal_cast (type a) (type b) : (a, b) eq =
    let rec (p : (a, b) eq) = match p with Refl -> Refl in p
\end{lstlisting}
For the GADT |eq|, matching against \lstinline{Refl} is not a no-op:
it brings a type equality into scope that expands the set of
types that can be assigned to the
program~\citep{DBLP:conf/aplas/GarrigueR13}.
It is therefore necessary to treat matches involving GADTs as
inspections to ensure that a value of the appropriate type is actually
available; without that change definitions such as
\lstinline{universal_cast} violate type safety.

\subsection{Laziness}

OCaml's evaluation is eager by default, but it supports an explicit
form of lazy evaluation: the programmer can write |lazy e| and |force e| to delay and force the evaluation of an expression.

The OCaml implementation performs a number of optimizations involving
|lazy|.  For example, when the argument of |lazy| is a trivial
syntactic value (variable or constant) for which eager and lazy
evaluation behave equivalently, the compiler eliminates the
allocation of the lazy thunk.

However, for recursive definitions eager and lazy evaluation are not
equivalent, and so the recursion check must treat |lazy trivialvalue|
as if the |lazy| were not there.
For example, the following recursive definition is disallowed, since
the optimization described above nullifies the delaying effect of the
|lazy|
\begin{lstlisting}
   let rec x = lazy   y   and y = $\ldots$
\end{lstlisting}
while the following definition is allowed by the check, since the
argument to |lazy| is not sufficiently trivial to be subject to the
optimization:
\begin{lstlisting}
   let rec x = lazy (y+0) and y = $\ldots$
\end{lstlisting}

Our typing rule for \lstinline|lazy| takes this into account:
``trivial'' thunks are checked in mode $\Return$ rather than $\Delay$.

\section{Related Work}
\label{subsec:related-work}


\paragraph{Degrees} \citet*{boudol-objects-2001} introduces the
notion of ``degree'' $\alpha \in \{0, 1\}$ to statically analyze
recursion in object-oriented programs (recursive objects,
lambda-terms). Degrees refine a standard ML-style type system for
programs, with a judgment of the form $\Gamma \vdash t : \tau$ where
$\tau$ is a type and $\Gamma$ gives both a type and a degree for each
variable. A context variable has degree $0$ if it is required to
evaluate the term (related to our $\Dereference$), and $1$ if it is
not required (related to our $\Delay$). Finally, function types are
refined with a degree on their argument: a function of type
$\tau^0 \to \tau'$ accesses its argument to return a result, while
a $\tau^1 \to \tau'$ function does not use its argument right away,
for example a curried function $\lam x {\lam y {\pair x y}}$ --- whose
argument is used under a delay in its body $\lam y {\pair
  x y}$. Boudol uses this reasoning to accept a definition such as
\code{let rec obj = class_constructor obj params},
arising from object-oriented encodings, where
\code{class_constructor}
has a type $\tau^0 \to \dots$.

Our system of mode is finer-grained than the binary degrees of Boudol;
in particular, we need to distinguish $\Dereference$ and $\Guard$ to
allow cyclic data structure constructions.

\sloppy
On the other hand, we do not reason about the use of function
arguments at all, so our system is much more coarse-grained in this
respect. In fact, refining our system to accept
\code{let rec obj = constr obj params}
would be incorrect for our use-case in the OCaml compiler, whose
compilation scheme forbids passing yet-uninitialized data to
a function.

\fussy
In a general design aiming for maximal expressiveness, access modes should
refine ML types; in Boudol's system, degrees are interlinked with the
type structure in function types $\tau^\alpha \to \tau'$, but one
could also consider pair types of the form
${\tau_1}^{\alpha_1} \times {\tau_2}^{\alpha_2}$, etc. In our simpler
system, there are no interaction between value shapes (types) and
access modes, so we can forget about types completely, a nice
conceptual simplification. Our formalization is done entirely in
an untyped fragment of ML.

\paragraph{Compilation}

\citet*{hirschowitz-compilation-2003,hirschowitz-compilation-2009}
discuss the space of compilation schemes for recursive value
definitions. Their work is an inspiration for our own compilation
result: they provide a source-level semantics based on floating
bindings upwards in the term (similar to explicit substitutions or
local thunk stores), and prove correctness of compilation to a global
store with backpatching.

Our source-level semantics is close to theirs in spirit (we would
argue that the use of \emph{reduction at a distance} is
an improvement), and our compilation scheme and its correctness proof
are not novel compared to their work --- they are there to provide
additional intuition. The main
contribution of our work is our mode system for recursive 
declarations, which is expressive enough to capture OCaml value
definitions, yet simple and easy to infer.

A natural question for our work is whether the access-mode derivations
we build in our safety check can inform the compilation strategy for
recursive values. We concentrate on safety, but there is
\href{https://github.com/ocaml/ocaml/pull/8956}{ongoing work} by others
on the compilation method for recursive values, which partly
goes in this direction.

\paragraph{Fixing Letrec (Reloaded)}
Fixing Letrec (Reloaded)~\citep*{fixing-letrec,fixing-letrec-reloaded}
is a nice brand of work from the Scheme community, centered on
producing efficient code for recursive value declarations,
even in presence of dynamic checks for the absence of
uninitialized-name reads. It presents a static analysis, both for
optimization purposes (eliding dynamic safety checks) and user
convenience. The analysis is described by informal prose, but it is
similar in spirit to our mode analysis (using modes named ``protected'',
which corresponds to our $\Delay$, ``protectable'' which sounds like
$\Return$ and ``unsafe'' which is $\Dereference$). We precisely describe
an analysis (richer, as it also has a $\Guard$ mode) and prove its
correctness with respect to a dynamic semantics.

\paragraph{Name access as an effect} \citet*{dreyer-2004} proposes to
track usage of recursively-defined variables as an effect, and designs
a type-and-effect system whose effects annotations are sets of
abstract names, maintained in one-to-one correspondence with
\code{let rec}-bound variables. The construction \code{let rec
  X$\triangleright$ x : $\;\tau\;$ = e} introduces the abstract
type-level name \code{X} corresponding to the recursive variable
\code{x}. This recursive variable is made available in the scope of
the right-hand-side \code{e : $\;\tau$} at the type
\code{box(X,$\tau$)} instead of $\tau$ (reminding us of
guardedness modalities). Any dereference of \code{x} must
explicitly ``unbox'' it, adding the name \code{X} to the ambient
effect.

This system is very powerful, but we view it as a core language rather
than a surface language: encoding a specific usage pattern may require
changing the types of the components involved, to introduce explicit
\code{box} modalities:
\begin{itemize}
\item When one defines a new function from $\tau$ to $\tau'$, one
  needs to think about whether it may be later used with
  still-undefined recursive names as argument --- assuming it indeed
  makes delayed uses of its argument. In that case, one should use the
  usage-polymorphic type function type
  $\forall X. \mathsf{box}(X, \tau) \to \tau'$ instead of the simple
  function type $\tau \to \tau'$. (It is possible to inject $\tau$
  into $\mathsf{box}(X, \tau)$, so this does not restrict
  non-recursive callers.)
\item One could represent cyclic data such as
\code{let rec ones = 1 :: ones}
in this system, but it would require a non-modular change of the type
of the list-cell constructor from
$\forall \alpha. \alpha \to \mathsf{List}(\alpha) \to \mathsf{List}(\alpha)$
to the box-expecting type
$\forall \alpha. \alpha \to \forall X. \mathsf{box}(X,\mathsf{List}(\alpha)) \to \mathsf{List}(\alpha)$
.

\end{itemize}

In particular, one cannot directly use typability in this system as
a static analysis for a source language; this work needs to be
complemented by a static analysis such as ours, or the safety has to
be proved manually by the user placing box annotations and operations.
However, we believe that any well-typed
program that is accepted by our mode system could be encoded in
Dreyer's system,
%
roughly as follows:

\begin{itemize}
\item
if in the context $\Gamma$ of a derivation $\Gamma \vdash t : \Return$
  in our system, we have $x : \Dereference$, then in the encoding the
  corresponding effect variable $X$ would be an ambient capability
 ($\Gamma \vdash t : \tau [T]$ with $X \in T$)
\item
 on the other hand, if we have $x : \Return$ or a more permissive
  mode, then we would give the corresponding term variable in the
  encoding type $\mathsf{box}(X,T)$
\end{itemize}

So, for example,
$x: \Dereference \vdash x + 1 : \Return$
would be encoded as
$X, x \vdash x + 1 : \text{Int} [X]$
but
$x : \Guard \vdash \{ t = x \} : \Return$
would be encoded as
$X, x:\mathsf{box}(X,\tau) \vdash \{ t = x \} : \{ t : \tau \} [\emptyset]$.

The whole derivation of an encoding of a valid recursive definition would have
a non-boxed type on the right-hand side, 
without any of the effect variables of the recursively-defined
variable in the ambient context.

\paragraph{Strictness analysis} Our analysis can be interpreted as
a form of strictness or demand analysis, with modes above $\Return$
being non-strict and $\Dereference$ being the forcing mode. Note
however that we use a may-analysis (a $\Dereference$ variable
\emph{may} be dereferenced) while strictness optimizations usually
rely on a must-analysis (we only give $\mathsf{Forcing}$ when we know
for sure that forcing happens); to do this one should change our
interpretation of unknown functions to be conservative in the other
direction, with mode $\Ignore$ rather than $\Dereference$ for their
arguments. More importantly, strictness analyses typically try to
compute more information than our modes: besides the question of
whether a given subterm will be forced or not, they keep track of which
prefixes of the possible term shapes will get forced --- this is more
related to the finer-grained spaces of ``ranks'' or ``degrees'' for
recursive functors.

\paragraph{Graph typing} Hirschowitz also collaborated on static
analyses for recursive definitions in
\citet*{hirschowitz-graph-2005,bardou-2005}. The design goal was
 a simpler system than existing work aiming for expressiveness, with
 inference as simple as possible.

As a generalization of Boudol's binary degrees they use compactified
numbers $\mathbb{N}\cup\{-\infty, \infty\}$. The degree of a free
variable ``counts'' the number of subsequent $\lambda$-abstractions
that have to be traversed before the variable is used; $x$ has degree
$2$ in $\lam y {\lam z x}$. A $-\infty$ is never safe, it corresponds
to our $\Dereference$ mode. $0$ conflates our $\Guard$ and $\Return$
mode (an ad-hoc syntactic restriction on right-hand-sides is used to
prevent under-determined definitions), the $n+1$ are fine-grained
representations of our $\Delay$ mode, and finally $+\infty$ is our
$\Ignore$ mode.

Another salient aspect of their system is the use of ``graphs'' in the
typing judgment: a use of \code{y} within a definition
\code{let x = e}
is represented as an edge from \code{y} to \code{x} (labeled by the
usage degree), in a constraint graph accumulated in the typing
judgment. The correctness criterion is formulated in terms of the
transitive closure of the graph: if \code{x} is later used somewhere,
its usage implies that \code{y} also needs to be initialized in this
context.

One contribution of our work is to show that a more standard syntactic
approach can replace the graph representation. Note that our typing
rule for mutual-recursion uses a fixpoint computation, reminiscent of
their transitive-closure computation but within a familiar type-system
presentation.

Finally, their static analysis mentions the in-memory size of values,
which needs to be known statically, in the OCaml compilation scheme,
to create uninitialized memory blocks for the recursive names before
evaluating the recursive definitions. Our mode system does not mention
size at all, it is complemented by an independent (and simpler)
analysis of static-size deduction, which is outside the scope of the
present formalization, but described briefly in
\myfullref{Section}{subsec:size}.

\paragraph{\Fsharp} \citet*{syme-2006} proposes a simple translation
of mutually-recursive definitions into delay and force constructions
that introduce and eliminate lazy values.
For
example, \lstinline|let rec $x\;$=$\;t$ and $y\;$=$\;u$| is turned into the following:
\begin{lstlisting}
let rec $x_{\mathsf{thunk}}$ = lazy ($t[\mathsf{force}~x_{\mathsf{thunk}} / x,
                                      \mathsf{force}~y_{\mathsf{thunk}} / y]$)
     and $y_{\mathsf{thunk}}$ = lazy ($u[\mathsf{force}~x_{\mathsf{thunk}} / x,
                                       \mathsf{force}~y_{\mathsf{thunk}} / y]$)
let $x$ = force $x_{\mathsf{thunk}}$ and $y$ = force $y_{\mathsf{thunk}}$
\end{lstlisting}

With this semantics, evaluation happens on-demand, which the recursive
definitions evaluated at the time where they are first accessed. This
implementation is very simple, but it turns vicious definitions into
dynamic failures --- handled by the lazy runtime which safely raises an
exception. However, this elaboration cannot support cyclic data
structures: The translation of \code{let rec ones = 1 :: ones} fails
at runtime:
\begin{lstlisting}
let rec ones$_{\mathsf{thunk}}$ = lazy (1 :: force ones$_{\mathsf{thunk}}$)
\end{lstlisting}
Furthermore, the translation affects the semantics of programs in
surprising ways: in particular, the implicit introduction of laziness
into definitions that start new threads can lead to unexpected
multiple execution of computations and to race conditions.

Nowadays, \Fsharp{} provides an ad-hoc syntactic criterion, the
``Recursive Safety
Analysis'' \citep*{fsharp-recursive-safety-analysis}, roughly similar
to the previous OCaml syntactic criterion, that distinguishes ``safe''
and ``unsafe'' bindings in a mutually-recursive group; only the latter
are subjected to the thunk-introducing translation.

Finally, the implementation also performs a static analysis to detect
some definitions that are bound to fail --- it over-approximates
safety by ignoring occurrences within delaying terms (function
abstractions or objects or lazy thunks) even if those delaying terms
may themselves be used (i.e.~respectively called or accessed or
forced) at definition time.  We believe that we could recover a
similar analysis by changing our typing rules for our constructions
--- but with the OCaml compilation scheme we must absolutely remain
sound.

\paragraph{Needed computations}

Further connections between laziness and recursive call-by-value
definitions may be drawn: for example,
\citet{DBLP:conf/esop/ChangF12} characterize  
call-by-need by introducing \emph{needed computations},
which are similar in spirit to our idea of \emph{forcing contexts}.
%
%
However, the set of computations characterized by the two ideas are
different in practice: for example, in our system \code{f (fst -)} is
a forcing context, but the hole is not in ``needed position'' for
call-by-need.

Intuitively,
needed computations correspond to
  positions at which \emph{any} choice of reduction order
  \emph{must force} the computation to make progress,
while
forcing contexts correspond to
 positions where \emph{some} choice of reduction order
 \emph{may} force the computation (and fail if the value is initialized).

\paragraph{Operational semantics}

\citet*{felleisen-local-store} and \citet{ariola-felleisen} propose local-store
semantics (for a call-by-value store and a call-by-need thunk
store, respectively) that can express recursive bindings. The
source-level operational semantics of
\citet*{hirschowitz-compilation-2003,hirschowitz-compilation-2009} is
more tailored to recursive bindings, manipulated as explicit
substitutions, although the relation to standard explicit-substitution
calculi is not made explicitly. They also provide a global-store
semantics for their compilation-target language with mutable
stores. \citet*{boudol-abstract-machine-2002} and \citet*{dreyer-2004}
use an abstract machine. \citet*{syme-2006} translates recursive
definitions into lazy constructions, so the usual thunk-store
semantics of laziness can be used to interpret recursive
definitions. Finally, \citet*{dynamic-semantics-2008} give the
simplest presentation of a source-level semantics we know of; we
extend it with algebraic datatypes and pattern-matching, and use it as
a reference to prove the soundness of our analysis.

Our own experience presenting this work is that local-store semantics
has been largely forgotten by the programming-language community,
which is a shame as it provides a better treatment of recursive
definitions (or call-by-need) than global-store semantics.

One inessential detail in which the semantics often differ is the
evaluation order of mutually-recursive right-hand-sides. Many
presentations enforce an arbitrary (e.g.~left-to-right) evaluation
order. Some systems~\citep*{syme-2006,dynamic-semantics-2008} allow
a reduction to block on a variable whose definition is not yet
evaluated, and go evaluate it in turn; this provides the ``best
possible order'' for the user. Another interesting variant would be to
say that the reduction order is unspecified, and that an uninitialized variable
is a stuck term whose evaluation results in a fatal error; this
provides the ``worst possible order'', failing as much as possible; as
far as we know, the previous work did not propose it, although it is
a simple presentation change. Most static analyses are
evaluation-order-independent, so they are sound and complete with
respect to the ``worst order'' interpretation.

\section{Conclusion}

We have presented a new static analysis for recursive value
declarations, designed to solve a fragility issue in the OCaml
language semantics and implementation.
It is less expressive than previous works that analyze function calls
in a fine-grained way; in return, it remains fairly simple, despite
its ability to scale to a fully-fledged programming language, and the
constraint of having a direct correspondence with a simple inference
algorithm.

We believe that this static analysis may be of use for other
functional programming languages, both typed and untyped.
It also seems likely that the techniques we have used in this work will
apply to other systems --- type parameter variance, type constructor
roles, and so on.
Our hope in carefully describing our system is that we will
eventually see a pattern emerge for the design and structure of
``things that look like type systems'' in this way.

\bibliography{letrec}

\newpage
\appendix

\section{Properties of our Typing Judgment}
\label{ann:properties}

The following technical results can be established by simple
inductions on typing derivations, without any reference to an
operational semantics.

\begin{lemma}[$\Ignore$ inversion]\label{lem:invert-ignore}
  $\der \Gamma t \Ignore$ is provable with only $\Ignore$ in
  $\Gamma$.
\end{lemma}

\begin{lemma}[$\Delay$ inversion]\label{lem:invert-delay}
  $\der \Gamma t \Delay$ holds exactly when
  $\Gamma$ maps all free variables of $t$ to $\Delay$ or $\Ignore$.
\end{lemma}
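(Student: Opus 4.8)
The plan is to prove both implications by induction, with the forward (``only if'') direction being the substantive one. The workhorse is a single fact about composition: for \emph{every} mode $m'$ we have $\mcomp{\Delay}{m'} \preceq \Delay$ (indeed $\mcomp{\Delay}{m'} \in \{\Ignore,\Delay\}$), and more generally left-composition preserves the predicate ``$\preceq \Delay$'': if $m \preceq \Delay$ then $\mcomp{m}{m'} \preceq \Delay$ for all $m'$. Since every term rule checks each immediate subterm at a mode of the form $\mcomp{m}{m_F}$, where $m$ is the ambient mode and $m_F$ is the mode of the enclosing frame, this fact is exactly what propagates ``$\preceq \Delay$'' down through an entire derivation.

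For the only-if direction I would prove the slightly stronger statement ``if $\der{\Gamma}{t}{m}$ with $m \preceq \Delay$ then $\Gamma$ maps every free variable of $t$ to a mode $\preceq \Delay$,'' which folds in the $\Ignore$-inversion lemma (the case $m = \Ignore$) and gives an induction hypothesis strong enough to close each case. Proceeding by induction on the derivation: the variable rule forces $\Gamma(x)=m \preceq \Delay$ directly; for $\lam{x}{t}$, application, constructors and $\match$ the premises check the subterms at $\mcomp{m}{\Delay}$, $\mcomp{m}{\Dereference}$, $\mcomp{m}{\Guard}$ and $\mcomp{m}{\Dereference}$ respectively, all $\preceq \Delay$ by the workhorse fact, so the hypothesis applies to each premise and the conclusion's environment -- a union of the sub-environments (minus bound variables), combined by $\envsum$/$\envbigsum$ -- again maps free variables to $\preceq \Delay$, since that predicate is stable under pointwise maximum. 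The if direction is the mirror image: by induction on $t$, from a $\Gamma$ mapping $\mathrm{fv}(t)$ into $\{\Delay,\Ignore\}$ I rebuild the derivation at mode $\Delay$, observing that each subterm is forced to a mode $\mcomp{\Delay}{m_F}\in\{\Delay,\Ignore\}$, restricting $\Gamma$ to the relevant variables, and invoking either the hypothesis or the $\Ignore$-inversion lemma.

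The main obstacle is the $\letrecin{b}{u}$ case. Here the body $u$ is checked at the ambient mode $\Delay$, so each bound $x_i$ is used in $u$ at some $m_i \preceq \Delay$ and, by the hypothesis, $\Gamma_u$ maps its free variables to $\preceq \Delay$; but the defining bindings enter the conclusion through $\mcomp{m'_i}{\Gamma_i}$ with $m'_i = \max(m_i,\Guard)=\Guard$, and the $\Gamma_i$, produced by the binding judgment \emph{in isolation} at mode $\Return$, may legitimately contain $\Dereference$ entries (e.g.\ a scrutinee of a $\match$ inside a definition). The delicate point is therefore to show that these contributions are nevertheless held down to $\preceq \Delay$ once the whole $\letrecin{b}{u}$ sits in a $\Delay$ context -- the eager-evaluation bump to $\Guard$ coming from the discarding discipline must still be guarded by the ambient $\Delay$, so that composing with $\Delay$ collapses even $\Dereference$ to $\Delay$. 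This is the one case I would write out in full, since it is where the interaction between ambient delay and the ``compute the bindings now'' convention actually has to be reconciled.

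A secondary point of care is the subsumption rule: concluding $\der{\Gamma}{t}{\Delay}$ from $\der{\Gamma}{t}{m''}$ with $m'' \succ \Delay$ leaves a free variable at mode $m'' \succ \Delay$ while keeping $\Gamma$ fixed, so a naive induction on arbitrary derivations cannot close the only-if direction. I would resolve this by reading the statement for the principal (algorithmic) environment, which the paper's principality result guarantees exists and which never applies subsumption -- consistent with the remark that the directed check ignores that rule.
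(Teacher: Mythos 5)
Your overall plan is sound in its easy parts, and the two pressure points you single out — subsumption and the $\letrecin b u$ rule — are exactly where a naive induction breaks; that is sharper than the paper itself, whose appendix offers no written proof beyond the claim that these lemmas follow by ``simple inductions on typing derivations.'' Your handling of the variable, $\lam x t$, application, constructor and $\match$ cases via the fact that $m \preceq \Delay$ implies $\mcomp m {m'} \preceq \Delay$ is correct, and routing around subsumption by reading the statement on principal environments is a reasonable repair.

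The gap is precisely the case you defer. With the rule as printed in \myref{Figure}{fig:rules}, nothing ``holds down'' the binding environments: the conclusion environment is $\envsum {\envbigsum {\fam i {\mcomp {m'_i} {\Gamma_i}}}} {\Gamma_u}$ with $m'_i \defeq \max(m_i, \Guard)$, and the ambient mode $m$ simply does not occur in that expression. An ambient mode enters a derivation only through the modes at which subterms are checked, and the binding judgment checks each $t_i$ at $\Return$ regardless of $m$; there is no subsequent composition with $\Delay$ anywhere that could collapse a $\Dereference$ entry of $\Gamma_i$, and $\mcomp \Guard \Dereference = \Dereference$. Concretely, take $t = \letrecin {z = \app y y} {\lam w z}$. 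The binding judgment forces $\of y \Dereference$ (the application rule checks both sides at $\mcomp \Return \Dereference = \Dereference$, and since $\Dereference$ is the top mode, no use of subsumption can avoid this), so \emph{every} derivation of $\der \Gamma t \Delay$ — including the principal one — has $\Gamma(y) = \mcomp \Guard \Dereference = \Dereference$, while $\der {\of y \Delay} t \Delay$ is underivable; both directions of the biconditional fail on this term even though $y$ is free in $t$. In other words, the facts you yourself assembled ($m'_i = \Guard$, $\Gamma_i$ may contain $\Dereference$, composition under $\Guard$ preserves $\Dereference$) are not a ``delicate point to reconcile'' but a refutation of the statement you postponed. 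Your semantic instinct — that under an ambient $\Delay$ the bindings are never computed, so their dereferences should be discounted — is the right one, and it can be realized by amending the rule to $m'_i \defeq \mcomp m {\max(m_i, \Guard)}$, under which your induction closes; but as a proof of the lemma from the rules as written, the argument cannot be completed, and a proposal that leaves open exactly the case that decides the lemma is not yet a proof.
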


\begin{lemma}[$\Dereference$ inversion]\label{lem:invert-dereference}
  $\der \Gamma t \Dereference$ holds exactly when
  $\Gamma$ maps all free variables of $t$ to $\Dereference$.
\end{lemma}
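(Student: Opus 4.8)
The plan is to prove the two implications separately, both by induction: the forward implication (``$\der \Gamma t \Dereference$ forces all free variables of $t$ to $\Dereference$'') by induction on the derivation, and the converse by induction on $t$, with a parallel statement for the clause judgment $\derclause \Gamma h \Dereference$ threaded through the $\match$ case. The single fact about \myref{Figure}{fig:modes} that drives everything is that $\Dereference$ is the greatest mode and absorbs composition on the left, $\mcomp \Dereference c = \Dereference$ for every $c \succ \Ignore$, so that every immediate subterm of a $\Dereference$-judgment is again checked at $\Dereference$. I would also first record the routine monotonicity lemma that $\der \Gamma t m$ is preserved when the modes in $\Gamma$ are raised (and fresh variables are added), since it streamlines the converse direction.

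For the forward implication the crucial structural remark is that the subsumption rule cannot have concluded $\der \Gamma t \Dereference$: its premise would need a mode strictly above $\Dereference$, and none exists. Hence the last rule is dictated by the shape of $t$. The variable case is where maximality is essential: any derivation of $\der \Gamma x \Dereference$ needs $\Gamma(x) \succeq \Dereference$, that is $\Gamma(x) = \Dereference$. For $\lam x {t_0}$, $\app t u$, $\constr K {\fam i {t_i}}$ and $\match t h$ each premise rechecks an immediate subterm at a mode $\mcomp \Dereference c = \Dereference$; the induction hypothesis then forces the free variables of that subterm to $\Dereference$, the $\envsum$ and $\envbigsum$ merges preserve $\Dereference$ because it is the maximum, and removing $\lambda$- or pattern-bound variables only discards names that are not free in $t$. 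The $\match$ case closes by appealing to the companion statement for $\derclause \Gamma h \Dereference$, proved by the same induction on its two rules.

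For the converse I would build the derivation directly by induction on $t$, assigning $\Dereference$ to every bound variable introduced along the way. At a variable this is immediate; at the other non-recursive constructs it uses $\mcomp \Dereference c = \Dereference$ to discharge each premise from the hypothesis, taking the environment of each subterm to again send its free variables to $\Dereference$ (legitimate, since the free variables of the subterms lie among those of $t$ together with the freshly-$\Dereference$d bound variables). Equivalently, one can derive $t$ at $\Dereference$ with its principal environment and then raise every free variable to $\Dereference$ by monotonicity, which is sound precisely because $\Dereference$ is the top mode. Either way this direction is routine, modulo the standing assumption that the recursion in $t$ is well-formed, so that the side conditions $\fam {i,j} {m_{i,j} \preceq \Guard}$ of the binding judgment are met.

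The genuine obstacle is the $\letrecin b u$ case. There the body $u$ is checked at $\Dereference$, but the definitions are analysed at $\Return$ by the binding judgment, and their contribution to the conclusion passes through the eager-evaluation floor $m'_i = \max(m_i, \Guard)$ and the composite environments $\mcomp {m'_i} {\Gamma_i}$, where the $\Gamma_i$ are the least solutions of the mutual-recursion fixed point $\Gamma'_i = \envsum {\Gamma_i} {\envbigsum {\fam j {\mcomp {m_{i,j}} {\Gamma'_j}}}}$. The plan is to apply the induction hypothesis to $u$ to pin down the modes $m_i$ at which the bound names are used — invoking \myfullref{Lemma}{lem:invert-ignore} to identify those that are genuinely absent from $u$ — and then to argue that the ambient $\Dereference$ mode propagates through $m'_i$ into $\mcomp {m'_i} {\Gamma_i}$, so that every free variable occurring in any $t_i$ is forced to $\Dereference$. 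Reconciling the eager-evaluation adjustment and the fixed-point definition of the $\Gamma_i$ with this uniform $\Dereference$ conclusion is the step I expect to demand the most care, and it is the one place where the interaction between delayed uses inside the definitions and the surrounding dereferencing context must be checked explicitly.
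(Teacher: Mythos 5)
Your handling of every construct other than \code{let rec} is correct, and it is essentially the argument the paper has in mind --- note that the paper never writes this proof out: the appendix only asserts that these inversion lemmas ``can be established by simple inductions on typing derivations'', so your write-up is already more detailed than the paper's. Your instinct that the $\letrecin b u$ case is the genuine obstacle is also correct, but the plan you sketch for it --- arguing that ``the ambient $\Dereference$ mode propagates through $m'_i$ into $\mcomp {m'_i} {\Gamma_i}$'' --- cannot be carried out, because the forward implication is \emph{false} in that case. The ambient mode $m$ simply does not occur in the premises governing the right-hand sides: the binding judgment checks each $t_i$ at $\Return$, and its environments re-enter the conclusion only through $\mcomp {m'_i} {\Gamma_i}$, where $m'_i = \max(m_i, \Guard)$ depends solely on the mode $m_i$ of $x_i$ in the body $u$. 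When $x_i$ does not occur in $u$, the rule permits $m_i = \Ignore$, hence $m'_i = \Guard$, and composition with $\Guard$ on the left preserves $\Delay$ (recall $\mcomp \Guard \Delay = \Delay$). Concretely, the judgment
$\der {\of z \Delay, \of w \Dereference} {\letrecin {x = \lam y z} w} \Dereference$
is derivable: the binding judgment checks $\lam y z$ at $\Return$, whose premise checks $z$ at $\mcomp \Return \Delay = \Delay$, yielding the environment $\of z \Delay$ for $x$; taking $m_x = \Ignore$ in the body gives $m'_x = \Guard$, and the conclusion environment is $\envsum {\mcomp \Guard {(\of z \Delay)}} {(\of w \Dereference)} = {\of z \Delay, \of w \Dereference}$. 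Here $z$ is free in the term yet mapped to $\Delay$, refuting the implication you are trying to prove.

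So the step you expected to ``demand the most care'' is not delicate but impossible: the gap is in the statement itself (which the paper asserts without proof), not in your induction, and no amount of bookkeeping with \myfullref{Lemma}{lem:invert-ignore} and the fixed-point equations will close it. What your argument does establish is the lemma for the \code{let rec}-free fragment, where the driving fact ``every premise of a $\Dereference$-judgment is again at $\Dereference$'' genuinely holds. Your parenthetical caveat about the converse direction is also well taken and is a second, independent crack in the ``exactly when'': for a vicious binding such as $\letrecin {x = \app x x} w$, the environment $\of w \Dereference$ maps every free variable to $\Dereference$, yet no derivation exists at any mode, since the binding judgment would need $x$ at $\mcomp \Return \Dereference = \Dereference \npreceq \Guard$. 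A repaired statement must therefore either restrict attention to terms without \code{let rec}, or be weakened to the form in which the soundness proof actually needs it (a claim about a designated variable that can be reconciled with subsumption and \myfullref{Lemma}{lem:substitution}); as written, the lemma does not admit a proof.
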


\begin{lemma}[Environment flow]
  \label{lem:environment-flow}
  If a derivation $\der \Gamma t m$ contains a sub-derivation $\der {\Gamma'} {t'} {m'}$, then
  $
    \forall x \in \Gamma,
    \;
    \Gamma(x) \succeq \Gamma'(x)
  $.
\end{lemma}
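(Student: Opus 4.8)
The plan is to prove the statement by induction on the structure of the derivation of $\der \Gamma t m$. First I would reduce the general (transitive) statement to a one-step version: it suffices to show that whenever $\der{\Gamma'}{t'}{m'}$ is an \emph{immediate} premise of the rule concluding $\der \Gamma t m$, we have $\Gamma(x) \succeq \Gamma'(x)$ for every variable $x$ in the domain of $\Gamma$ (reading absent variables as $\Ignore$). Since $\succeq$ is a total order, hence transitive, the claim for an arbitrary nested sub-derivation then follows by chaining the one-step inequalities along the path from the root to that sub-derivation. The only subtlety in this reduction is the handling of binders: each rule that descends under a binder ($\lam x t$, a match clause $\clause{\constr K{\fam i {x_i}}} u$, or the names $x_i$ of a $\letrecin b u$) extends the premise environment with fresh variables. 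By the Barendregt convention these fresh names are distinct from every $x$ in the domain of $\Gamma$, so they never interfere: a fixed such $x$ is never (re)bound deeper in the derivation, and the quantifier $\forall x \in \Gamma$ imposes no constraint on the freshly-bound variables themselves.

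For the one-step claim, most rules are immediate because every rule assembles the conclusion environment from its premise environments using only the pointwise-maximum operators $\envsum{\cdot}{\cdot}$ and $\envbigsum{\cdot}$ (possibly after a pointwise composition $\mcomp{m'}{\cdot}$). Concretely: the subsumption rule leaves the environment unchanged; the variable rule has no premises; for $\lam x t$ the premise environment agrees with $\Gamma$ on every non-fresh variable; and for $\app t u$, $\constr K {\fam i {t_i}}$, and $\match t h$ the conclusion is a merge $\envsum{\Gamma_t}{\Gamma_u}$ or $\envbigsum{\fam i {\Gamma_i}}$, which by definition dominates each premise environment pointwise. The clause judgments behave the same way, merging clause environments by $\envbigsum$ and extending by fresh pattern variables. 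In all of these the desired inequality $\Gamma(x) \succeq \Gamma'(x)$ holds by the very definition of pointwise maximum.

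The case I expect to be the real obstacle is the mutually-recursive definition $\letrecin b u$, because of the interplay between the binding judgment and mode composition. The body premise $\der{\Gamma_u, \fam i{\of{x_i}{m_i}}} u m$ is unproblematic (the conclusion dominates $\Gamma_u$ through the outer $\envsum$). The difficulty is the binding judgment: its term premises $\der{\Gamma_i, \fam j{\of{x_j}{m_{i,j}}}}{t_i}{\Return}$ check each $t_i$ at the \emph{fixed, strong} mode $\Return$, whereas the conclusion folds the resulting environments in only after the demotion $\mcomp{m'_i}{\Gamma'_i}$ with $m'_i = \max(m_i, \Guard) \succeq \Guard$. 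So I would have to show that $\mcomp{m'_i}{\Gamma'_i}$, and hence the whole conclusion environment, still dominates the raw binding environment $\Gamma_i$ on the ambient variables. This must be argued from the composition table together with the fixed-point equations $\Gamma'_i = \envsum{\Gamma_i}{\envbigsum{\fam j {\mcomp{m_{i,j}}{\Gamma'_j}}}}$ and the side condition $\fam{i,j}{m_{i,j} \preceq \Guard}$. The genuinely delicate point is the demotion $\mcomp \Guard \Return = \Guard$, which lies strictly below $\Return$: a binding right-hand side may legitimately be checked at mode $\Return$ in a context whose overall demand on an ambient variable is only $\Guard$ (for instance when $x_i$ merely aliases an ambient variable but is itself only guarded in the body $u$). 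I would therefore expect to need either an auxiliary invariant isolating the modes $\succeq \Return$ up to this benign $\Return$-to-$\Guard$ collapse, or a direct case analysis on $m'_i$ and on the value of $\Gamma'_i$ at each ambient variable. This is where essentially all of the effort concentrates, and where the precise formulation of the statement most needs care.
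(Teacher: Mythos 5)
Your overall scheme --- one-step domination of each premise environment by the conclusion environment, chained by transitivity, with the Barendregt convention keeping bound variables out of the way --- is exactly the ``simple induction on typing derivations'' that the paper appeals to; the paper gives no more detailed proof than that one-line remark, and your treatment of the variable, subsumption, abstraction, application, constructor, match and clause rules is correct and complete.

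The gap is the \code{let rec} case, which you flag but do not close --- and it cannot be closed in the form you hope, because the one-step claim, and indeed the lemma as literally stated, is \emph{false} there; the aliasing scenario you describe is already a counterexample. Take $t = \letrecin {z = y} {\constr K {(z)}}$. The binding judgment for $z = y$ has the variable-rule premise $\der {\of y \Return, \of z \Ignore} y \Return$, with fixed point $\Gamma'_z = \of y \Return$; the body premise types $z$ at mode $m_z = \Guard$, so $m'_z = \max(m_z, \Guard) = \Guard$, and the conclusion environment of the \code{let rec} rule is $\mcomp \Guard {(\of y \Return)} = \of y \Guard$. Thus $\der {\of y \Guard} t \Return$ is derivable and contains a sub-derivation whose environment maps $y$ to $\Return$, yet $\Guard \prec \Return$. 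Consequently no ``direct case analysis on $m'_i$ and $\Gamma'_i$'' can rescue the stated inequality; the only viable route is your ``auxiliary invariant'', which really amounts to proving a corrected statement, such as $\Gamma(x) \succeq \min(\Gamma'(x), \Guard)$. That version does go through by your induction: the relation is transitive, it is preserved by the binding rule because $\mcomp {m'} {m''} \succeq \min(m'', \Guard)$ whenever $m' \succeq \Guard$ (the only demotion in the composition table is $\mcomp \Guard \Return = \Guard$, and $\Guard$ never demotes further), and it coincides with the stated lemma on any path that crosses no binding judgment. So your closing worry that the precise formulation of the statement needs care is exactly right, and it is not cosmetic: the paper's own invocation of this lemma in the subject-reduction proof (to get $m_\hole \preceq m_x$ in the lookup case) uses the full-strength claim, which fails in the same way when the looked-up occurrence sits inside a nested \code{let rec} binding.
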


\begin{lemma}[Weakening]
  \label{lem:weakening}
  If $\der \Gamma t m$ holds then $\der {\envsum \Gamma {\Gamma'}} t m$ also holds.
\end{lemma}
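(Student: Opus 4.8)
The plan is to prove the statement by a straightforward induction on the derivation of $\der \Gamma t m$, carried out simultaneously with the analogous weakening statements for the clause judgments $\derclause \Gamma h m$ and $\derclause \Gamma {\clause p u} m$ (which I need for the $\match t h$ case). The only algebraic facts required are that $\envsum{\cdot}{\cdot}$ is the pointwise join on modes -- hence commutative, associative, idempotent, monotone, and with the all-$\Ignore$ environment as unit -- together with the observation that weakening never changes the mode $m$, only the environment; so none of the mode-composition operations ($\mcomp m \Delay$, $\mcomp m \Dereference$, $\mcomp m \Guard$) appearing in the premises are affected, and no property of composition is needed. In particular the (complex) binding judgment $\derbinding {\fam i {\of {x_i} {\Gamma_i}}} b$ is never itself weakened.

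For the structural rules (application, constructor, pattern-matching clauses) the idea is to push the extra environment $\Gamma'$ into each premise. Concretely, using idempotence and commutativity I rewrite a conclusion environment $\envsum{(\envsum{\Gamma_t}{\Gamma_u})}{\Gamma'}$ as $\envsum{(\envsum{\Gamma_t}{\Gamma'})}{(\envsum{\Gamma_u}{\Gamma'})}$ (and $\envsum{\envbigsum{\fam i {\Gamma_i}}}{\Gamma'} = \envbigsum{\fam i {\envsum{\Gamma_i}{\Gamma'}}}$ for a non-empty family), apply the induction hypothesis to each sub-derivation to weaken $\Gamma_t, \Gamma_u$ (resp.\ each $\Gamma_i$) by $\Gamma'$, and re-apply the same rule. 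For the binding cases -- the $\lam x t$ rule, the single-clause rule, and the $\letrecin b u$ rule -- the Barendregt convention lets me assume the bound names do not occur in $\Gamma'$, so that $\Gamma'$ commutes past the binder: for instance $\envsum{(\Gamma, \of x {m_x})}{\Gamma'} = (\envsum \Gamma {\Gamma'}), \of x {m_x}$, and the hypothesis applies to the body premise directly.

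The $\letrecin b u$ case deserves slightly more care, since its conclusion environment $\envsum{(\envsum{\envbigsum{\fam i {\mcomp{m'_i}{\Gamma_i}}}}{\Gamma_u})}{\Gamma'}$ mixes contributions of the binding judgment (the $\Gamma_i$ and the modes $m'_i$) with the body environment $\Gamma_u$. Here the plan is to leave the binding-judgment premise and the $m'_i$ untouched, weaken only the body premise $\der {\Gamma_u, \fam i {\of {x_i}{m_i}}} u m$ to $\der {(\envsum{\Gamma_u}{\Gamma'}), \fam i {\of {x_i}{m_i}}} u m$ by the induction hypothesis (again using $x_i \notin \Gamma'$), and re-apply the rule; a single use of associativity of $\envsum$ then shows the resulting environment is exactly the one displayed above.

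The genuinely interesting -- and in my view the only delicate -- case is the variable rule, because it is the one place where the mode attached to a variable can strictly increase. Starting from $\der {\Gamma, \of x m} x m$, set $\hat m = \max(m, \Gamma'(x))$ (reading $\Gamma'(x) = \Ignore$ when $x \notin \Gamma'$); the weakened environment maps $x$ to $\hat m \succeq m$. The variable rule, whose ambient environment may be arbitrary, gives $\der {\envsum{(\Gamma, \of x m)}{\Gamma'}} x {\hat m}$, and when $\hat m \succ m$ a single application of the subsumption rule -- whose admissible direction is precisely $\hat m \succ m$ under the paper's order -- recovers the target mode $m$; the subsumption case of the induction is then immediate. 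I expect no real obstacle beyond this book-keeping. The subtlety to watch is that weakening must be routed through the variable rule and subsumption rather than through any environment padding on the other rules (whose conclusion environments are fixed by the merge operations); in particular, one should verify that the latitude built into the variable rule indeed suffices to cover every leaf.
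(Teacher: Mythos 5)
Your overall strategy is the one the paper intends --- the paper itself offers no more detail than ``simple inductions on typing derivations'' --- and you have correctly isolated the one genuine insight, which the paper only hints at in the remark following the lemma: extra environment entries can enter a derivation only at variable leaves, where the ambient context of the variable rule is arbitrary, and the resulting mode increase on $x$ (to $\max(m, \Gamma'(x)) \succeq m$) is repaired by a single application of subsumption, whose direction under the paper's ordering is exactly the one needed. The binder cases via the Barendregt convention, the $\letrecin b u$ case (weakening only the body premise while leaving the binding judgment and the $m'_i$ untouched), and the subsumption case are all handled correctly.

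There is, however, one concrete gap, located exactly at the point you bracket off: you invoke the identity $\envsum {\envbigsum {\fam i {\Gamma_i}}} {\Gamma'} = \envbigsum {\fam i {\envsum {\Gamma_i} {\Gamma'}}}$ ``for a non-empty family'' and never return to the empty case, yet the constructor rule does apply to empty families --- nullary constructors such as $\kwd{True}$ are terms of the language (the paper's own desugaring of conditionals produces them). This is not mere book-keeping, because the case cannot be closed with the rules as written: for a variable-free term such as $\kwd{True}$, the constructor rule forces the conclusion environment to be the empty merge $\envbigsum \emptyset$, subsumption never changes an environment, and there is no variable leaf through which padding could flow; hence $\der \emptyset {\kwd{True}} \Return$ is derivable while $\der {\of y \Dereference} {\kwd{True}} \Return$ is not. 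Read literally, the lemma fails for variable-free terms, and your induction breaks precisely where the statement does. To finish, you must either strengthen the constructor rule with an ambient environment in its conclusion (mirroring the variable rule, after which your argument goes through essentially verbatim), or add a proviso to the statement restricting $\Gamma'$ on variable-free terms; the latter is harmless for the paper's later uses of weakening in the subject-reduction proof, which are only non-trivial for terms containing variables. Either way, this should be stated explicitly rather than hidden in a side condition on the distribution identity.
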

(Weakening would not be admissible if our variable rule imposed
$\Ignore$ on the rest of the context.)

\begin{lemma}[Substitution]
  \label{lem:substitution}
  If $\der {\Gamma, x : {m_u}} t m$ and $\der {\Gamma'} u {m_u}$ hold,
  then $\der {\envsum \Gamma {\Gamma'}} {\subst t {\by u x}} m$ holds.
\end{lemma}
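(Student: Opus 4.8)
The plan is to prove a strengthened statement by simultaneous induction over the three judgment forms. Alongside the stated term principle I would prove matching substitution principles for the clause judgment $\derclause \Gamma h m$ and the binding judgment $\derbinding {\fam i {\of {x_i} {\Gamma_i}}} b$, since the $\match t h$ and $\letrecin b u$ cases of the term judgment recurse into them; the replacement $u$ and its typing $\der {\Gamma'} u {m_u}$ are threaded unchanged through the whole induction, only ever adjusted by the subsumption rule. The induction is on the structure (equivalently, the height) of the derivation of the first hypothesis $\der {\Gamma, \of x {m_u}} t m$. Throughout I rely on the Weakening lemma (\Cref{lem:weakening}) and on the algebra of modes in \Cref{fig:modes}: the merge $\envsum \cdot \cdot$ is the pointwise maximum, hence associative, commutative and \emph{idempotent}; mode composition $\mcomp m \cdot$ is monotone; and, because modes are totally ordered, every monotone map preserves binary maxima, so composition distributes over merge.

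The base and structural cases follow a uniform pattern. In the variable case with $t = x$, the variable rule forces $m = m_u$ and substitution yields $u$, so the goal $\der {\envsum \Gamma {\Gamma'}} u m$ is exactly $\der {\Gamma'} u {m_u}$ after Weakening (using commutativity of merge); when $t = y \neq x$ the result is immediate from the variable rule and subsumption, since $y$'s mode in $\envsum \Gamma {\Gamma'}$ dominates $m$. The subsumption case is discharged by applying the induction hypothesis and then re-applying subsumption. For the compositional rules ($\app t u$, $\constr K {\fam i {t_i}}$, and the scrutinee of $\match t h$), the mode $m_u$ registered for $x$ in the conclusion is the merge of the modes at which the immediate subterms use $x$; as each such mode $m'$ satisfies $m' \preceq m_u$, the subsumption rule supplies a derivation $\der {\Gamma'} u {m'}$ for every subterm, the induction hypothesis rewrites that subterm, and the several copies of $\Gamma'$ thereby introduced collapse to a single $\Gamma'$ by idempotence of merge — recovering precisely $\envsum \Gamma {\Gamma'}$. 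The binder cases ($\lam x t$, the pattern variables of a clause, and the recursively-bound names of $b$) are handled by the Barendregt convention, which guarantees these bound variables differ from $x$ and do not occur free in $u$, so that extending an environment commutes with merging in $\Gamma'$.

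The main obstacle is the binding judgment, whose third premise fixes the output environments as the least solution of the system $\Gamma'_i = \envsum {\Gamma_i} {\envbigsum {\fam j {\mcomp {m_{i,j}} {\Gamma'_j}}}}$. After substituting $u$ into each $t_i$, the ambient environments $\Gamma_i$ are replaced by $\envsum {\Gamma_i} {\Gamma'}$ (with $x$ removed from $\Gamma_i$), so I must show that the least solution of the perturbed system is the old solution merged with $\Gamma'$, i.e.\ $\envsum {\Gamma'_i} {\Gamma'}$. The crucial fact is that the second premise $m_{i,j} \preceq \Guard$ forces $\mcomp {m_{i,j}} {\Gamma'} \preceq \Gamma'$ pointwise — exactly the absorption already noted in the text for the $\Gamma_j$ — so that $\Gamma'$ behaves as a stable ``floor'': using distributivity of composition over merge one checks directly that $\envsum {\Gamma'_i} {\Gamma'}$ solves the perturbed system, while monotonicity gives a sandwich (the perturbed least solution lies above $\Gamma'$ and above the unperturbed least solution $\Gamma'_i$, hence above $\envsum {\Gamma'_i} {\Gamma'}$, yet also below the exhibited solution), forcing equality. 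Feeding this binding judgment and the induction hypothesis on the body into the $\letrecin b u$ rule then yields an environment that simplifies, once more by idempotence and associativity of merge, to the required $\envsum \Gamma {\Gamma'}$. Establishing this fixed-point preservation, rather than any single syntactic case, is where the real work lies.
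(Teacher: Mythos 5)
Your overall architecture --- simultaneous induction on the three judgment forms, threading the typing of $u$ through the induction, idempotence of $\envsum \cdot \cdot$ for the structural rules, and a fixed-point perturbation argument for the binding judgment --- is the right shape, and since the paper offers no detailed proof of this lemma (the appendix only asserts it follows ``by simple inductions on typing derivations''), your write-up is the more explicit of the two. But it contains a genuine gap, and it sits exactly in the case you flag as the hard one. Your central invariant --- that the typing $\der {\Gamma'} u {m_u}$ is ``only ever adjusted by the subsumption rule'', because every sub-derivation uses $x$ at a mode $\preceq m_u$ --- is true for application, constructor and match, whose conclusion environments are plain merges of the premise environments, but it is \emph{false} for \code{let rec}. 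The $\letrecin b u$ rule does not merge the binding environments, it composes them, contributing $\mcomp {m'_i} {\Gamma'_i}$, and composition can strictly \emph{lower} a mode: $\mcomp \Guard \Return = \Guard$. Since the binding judgment checks every right-hand side at mode $\Return$, a binding whose right-hand side returns $x$ forces $x$ to carry mode $\Return$ in that premise, while the \code{let rec} conclusion may record $x$ only at $\Guard$. Concretely, take $t = \letrecin {y = x} {\lam w w}$: the binding premise is $\der {\of x \Return, \of y \Ignore} x \Return$, and since $y$ is unused in the body, $m'_y = \max(\Ignore, \Guard) = \Guard$, so the conclusion is $\der {\of x \Guard} t m$, i.e.\ $m_u = \Guard$. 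Now substitute $u = z$ with $\der {\of z \Guard} z \Guard$. Your induction step needs $\der {\of z \Guard} z \Return$ to rebuild the binding premise, and that judgment is underivable: subsumption only lowers conclusion modes, it cannot raise $\Guard$ to $\Return$. The lemma itself survives --- $\der {\of z \Guard} {\letrecin {y = z} {\lam w w}} m$ does hold --- but only via the premise $\der {\of z \Return, \of y \Ignore} z \Return$, whose environment is \emph{not} of the form $\envsum {\Gamma_i} {\Gamma'}$ that your perturbed fixed-point system postulates; so the perturbed system you analyze is not inhabited by derivations at all, and the sandwich argument never gets started.

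The repair needs machinery beyond subsumption and lattice algebra. A short case analysis shows the problematic configuration is exactly $\Gamma_i(x) = \Return$ with $m_u = \Guard$ and $m'_i = \Guard$ (any other offending mode for $x$ in a premise forces $m_u$ high enough for subsumption to apply). The way out is to retype $u$ at $\Return$ in a possibly \emph{larger} environment and let the composition $\mcomp \Guard \cdot$ in the \code{let rec} conclusion bring that environment back below $\Gamma'$: every term is typeable at $\Dereference$ in the environment mapping its free variables to $\Dereference$ (\myref{Lemma}{lem:invert-dereference}), hence at $\Return$ by subsumption, so by \myref{Theorem}{thm:principal-environments} there is a minimal environment $\Delta$ with $\der \Delta u \Return$; apply the induction hypothesis to the offending binding using this derivation; then the minimality part of \myref{Theorem}{thm:localization} gives that $\mcomp \Guard \Delta$ is minimal for $u : \mcomp \Guard \Return = \Guard$, hence $\mcomp \Guard \Delta \preceq \Gamma'$, so the extra contribution $\mcomp {m'_i} \Delta = \mcomp \Guard \Delta$ is absorbed by the copy of $\Gamma'$ you already merged in. (Neither theorem's proof uses substitution, so there is no circularity, even though both are stated after this lemma in the paper.) Two smaller repairs are also needed: when $m_u = \Dereference$ the term $\mcomp {m'_i} {\Gamma'}$ can exceed $\Gamma'$ on variables not free in $u$, so you should first restrict $\Gamma'$ to $\mathrm{fv}(u)$ and recover the general statement by weakening; and note that the binding rule's third premise only demands \emph{some} solution of the recursive equations, so exhibiting your perturbed solution suffices --- the leastness half of your sandwich argument is not actually required.
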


\begin{lemma}[Subsumption elimination]
  Any derivation in the system can be rewritten so that the
  subsumption rule is only applied with the variable rule as premise.
\end{lemma}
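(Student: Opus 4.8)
The plan is to prove this by a permutation argument that pushes each use of the subsumption rule upward through the surrounding syntactic rule until it comes to rest directly above a variable leaf. Call a derivation \emph{normal} when subsumption is applied only with the variable rule as its premise; the goal is then to show every derivation can be rewritten into a normal one with the same conclusion. First I would isolate the single algebraic fact that makes the permutation go through: \emph{mode composition is monotone in its left argument}, i.e.\ $m \succeq m'$ implies $\mcomp m {m''} \succeq \mcomp {m'} {m''}$ for every $m''$. This is a finite check against the composition table of \myref{Figure}{fig:modes}; the only instances actually needed are $m'' \in \{\Delay, \Guard, \Dereference\}$, the second arguments appearing in the premises of the $\lambda$-, constructor-, application-, and match-rules.

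The heart of the argument is a \emph{coercion lemma}, proved by mutual induction over normal derivations of the term and clause judgments: if $\der \Gamma t m$ (resp.\ $\derclause \Gamma h m$) has a normal derivation and $m \succeq m'$, then $\der \Gamma t {m'}$ (resp.\ $\derclause \Gamma h {m'}$) has one too, \emph{with the same environment $\Gamma$}. Preserving $\Gamma$ is exactly what the subsumption rule does, and it is what keeps the inductive cases uniform. In the base case $t = x$, a normal derivation is either a bare variable rule or a subsumption above a variable rule; in the first case I append one subsumption step, and in the second I merge the two subsumptions into one using transitivity of $\succ$ — either way the result is again subsumption-above-variable, hence normal. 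For each non-variable form the last rule is the corresponding syntactic rule, and I coerce its premises and reapply it: for $\lam x t$ I coerce the body from $\mcomp m \Delay$ down to $\mcomp {m'} \Delay$ (legitimate since $\mcomp m \Delay \succeq \mcomp {m'} \Delay$ by monotonicity), and symmetrically for the constructor and application rules at $\mcomp m \Guard$ and $\mcomp m \Dereference$; the match rule additionally invokes the clause half of the induction, and the $\letrecin b u$ rule coerces only its body premise, leaving the binding judgment (which carries no mode) untouched.

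With the coercion lemma in hand, the theorem follows by a second induction on the shape of an arbitrary derivation (mutually over the term, clause, and binding judgments). A variable leaf is already normal; a syntactic rule is handled by normalizing each premise with the induction hypothesis and reapplying the rule, which introduces no new subsumption; and a subsumption step $\der \Gamma t m$ with $m \succ m'$ is discharged by first normalizing its premise and then invoking the coercion lemma, which yields a normal derivation of $\der \Gamma t {m'}$ directly.

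I expect the main obstacle to be organizational rather than conceptual: getting the two inductions to interlock cleanly (the mutual term/clause coercion inside the match and single-clause rules) and confirming that every case of the coercion lemma genuinely preserves the \emph{conclusion} environment. The $\letrecin b u$ rule is the delicate one, since its conclusion environment is a nontrivial function of the premises — built from the recursive-use modes $m_i$, the saturated modes $\max(m_i,\Guard)$, and the composed environments $\mcomp {m'_i} {\Gamma_i}$. One must check that coercing only the body premise, while keeping its environment fixed, leaves all of those inputs unchanged and hence reconstructs the same conclusion environment. Once monotonicity of composition is established, all remaining verifications are routine table lookups.
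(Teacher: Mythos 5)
Your proposal is correct: left-monotonicity of composition ($m \succeq m'$ implies $\mcomp m {m''} \succeq \mcomp {m'} {m''}$, checked against the table) is exactly the fact that lets subsumption permute upward through each syntactic rule, your coercion lemma preserves conclusion environments in every case including the delicate \code{let rec} rule (whose conclusion environment depends only on the binding judgment and the $m_i$ in the body premise's environment, both untouched by coercion), and the mutual term/clause/binding induction is organized properly. The paper gives no explicit proof of this lemma --- it is grouped among the results ``established by simple inductions on typing derivations'' --- and your argument is precisely the natural elaboration of that intended induction.
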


\begin{theorem*}[\outlineref{thm:principal-environments}]
  Whenever both $\der {\Gamma_1} t m$ and $\der {\Gamma_2} t m$ hold,
  then $\der {\min(\Gamma_1, \Gamma_2)} t m$ also holds.
\end{theorem*}

\begin{proof}
  The proof first performs subsumption elimination on both
  derivations, and then by simultaneous induction on the results. The
  elimination phase makes proof syntax-directed, which guarantees that
  (on non-variables) the same rule is always used on both sides in
  each derivation.
\end{proof}

This results tells us that whenever $\der \Gamma t m$ holds, then it
holds for a minimal environment $\Gamma$ --- the minimum of all
satisfying $\Gamma$.

\begin{definition}[Minimal environment]
  $\Gamma$ is minimal for $t : m$ if $\der \Gamma t m$ and,
  for any $\der {\Gamma'} t m$ we have $\Gamma \preceq \Gamma'$.
\end{definition}

In fact, we can give a precise characterization of ``minimal''
derivations, that uniquely determines the output of our backwards analysis
algorithm.

\begin{definition}[Minimal binding rule]
  An application of the binding rule is \emph{minimal} exactly when
  the choice of $\Gamma'_i$ is the least solution to the recursive
  equation in its third premise.
\end{definition}

\begin{definition}[Minimal derivation]
  A derivation is \emph{minimal} if it does not use the subsumption
  rule, each binding rule is minimal and, in the conclusion
  $\der \Gamma x m$ of each variable rule, $\Gamma$ is minimal for
  $x : m$.
\end{definition}

\begin{definition}[Minimization]
  Given a derivation $\deriv D {\der \Gamma t m}$, we define the
  (minimal) derivation $\minimal{D}$ by:
  \begin{itemize}
  \item Turning each binding rule into a minimal version of this
    binding rule --- this may require applying
    \myfullref{Lemma}{lem:weakening} to the
    \code{let rec} derivation below.
  \item Performing subsumption-elimination to get another
    derivation of $\der \Gamma t m$.
  \item Replacing the context of each variable rule by the minimal
    context for this variable --- this does not introduce new
    subsumptions.
  \end{itemize}
\end{definition}

\begin{fact}[Minimality]
  If $\deriv D {\der \Gamma t m}$ and $\deriv {\minimal D} {\der {\Gamma_m} t m}$,
  then $\Gamma_m \preceq \Gamma$.
\end{fact}

\begin{lemma}[Stability]\label{lem:stability}
  If $D$ is a minimal derivation, then $\minimal{D} = D$.
\end{lemma}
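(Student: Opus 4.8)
The plan is to read \emph{minimality} as the conjunction of three independent local conditions on a derivation --- (a) every binding rule is minimal (its $\Gamma'_i$ is the \emph{least} solution of the recursive equation in its third premise), (b) the subsumption rule is never used, and (c) in the conclusion $\der \Gamma x m$ of every variable rule, $\Gamma$ is minimal for $x : m$ --- and to read the minimization operation $\minimal{\cdot}$ as the ordered composite of three transformations $T_1$ (binding minimization), $T_2$ (subsumption elimination), $T_3$ (variable-context minimization), where $T_1$, $T_2$, $T_3$ target conditions (a), (b), (c) respectively. I would then show that whenever its target condition already holds, each $T_k$ is literally the identity on derivation trees, and that, being the identity, it preserves the other two conditions; since a minimal $D$ satisfies all three conditions, the composite $\minimal D = T_3(T_2(T_1(D)))$ returns $D$ unchanged.

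First I would treat $T_1$. By condition (a) each binding rule of $D$ already carries the least solution $\Gamma'_i$ of its third-premise equation, so replacing $\Gamma'_i$ by the least solution leaves it unchanged. In particular the output environment $\fam i {\of{x_i}{\Gamma'_i}}$ of the binding judgment is unchanged, so the environment in the conclusion of the enclosing \code{let rec} rule is unchanged, and the weakening of the derivation below (via \myfullref{Lemma}{lem:weakening}) that $T_1$ may in general require is never triggered. Hence $T_1(D) = D$. I would note here that uniqueness of the least solution --- guaranteed by the monotone-fixed-point remark accompanying the binding rule --- is what makes ``minimal binding rule'' pin down a single $\Gamma'_i$, so that the target of $T_1$ is unambiguous.

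Next I would treat $T_2$ and $T_3$ on $T_1(D) = D$. Since $D$ satisfies (b) it contains no subsumption step; subsumption elimination only moves subsumptions toward the leaves (ultimately leaving them just above variable rules), so on a subsumption-free input it has nothing to act on and returns its argument verbatim, giving $T_2(D) = D$. Finally $T_3$ replaces the context of each variable rule by the minimal context for its conclusion; but by condition (c) those contexts are already minimal, so each replacement is the identity and introduces no new subsumption. Therefore $T_3(D) = D$, and composing, $\minimal D = D$.

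The main obstacle is not the high-level bookkeeping but establishing that each $T_k$ is the identity as an \emph{equality of derivation trees}, rather than merely producing an equivalent derivation of the same judgment. This forces one to reason from the precise operational definitions of the three transformations --- most delicately for subsumption elimination, where one must check that the elimination algorithm really returns its input unchanged when no subsumption is present (an easy structural induction, but one that depends on how the algorithm is phrased), and for $T_1$, where one must confirm that ``already least'' implies the weakening branch is never entered. Once these verbatim-identity facts are in hand, the ordering of the three steps causes no difficulty, because each step, being the identity on a minimal $D$, trivially preserves the conditions consumed by the later steps.
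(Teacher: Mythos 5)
Your proof is correct and matches the paper's (implicit) treatment: the paper states Stability without proof, as one of the appendix results that ``can be established by simple inductions on typing derivations,'' and your phase-by-phase check that each of the three minimization steps (binding minimization, subsumption elimination, variable-context minimization) acts as the identity on a derivation already satisfying the corresponding minimality condition is exactly the routine argument intended. Your handling of the one genuinely delicate point --- that each step must return the input \emph{verbatim} as a derivation tree, which rests on uniqueness of the least solution of the binding equations, on the weakening branch never being triggered, and on subsumption elimination being vacuous on subsumption-free input --- is sound.
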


\begin{lemma}[Determinism]
  \label{lem:determinism}
  If $\deriv {D_1} {\der {\Gamma_1} t m}$
  and $\deriv {D_2} {\der {\Gamma_2} t m}$,
  then $\minimal{D_1}$ and $\minimal{D_2}$
  are the same derivation.
\end{lemma}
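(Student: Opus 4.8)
The plan is to show that a minimal derivation is completely determined by the pair $(t, m)$ of its subject term and mode, so that the equality of $\minimal{D_1}$ and $\minimal{D_2}$ follows as soon as we know that both are minimal derivations sharing the same subject and mode. By the definition of minimization, $\minimal{D_1}$ and $\minimal{D_2}$ are indeed minimal derivations (this can be cross-checked against \myfullref{Lemma}{lem:stability}), and both have subject $t$ and mode $m$. Their conclusion environments are in each case the minimal environment for $t : m$, which is unique by \myfullref{Theorem}{thm:principal-environments}, so it only remains to establish the following claim: \emph{for each $t$ and $m$ there is at most one minimal derivation of $\der \Gamma t m$.}

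I would prove the claim by structural induction on $t$. The crucial observation is that a minimal derivation never uses the subsumption rule, so it is \emph{syntax-directed}: the outermost syntactic form of $t$ selects exactly one rule, and since the mode $m$ is fixed there is no remaining freedom in the choice of rule. Moreover, every sub-derivation of a minimal derivation is again minimal, so the induction hypothesis applies to the premises. For a variable $x$, the variable rule forces the mode recorded in the context to equal $m$, and minimality of that context pins all other variables to $\Ignore$. For $\lam x t$, $\app t u$, $\constr K {\fam i {t_i}}$ and $\match t h$ the applicable rule is unique, and its premises are derivations of fixed subterms at fixed modes ($\mcomp m \Delay$, $\mcomp m \Dereference$, $\mcomp m \Guard$ respectively); by the induction hypothesis each such premise is \emph{the} unique minimal derivation, after which the conclusion environment is forced — dropping the bound variable for $\lambda$ and $\match$, and merging with $\envsum$ or $\envbigsum$ for applications and constructors.

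The main obstacle is the $\letrecin b u$ case, where the term rule cooperates with the binding judgment and a fixed point appears. Reading the binding rule right-to-left, each first premise $\der {\Gamma_i, \fam j {\of {x_j} {m_{i,j}}}} {t_i} \Return$ is, by the induction hypothesis, the unique minimal derivation of $t_i$ at mode $\Return$, and its environment determines both the ambient part $\Gamma_i$ and the modes $m_{i,j}$ of the recursively-bound variables. The definition of minimal binding rule then forces $\fam i {\Gamma'_i}$ to be the \emph{least} solution of the recursive equation in the third premise, which is unique because that equation is the fixed point of a monotone map on environments. Finally the term rule combines these with the unique minimal derivation of $u$ at mode $m$, so the resulting environment $\envsum {\envbigsum {\fam i {\mcomp {m'_i} {\Gamma_i}}}} {\Gamma_u}$ is determined. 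Care is needed to thread minimality consistently through the two judgments — in particular to check that the modes $m_{i,j}$ read off the minimal sub-derivations still satisfy the second premise $m_{i,j} \preceq \Guard$ (they can only shrink under minimization), so that the binding judgment remains derivable. With the claim established, $\minimal{D_1}$ and $\minimal{D_2}$ must coincide, which is exactly the statement.
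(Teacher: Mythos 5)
Your proof is correct and is essentially the argument the paper intends: the paper gives no explicit proof of this lemma, filing it among the results ``established by simple inductions on typing derivations'', and your structural induction --- using syntax-directedness of subsumption-free derivations, the fact that every sub-derivation of a minimal derivation is again minimal, and uniqueness of the least solution of the fixed-point equation in the minimal binding rule --- is exactly that induction, carried out correctly including the delicate $\letrecin b u$ case. One caveat: your opening claim that the conclusion environments of $\minimal{D_1}$ and $\minimal{D_2}$ are \emph{the} minimal environment for $t : m$ is precisely \myfullref{Corollary}{cor:minimality}, which the paper derives \emph{from} this determinism lemma together with \myfullref{Lemma}{lem:stability}, so leaning on it here risks circularity. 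Fortunately that step is redundant: your induction already establishes the stronger, environment-free statement that the whole minimal derivation (environment included) is determined by $(t, m)$ --- the variable case pins the environment, and every other case computes it from the premises --- so you should simply delete that opening reduction and conclude directly from your claim.
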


\begin{corollary}[Minimality equivalence]\label{cor:minimality-equivalence}
  The environment $\Gamma$ of a derivation $\der \Gamma t m$ is
  minimal for $t : m$ if and only if $\der \Gamma t m$ admits
  a minimal derivation.
\end{corollary}

\begin{proof}
  If $\Gamma$ is minimal for $t : m$, then the context
  $\Gamma_m \preceq \Gamma$ obtained by minimization must itself be
  $\Gamma$.

  Conversely, if a derivation $\deriv {D_m} {\der \Gamma t m}$ is
  minimal, then all other derivations $\der {\Gamma'} t m$ have $D_m$
  as minimal derivation by \myfullref{Lemma}{lem:stability} and
  \myfullref{Lemma}{lem:determinism}, so $\Gamma \preceq \Gamma'$ holds.
\end{proof}

\begin{theorem}[Localization]
  \label{thm:localization}
  $\der \Gamma t {m'}$ implies $\der {\mcomp m \Gamma} t {\mcomp m {m'}}$.

  Furthermore, if $\Gamma$ is minimal for $t : {m'}$, then
  $\mcomp m \Gamma$ is minimal for $t : \mcomp m {m'}$.
\end{theorem}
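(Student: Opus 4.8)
The plan is to prove the first statement by structural induction on the derivation of $\der \Gamma t {m'}$, run as a mutual induction with the analogous claim for the clause judgment ($\derclause \Gamma h {m'}$ implies $\derclause {\mcomp m \Gamma} h {\mcomp m {m'}}$); the binding judgment needs no localization clause of its own, since it is independent of the ambient mode, but it is consumed in the \code{let rec} case. Three algebraic facts about \myfullref{Figure}{fig:modes} drive every case: associativity, $\mcomp m {(\mcomp {m'} {m_0})} = \mcomp {(\mcomp m {m'})} {m_0}$; monotonicity of $\mcomp m {-}$ in its second argument; and the consequence that, because modes are totally ordered, a monotone unary composition preserves binary maxima, so $\mcomp m {-}$ distributes over $\envsum$ and $\envbigsum$. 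I will also use the pointwise identity $\mcomp m {(\Gamma, \of x {m_x})} = (\mcomp m \Gamma), \of x {\mcomp m {m_x}}$.

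The leaf and simple cases are routine. For the variable rule, $\mcomp m {-}$ turns the least context for $x : {m'}$ into the least context for $x : \mcomp m {m'}$. For subsumption, whose premise derives $t$ at some $m_0 \succ {m'}$, monotonicity gives $\mcomp m {m_0} \succeq \mcomp m {m'}$, so a (possibly trivial) subsumption recovers $\der {\mcomp m \Gamma} t {\mcomp m {m'}}$ from the hypothesis $\der {\mcomp m \Gamma} t {\mcomp m {m_0}}$. For $\lambda$, application, constructors and matching, each premise checks a subterm at a mode of the shape $\mcomp {m'} {m_0}$ with $m_0$ a fixed constant ($\Delay$, $\Dereference$ or $\Guard$); precomposing by $m$ and using associativity rewrites this as $\mcomp {(\mcomp m {m'})} {m_0}$, which is exactly the premise of the same rule at the scaled mode $\mcomp m {m'}$. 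Distributivity lets $\mcomp m {-}$ commute with the merges in the conclusion, the pointwise identity lets me discard the bound-variable entries after scaling, and the clause subterm of a match is handled by the mutual hypothesis.

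The hard part will be the \code{let rec} rule, precisely because its binding premise $\derbinding {\fam i {\of {x_i} {\Gamma_i}}} b$ is computed once and for all at mode $\Return$, so the $\Gamma_i$ cannot themselves be rescaled: the whole localization must be realized through the coefficients $m'_i$ and through $\Gamma_u$. Applying the induction hypothesis to the body premise $\der {\Gamma_u, \fam i {\of {x_i} {m_i}}} u {m'}$ gives $\der {(\mcomp m {\Gamma_u}), \fam i {\of {x_i} {\mcomp m {m_i}}}} u {\mcomp m {m'}}$, so in the rescaled rule the body modes become $\mcomp m {m_i}$ and the ambient environment becomes $\mcomp m {\Gamma_u}$. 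By associativity the target $\mcomp m {\envsum {(\envbigsum {\fam i {\mcomp {m'_i} {\Gamma_i}}})} {\Gamma_u}}$ equals $\envsum {(\envbigsum {\fam i {\mcomp {(\mcomp m {m'_i})} {\Gamma_i}}})} {(\mcomp m {\Gamma_u})}$, so it suffices to match the discarding coefficient: the rescaled rule naturally produces $\max(\mcomp m {m_i}, \Guard)$, whereas associativity asks for $\mcomp m {m'_i} = \mcomp m {\max(m_i, \Guard)}$. Distributivity over maxima reduces this to reconciling the fixed floor $\Guard$ with the ambient-scaled floor $\mcomp m \Guard$; this is the delicate step, and I expect it to require a case analysis on $m$ against the composition table, together with the constraint $m_{i,j} \preceq \Guard$ and the already-noted inequality $\mcomp {m_{i,j}} {\Gamma_j} \preceq \Gamma_j$ for the binding rule.

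For the minimality (\emph{furthermore}) claim I would argue at the level of minimal derivations. By \myfullref{Corollary}{cor:minimality}, $\Gamma$ being minimal for $t : {m'}$ means $\der \Gamma t {m'}$ admits a minimal derivation $D$. The transformation built above can be arranged to reuse each binding sub-derivation verbatim (it is mode-independent, hence already minimal), to introduce no subsumption, and to keep each variable rule minimal, since $\mcomp m {-}$ sends the pointwise-minimal context for $x : m_x$ to that for $x : \mcomp m {m_x}$; the body and clause sub-derivations are minimal by induction. Granting the floor reconciliation above, so that the rescaled \code{let rec} conclusion is the \emph{least} admissible environment, the transformed derivation is minimal, and \myfullref{Corollary}{cor:minimality} applied in the reverse direction yields that $\mcomp m \Gamma$ is minimal for $t : \mcomp m {m'}$.
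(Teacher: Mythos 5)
Your overall strategy coincides with the paper's own proof, which is only two sentences long: a structure-preserving induction on the derivation, relying on associativity of composition, monotonicity of composition in its right argument, and the fact that composition distributes over $\max$ (hence over $\envsum$ and $\envbigsum$). Your variable, subsumption, $\lambda$, application, constructor and match cases are exactly what the paper intends, and you are in places more careful than the paper, which asserts strict monotonicity ($m'_1 \prec m'_2$ implies $\mcomp m {m'_1} \prec \mcomp m {m'_2}$ --- false, e.g.\ $\mcomp \Delay \Delay = \mcomp \Delay \Guard$) where only the non-strict version holds and is needed. You also did something the paper does not do: you isolated the one case where structure preservation is genuinely in question, namely the \code{let rec} rule with its ambient-mode-independent $\Guard$ floor.

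However, that deferred step is a real gap, and the case analysis you hope for cannot close it. You need $\max(\mcomp m {m_i}, \Guard)$ to coincide with $\mcomp m {\max(m_i, \Guard)} = \max(\mcomp m {m_i}, \mcomp m \Guard)$, but $\mcomp m \Guard = \Guard$ holds only for $m \in \{\Guard, \Return\}$. For $m = \Dereference$ with $m_i = \Ignore$ the rebuilt environment sits below the target $\mcomp m \Gamma$; that direction is benign, since \myfullref{Lemma}{lem:weakening} lifts it up to the target (though the proof is then no longer purely structure-preserving). But for $m \in \{\Ignore, \Delay\}$ the rebuilt environment sits strictly \emph{above} the target, weakening cannot help, and no case analysis can succeed because the statement itself fails there with the rules of \myref{Figure}{fig:rules}: take $t \defeq \letrecin {x = y} z$. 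Then $\der {\of y \Guard, \of z \Return} t \Return$ holds, yet $\der {\of y \Delay, \of z \Delay} t \Delay$ is underivable, since up to subsumption every derivation for $t$ ends with the \code{let rec} rule, whose conclusion environment contains $\mcomp {m'_x} {\Gamma'_x}$ with $m'_x \succeq \Guard$ and $\Gamma'_x(y) \succeq \Return$, and therefore maps $y$ to at least $\Guard$ regardless of the ambient mode. The same example refutes the ``furthermore'' clause at $m = \Dereference$: the minimal environment for $t : \Dereference$ is $\of y \Guard, \of z \Dereference$, not $\mcomp \Dereference {(\of y \Guard, \of z \Return)}$. So the theorem (and with it your plan, and the paper's sketch) survives only if the outer mode is restricted to $m \succeq \Guard$ for the first clause and to $m \in \{\Guard, \Return\}$ for minimality --- which suffices for the only place the theorem is used, the subject-reduction proof, where the outer mode is the mode of an evaluation-context hole and the paper itself argues (in the proof of \myfullref{Lemma}{lem:forcing-modes}) that such modes are never $\Ignore$ or $\Delay$ --- or if the floor in the \code{let rec} rule is changed from $\Guard$ to $\mcomp m \Guard$. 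To be fair, the paper's proof glosses over exactly this point; your instinct that this is the delicate case was sound, but ``I expect a case analysis to work'' is precisely where both the argument and the statement break down.
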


\begin{proof}
  The proof proceeds by direct induction on the derivation, and does
  not change its structure: each rule application in the source
  derivation becomes the same source derivation in the result. In
  particular, minimality of derivations is preserved, and thus, by
  \myfullref{Corollary}{cor:minimality-equivalence}, minimality of
  environments is preserved.

  Besides associativity of mode composition, many cases rely on the
  fact that external mode composition preserves the mode order
  structure: $m'_1 \prec m'_2$ implies
  $\mcomp m {m'_1} \prec \mcomp m {m'_2}$, and
  $\max(\mcomp m {m'_1}, \mcomp m {m'_2})$ is
  $\mcomp m {\max(m'_1, m'_2)}$.
\end{proof}

\section{Proofs for \NoCaseChange{\myfullref{Section}{subsec:soundness}}}
\label{ann:proofs}


\begin{lemma}[\outlineref{lem:forcing-modes}]~\\
  If $\der {\Gamma, \of x {m_x}} {\plug {\forcing E} x} m$ with $m \succeq \Return$,
  then also $m_x \succeq \Return$.
\end{lemma}
\begin{proof}
  $\forcing E$ may be of the form $L$ or $\plug E {\plug {\forcing F} L}$

  In the case of binding contexts $L$ we have $m_x = \Return$ by construction.

  In the case with a forcing frame,
  ${\forcing E} = \plug E {\plug {\forcing F} L}$, let us call $m_E$
  the mode of the hole of $E$. It is immediate that the mode imposed
  by $L$ on its hole is $\Return$, and that the mode imposed by
  $\forcing F$ on its own hole is $\Dereference$, so the total mode
  $m_x$ is $\mcomp {m_E} {\mcomp \Dereference \Return}$. We can prove by
  an easy induction on $E$ that $m_E$ is not $\Ignore$ or $\Delay$ --
  those are not evaluation contexts, so we have $m_E \succeq \Guard$.
  We conclude by monotonicity of mode composition:
  \begin{mathpar}
    m_x
    = \mcomp {m_E} {\mcomp \Dereference \Return}
    \succeq \mcomp \Guard {\mcomp \Dereference \Return}
    = \Dereference
  \end{mathpar}
\end{proof}

\begin{theorem}[\outlineref{lem:vicious}]
  $\der \emptyset t \Return$ never holds for $t \in \set{Vicious}$.
\end{theorem}

\begin{proof}
  Given $\der \empty t \Return$, let us assume that $t$ is $\plug E x$
  with no value binding for $x$ in $E$, and show that $E$ is not
  a forcing context.

  We implicitly assume that all terms are well-scoped, so the absence
  of value binding means that $x$ occurs in a \code{let rec} binding
  still being evaluated somewhere in $E$: $\plug E x$ is of the
  form \newcommand{\Ein}{E_\mathsf{in}}
  \newcommand{\Eout}{E_\mathsf{out}}
  \newcommand{\trec}{t_\mathsf{rec}}
  \begin{mathpar}
    \plug E x = \plug \Eout \trec

    \trec = (\letrecin {b, y = {\plug \Ein x}, b'} u)
  \end{mathpar}
  where $x$ is bound in $b$, $b'$ or is $y$ itself.

  Given our \code{let rec} typing rule (see \myref{Figure}{fig:rules}),
  the typing derivation for $t$ contains a sub-derivation for $\trec$ of the form
  \begin{mathpar}
    \infer
    {\fam i
      {\der
        {\Gamma_i, \fam j {\of {x_j} {m_{i,j}}}}
        {t_i}
        \Return}
     \\
     \fam {i,j} {m_{i,j} \preceq \Guard}
     \\\\
     \fam i {\Gamma'_i =
       \envsum {\Gamma_i} {\envbigsum {\fam j {\mcomp {m_{i,j}} {\Gamma'_j}}}}}
   }
    {\derbinding
      {\fam i {\of {x_i} {\Gamma'_i}}}
      {\fam i {x_i = t_i}}}
  \end{mathpar}
  In particular, the premise for $\plug \Ein x$ is of the form
  $\der {\Gamma, \fam j {\of {x_j} {m_j}}} {\plug \Ein x} \Return$ with
  $\fam j {x_j \preceq \Guard}$,
  and in particular $x \preceq \Guard$
  so $x \nsucceq \Return$.

  By \myfullref{Lemma}{lem:forcing-modes},
  $\Ein$ cannot be a forcing context,
  and in consequence $E$ is not forcing either.
\end{proof}

\begin{theorem}[\outlineref{thm:subject-reduction}]
  If $\der \Gamma t m$ and $\rew t {t'}$ then $\der \Gamma {t'} m$.
\end{theorem}

\begin{proof}
  We reason by inversion on the typing derivation of redexes, first
  for head-reduction $\rewhead t {t'}$ and then for reduction
  $\rew t t'$.

  \paragraph{Head reduction} We only show the head-reduction case for functions;
  pattern-matching is very similar. We have:
  \begin{mathpar}
  \infer*
  {\infer*
    {\der {\Gamma_t, \of x {m_x}} t {\mcomp {\mcomp m \Dereference} \Delay}}
    {\infer
      {\der {\Gamma_t} {\lam x t} {\mcomp m \Dereference}}
      {\der {\Gamma_t} {\plug L {\lam x t}} {\mcomp m \Dereference}}} \\
   \der {\Gamma_v} v {\mcomp m \Dereference}}
  {\der {\envsum {\Gamma_t} {\Gamma_v}} {\app {\plug L {\lam x t}} v} m}
  \end{mathpar}

  By associativity, $\mcomp {\mcomp m \Dereference} \Delay$ is
  the same as $\mcomp m \Dereference$.

  By subsumption,
  $\der {\Gamma_t, \of x {m_x}} t {\mcomp m \Dereference}$
  implies
  $\der {\Gamma_t, \of x {m_x}} t m$.

  To conclude by using \myfullref{Lemma}{lem:substitution},
  we must reconcile the mode of the argument
  $v : \mcomp m \Dereference$ with the (apparently arbitrary) mode
  $x : m_x$ of the variable. We reason by an inelegant case
  distinction.
  \begin{itemize}
  \item If $\mcomp m \Dereference$ is $\Dereference$, then by
    inversion (\myref{Lemma}{lem:invert-dereference}) either $m_x$ is
    $\Dereference$ (problem solved) or $x$ does not occur in $t$ (no
    need for the substitution lemma).
  \item If $\mcomp m \Dereference$ is not $\Dereference$, then $m$
    must be $\Ignore$ or $\Delay$. If it is $\Ignore$, inversion
    (\myref{Lemma}{lem:invert-ignore}) directly proves our goal. If it is
    $\Delay$, then by inversion (\myref{Lemma}{lem:invert-delay}) $m_x$
    itself can be weakened (subsuming the derivation of $t$) to be
    below $\Delay$.
  \end{itemize}

  \paragraph{Reduction under context} Reducing a head-redex under
  context preserves typability by the argument above. Let us consider
  the lookup case.
  \begin{mathpar}
    \infer
    {\inctx x v E}
    {\rew {\plug E x} {\plug E v}}
  \end{mathpar}
  By inspecting the $\inctx x v E$ derivation, we find a value binding
  $B$ within $E$ with $x = v$, and a derivation of the form
  \begin{mathpar}
    \infer
    {\derbinding {\fam i {\of {x_i} {\Gamma'_i}}} B
      \\
      \fam i {m'_i} \defeq \fam i {\max(m_i, \Guard)}
      \\
      \der {\Gamma_u, \fam i {\of {x_i} {m_i}}} u m}
    {\der
      {\envsum {\envbigsum {\fam i {\mcomp {m'_i} {\Gamma'_i}}}} {\Gamma_u}}
      {\letrecin B u}
      m}

    \infer
    {\fam i
      {\der
        {\Gamma_i, \fam j {\of {x_j} {m_{i,j}}}}
        {v_i}
        \Return}
     \\
     \fam {i,j} {m_{i,j} \preceq \Guard}
     \\\\
     \fam i {\Gamma'_i =
       \envsum {\Gamma_i} {\envbigsum {\fam j {\mcomp {m_{i,j}} {\Gamma'_j}}}}}
    }
    {\derbinding
      {\fam i {\of {x_i} {\Gamma_i}}}
      {\fam i {x_i = v_i}}}
  \end{mathpar}
  By abuse of notation, we will write $m_x$, $\Gamma_x$ and
  $\Gamma'_x$ to express the $m_i$, $\Gamma_i$ and $\Gamma'_i$ for the
  $i$ such that $x_i = x$.

  The occurrence of $x$ in the hole of $\plug E \hole$ is typed
  (eventually by a variable rule) at some mode $m_\hole$. The
  declaration-side mode $m_x$ was built by collecting the usage modes
  of all occurrences of $x$ in the \code{let rec} body $u$, which in
  particular contains the hole of $E$, so we have
  $m_\hole \preceq m_x$ by \myfullref{Lemma}{lem:environment-flow}.

  The binding derivation gives us a proof
  $\der {\Gamma_x, \Gamma_{\kwd{rec}}} v \Return$ that the binding
  $x = v$ was correct at its definition site, where
  $\Gamma_{\kwd{rec}}$ has exactly the mutually-recursive variables
  $\fam i {\of {x_i} {m_i}}$. Notice that this subderivation is
  completely independent of the ambient expected mode $m$.

  By \myfullref{Theorem}{thm:localization}, we can compose
  this within $m_\hole$ to get a derivation
  $\der {\mcomp {m_\hole} {\Gamma_x, \Gamma_{\kwd{rec}}}}
  v {m_\hole}$, that we wish to substitute into the hole of $E$. First
  we weaken it (\myref{Lemma}{lem:weakening}) into the judgment
  $\der {\mcomp {m_x} {\Gamma_x, \Gamma_{\kwd{rec}}}} v {m_\hole}$.

  Plugging this derivation in the hole of $E$ requires weakening the
  derivation of $u$ (the part of $\plug E \hole$ that is after the
  declaration of $x$) to add the environment
  $\mcomp {m_x} {\Gamma_x, \Gamma_{\kwd{rec}}}$. Weakening is
  always possible (\myref{Lemma}{lem:weakening}), but it may change the
  environment of the derivation, while we need to preserve the
  environment of $\plug E x$. Consider the following valid derivation:
  \begin{mathpar}
    \infer
    {\derbinding {\fam i {\of {x_i} {\Gamma'_i}}} B
      \\
      \fam i {m''_i} \defeq
      \fam i {\max(\max(m_i, \mcomp{m_x}{\Gamma_{\kwd{rec}}}(x_i)), \Guard)}
      \\
      \der {\envsum {\Gamma_u} {\mcomp {m_x} {\Gamma_x}},
            \envsum {\fam i {\of {x_i} {m_i}}}
              {\mcomp {m_x} {\Gamma_{\kwd{rec}}}}}
           {\subst u {\by v x}} m}
    {\der
      {\envsum
        {\envbigsum {\fam i {\mcomp {m''_i} {\Gamma'_i}}}}
        {\envsum{\Gamma_u}{\mcomp {m_x} {\Gamma_x}}}}
      {\letrecin B {\subst u {\by v x}}}
      m}
  \end{mathpar}
  To show that we preserve the environment of $\plug E x$, we show that this
  derivation is not in a bigger environment than the environment of our source term:
  \begin{mathpar}
    \envsum
        {\envbigsum {\fam i {\mcomp {m''_i} {\Gamma'_i}}}}
        {\envsum{\Gamma_u}{\mcomp {m_x} {\Gamma_x}}}

    \preceq

    \envsum {\envbigsum {\fam i {\mcomp {m'_i} {\Gamma'_i}}}} {\Gamma_u}
  \end{mathpar}

  By construction we have
  $m_x \preceq m'_x \preceq m''_x$ and
  $\Gamma_x \preceq \Gamma'_x$, so
  $\mcomp {m_x} {\Gamma_x} \preceq \mcomp {m''_x} {\Gamma'_x}$
  which implies
  \begin{mathpar}
    \envsum
        {\envbigsum {\fam i {\mcomp {m''_i} {\Gamma'_i}}}}
        {\envsum{\Gamma_u}{\mcomp {m_x} {\Gamma_x}}}

    \preceq

    \envsum
        {\envbigsum {\fam i {\mcomp {m''_i} {\Gamma'_i}}}}
        {\Gamma_u}
  \end{mathpar}

  Then, notice that $\Gamma_{\kwd{rec}}(x_i)$ is exactly $m_{x,i}$, so
  $m ''_i$ is $\max(m'_i, \mcomp {m_x} {m_{x,i}})$. We can thus rewrite
  $\mcomp {m''_i} {\Gamma'_i}$ into
  $\envsum {\mcomp {m'_i} \Gamma'_i} {\mcomp {\mcomp {m_x} {m_{x_i}}} {\Gamma'_i}}$,
  which gives
  \begin{mathpar}
    \envsum
    {\envbigsum {\fam i {\mcomp {m''_i} {\Gamma'_i}}}}
    {\Gamma_u}

    =

    \envsum
    {\envsum
      {\envbigsum {\fam i {\mcomp {m'_i} {\Gamma'_i}}}}
      {\mcomp {m_x} {\envbigsum {\fam i {\mcomp {m_{x,i}} {\Gamma'_i}}}}}
    }
    {\Gamma_u}
  \end{mathpar}

  The extra term
  ${\envbigsum {\fam i {\mcomp {m_{x,i}} {\Gamma'_i}}}}$ is precisely
  the term that appears in the definition of $\Gamma'_x$ from the
  $\fam i {\Gamma_i}$, taking into account transitive mutual dependencies --
  indeed, when we replace $x$ by its value $v$, we replace transitive
  dependencies on its mutual variables by direct dependencies on
  occurrences in $v$. We thus have
  \begin{mathpar}
    \envbigsum {\fam i {\mcomp {m_{x,i}} {\Gamma'_i}}}
    \preceq
    \Gamma'_x
  \end{mathpar}
  and can conclude with
  \begin{mathpar}
    \begin{array}{ll}
      &
      \displaystyle
      \envsum
      {\envsum
        {\envbigsum {\fam i {\mcomp {m'_i} {\Gamma'_i}}}}
        {\mcomp {m_x} {\envbigsum {\fam i {\mcomp {m_{x,i}} {\Gamma'_i}}}}}
      }
      {\Gamma_u}
      \\ \preceq &
      \displaystyle
      \envsum
      {\envsum
        {\envbigsum {\fam i {\mcomp {m'_i} {\Gamma'_i}}}}
        {\mcomp {m_x} {\Gamma'_x}}
      }
      {\Gamma_u}
      \\ \preceq &
      \displaystyle
      \envsum
      {\envbigsum {\fam i {\mcomp {m'_i} {\Gamma'_i}}}}
      {\Gamma_u}
    \end{array}
  \end{mathpar}
\end{proof}

\section{Compilation to Global Store: Simulation Proof}
\label{ann:simulation-proof}

This section explains the definition of the relation in
Figure~\ref{fig:head-term-relation}, and develops the simulation
result to prove \myfullref{Theorem}{thm:backward-simulation} and
\myfullref{Theorem}{thm:no-segfault}.

\begin{remark}[$\alpha$-equivalence of configurations]
  In a configuration $\conf H t$, the heap $H$ binds the variables of
  its domain in $t$. In particular, consistently renaming a variable
  in $H$ and in $t$ results in an $\alpha$-equivalent configuration;
  for example $\conf {\heapext {} x \bot} x$ and
  $\conf {\heapext {} y \bot} y$ are $\alpha$-equivalent. Our
  operations on configurations respect
  $\alpha$-equivalence.
\end{remark}

\begin{fact}[Congruence]\label{fact:congruence}
  If
  $\rew {\conf H t} {\conf {H'} {t'}}$ then
  $\rew
    {\conf {\heapjoin {H_E} H} {\plug E t}}
    {\conf {\heapjoin {H_E} {H'}} {\plug E {t'}}}$.
\end{fact}

\begin{fact}[Relevant head reduction]\label{fact:relevant-head-reduction}
  If $\rewhead {\conf H t} {\conf {H'} {t'}}$ then the domain of $H$
  contains only free variables of $t$; furthermore, $t'$ and the domain of $H'$
  contain only free or bound variables of $t$.
\end{fact}

\subsection{Relating the Local-Store Term with Global-Store Configurations}

\subsubsection{Simple rules} The simple rules relate term-formers that
exist in both the source and target language; they simply ask the
evaluable subterms to be related and take the union of the
corresponding heaps.

\subsubsection{$\kwd{letrec}$ rules}
The $\kwd{letrec}$ rules relate the $\kwd{letrec}$ construct in the source
language to store-update operations in the target. There are several
rules, that correspond to distinct ``moments'' in the computation of
the translation of a $\letrecin {\fam {i \in I} {x_i = t_i}} u$ binding, which
initially gets translated as
  $\newin {\fam i {x_i}} {\seq {\para {\fam i {\setref {x_i} {t_i}}}} {\comp u}}$:
\begin{itemize}
\item In the first moment we are reducing the allocations
  $\newin {\fam i {x_i}} {}$: some of them have already been allocated in the heap,
  others are yet to be evaluated. The rest of the target term is not in reducible
  position, so it is unchanged. This is the rule \Rule{init}.
\item In the second moment, all store locations have been created, and
  we are evaluating the store updates
  $\fam i {\setref {x_i} {t_i}}$. Some of them have been fully reduced
  to values and then written in the store, others are still
  (partially evaluated) terms. This is the rule \Rule{write}.
\item In the third moment, the store updates have been performed
  so the $\para \dots$ block reduced to $\Done$. This is the rule \Rule{done}.
\item In the fourth and last moment, we are in the process of further
  evaluating the body of the $\kwd{letrec}$-definition, $u$; the target is
  not restricted to the source translation $\comp u$ anymore, it may
  be any target configuration related to $u$. This is the rule
  \Rule{further}.
\end{itemize}

\begin{example}
  The following relations hold:
  \begin{mathpar}
    \termrel
      {\app x {(\letrecin {y = {\cS y}} y)}}
      {\conf \emptyset {\app x {(\newin y {\seq {\para {(\setref y {\cS y})}} y})}}}

    \termrel
      {\app x {(\letrecin {y = {\cS y}} y)}}
      {\conf {\heapext {} y \bot} {\app x {(\seq {\para {(\setref y {\cS y})}} y)}}}
    \end{mathpar}
    \begin{mathpar}
    \termrel
      {\app x {(\letrecin {y = {\cS y}} y)}}
      {\conf {\heapext {} y {\cS y}} {\app x y}}

    \termrel
      {(\letrecin {y = {\cS y}} {\app x y})}
      {\conf {\heapext {} y {\cS y}} {\app x y}}
  \end{mathpar}
\end{example}

\subsubsection{Heap rules} Finally, the heap rules give more to relate
local heaps and global heaps (than simple disjoint union of all
local heaps), to account for the fact that local heaps (explicit
substitutions) may be duplicated or erased during the reduction of our
source terms, as demonstrated in our Example~\ref{ex:store-duplication}.

When trying to prove that a source term is related to a target
configuration, the \Rule{heap-weaken} rule allows let us discard
a fully-evaluated part of the global target store that is not
referenced from the rest of the heap or the target term. This lets us
ignore, when relating two terms, the parts of the global store that
correspond to local stores that have been erased by $\beta$-reduction,
for example when reducing the source term
$\app {(\lam x \Done)} {(\letrecin {x = \kwd{Foo}} x)}$ and its
related target term.

\begin{remark}
  The restriction that only value heaps $B$ may be weakened, instead
  of arbitrary heap fragments $H'$, gives us a stronger inversion
  principle in Fact~\ref{fact:target-value-inversion}: in certain
  cases we can prove that the heap of a related configuration must be
  a value heap, which would never be the case if one could always
  weaken arbitrary heaps.
\end{remark}

The \Rule{heap-copy} rule let us relate a source term and a target
configuration even when the source term contains several copies of
a local store fragment that is included only once in the target global
store: by this rule it suffices to prove the relation on a larger
global store that includes several copies of some parts of the
original store.

The rule is defined for any \emph{renaming} $\phi$ from heap locations
(variables) to locations, that may ``collapse'' several locations from
the larger heap $H$ into the same location in the resulting heap
$\phi(H)$. To apply this rule we require that $\phi$ be compatible
with the heap $H$, which intuitively means that locations $x, y$ in
$H$ that are collapsed by $\phi$ into in the same location must have
pointed to the same value after $\phi$-renaming. We also require
$\phi$ be the identity on variables outside $\dom H$; without this
restriction, applying $\phi$ could change the free variables of the
target term.


\begin{definition}[Finite renaming]\label{def:finite-renaming}
  A (finite) \emph{renaming} $\phi$ is a total function from variables
  to variables whose domain $\dom \phi$, defined as the set
  $\{ x \mid \phi(x) \neq x \}$, is finite.

  We write $\phi(t)$ for the replacement of each free variable
  $x \in t$ by $\phi(x)$.
\end{definition}

\begin{definition}[Heap compatibility]\label{def:heap-compatibility}
  We write $(\compatible \phi H)$ if the following properties hold:
  \begin{description}
  \item[functionality]
    $\forall x, y \in \dom H,\quad\phi(x) = \phi(y) \implies \phi(H(x)) = \phi(H(y))$
  \item[definedness]
    $\forall x, y \in \dom H,\quad\phi(x) = \phi(y) \wedge H(x) = \bot \implies x = y$
  \end{description}
\end{definition}

Functionality implies that
$\phi(H) \defeq \{ \phi(x) \mapsto \phi(t) \mid (x \mapsto t) \in H \}$
is well-defined as a finite map.

Definedness implies that $\phi$ is injective on the uninitialized
locations of $H$: distinct uninitialized locations cannot be collapsed
together.

\begin{example}
  The following relations hold:

  \begin{mathpar}
    \begin{array}{l}
      {\app {\app x
          {(\letrecin {y = {\cS y}} y)}}
        {(\letrecin {y' = {\cS {y'}}} {y'})}}
      \\
      \termrel {}
      {\conf
        {\heapext {\heapext {}
            y {\cS y}}
          {y'} {\cS {y'}}
        } {\app {\app x y} {y'}}}
    \end{array}

    \begin{array}{l}
      {\app {\app x
          {(\letrecin {y = {\cS y}} y)}}
        {(\letrecin {y' = {\cS {y'}}} {y'})}}
      \\
      \termrel {}
      {\conf
        {\heapext {}
            y {\cS y}}
        {\app {\app x y} {y}}}
    \end{array}
    \end{mathpar}

    The second relation is proved by using the \Rule{heap-copy} rule on the first one,
    with the heap function $\phi$ defined by $\phi(y') = \phi(y) = y$ and,
    for any $z \neq y'$, $\phi(z) = z$. This $\phi$ is compatible with the example heap
    because $\phi(\cS y) = \phi(\cS {y'}) = \cS y$.
\end{example}

\begin{fact}[Union compatibility]\label{fact:union-compatibility}
  If $\phi$ is compatible with $H_1$ and $H_2$,
  and $\phi(H_1)$ and $\phi(H_2)$ have disjoint domains,
  then $\phi$ is compatible with $\heapjoin {H_1} {H_2}$.
\end{fact}

\paragraph{Properties of the relation}
The simple rules are syntax-directed. The others are not:
\begin{itemize}
\item The $\kwd{letrec}$-binding rules relate the same source term
  to several different target terms.

  In the case of the rule \Rule{further}, the target term is exactly
  the one of a premise: any term $u_t$ may be related to a term of the
  form $\letrecin B \dots$ if it is the result of a reduction sequence
  that introduced $B$ in the global heap.

\item The heap rules act only on the target heap, not on the source or
  target terms. In particular, they may occur at the conclusion of any
  relation derivation.
\end{itemize}

\begin{fact}\label{fact:related-variables}
  A related term and configuration have the same free variables.
\end{fact}

\begin{fact}\label{fact:translation-related}
  For any source term $t$ we have $\termrel t {\conf \emptyset {\comp t}}$.
\end{fact}
\begin{proof}
  By immediate induction on $t$. In the $\kwd{letrec}$ case, use the
  \Rule{init} rule with $I = \emptyset$ --- no location initialized
  yet. For all other term-formers, use the simple rules.
\end{proof}

\begin{fact}[Target value inversion]\label{fact:target-value-inversion}
  If we have $\termrel t {\conf H {v_t}}$, a relation where the target term
  is a value, then:
  \begin{itemize}
  \item $t$ is a source value of the form $\plug L {v_s}$,
  \item $H$ is a value heap $B$,
  \item $\termrel {v_s} {\conf \emptyset {v_t}}$ are related by
    a simple rule in the derivation of $\termrel {\plug L {v_s}} {\conf B {v_t}}$.
  \end{itemize}
\end{fact}
\begin{proof}
  By induction on the relation derivation.

  The property is directly satisfied by the simple rules that may
  related a target value ($\lambda$-abstraction and data constructor),
  with $L = \square$ and $B = \emptyset$.

  The only letrec-related rule whose target may be a value is
  \Rule{further}, and it preserves this property (growing $L$ and $B$
  with the recursive bindings).

  The heap rules also preserve this property; they may only transform
  the value heap $B$ into another value heap. This is by definition
  for \Rule{heap-weaken}: it is restricted to weakening value heaps
  only. In the case of \Rule{heap-copy}, remark that if $\phi(H)$ is
  a value heap, then $H$ must also be a value heap: $\phi$ is
  a variable renaming, so we can only have $\phi(v?) = \bot$ if
  $v? = \bot$.
\end{proof}

Note that we do not have such an inversion principle for source
values: if we have $\termrel {v_s} {\conf H t}$, then the target
source term $t$ need not be a value, as $v_s$ may contain value
bindings that are not yet evaluated in $t$. For example we have
\begin{mathpar}
 \termrel
   {\letrecin {y = \cS y} {\cS y}}
   {\conf {\heapext {} y \bot} {\seq {\para {(\setref y {\cS y})}} {\cS y}}}
\end{mathpar}
where the source term is a value.

\subsection{Backward Simulation}

\begin{lemma}[Relation substitution]\label{lem:relation-substitution}~\\
If
\begin{mathline}
  \termrel {t_s} {\conf H {t_t}}

  \termrel {u_s} {\conf B {u_t}}

  H, B \text{~disjoint}
\end{mathline}
then
\begin{mathline}
  \termrel
    {\subst {t_s} {\by {u_s} x}}
    {\conf
      {\heapjoin H B}
      {\subst {t_t} {\by {u_t} x}}}
\end{mathline}
\end{lemma}

\begin{proof}
  If $x$ does not occur in $t_t$, then this is exactly the \Rule{heap-weaken} rule.
  Let us now consider the case where there is at least one occurrence of $x \in {t_t}$.
  Let $\fam {i \in I} {x_i}$ be the (non-empty) family of occurrences of $x$ in $t_t$.

  Let us choose, for each $x_i$, an $\alpha$-equivalent copy
  $\conf {B_i} {u_{t,i}}$ of $\conf B {u_t}$, where $B_i$ is fresh for
  $\conf H {t_t}$.

  From $\termrel {t_s} {\conf H {t_t}}$ we get a derivation of
  \begin{mathline}
  \termrel
    {\subst {t_s} {\by {u_s} x}}
    {\conf
      {H \heapjoinfam {i \in I} {B_i}}
      {\substfam {t_t} {i \in I} {\by {u_{t,i}} {x_i}}}}
  \end{mathline}
  by replacing each sub-derivation $\termrel x {\conf \emptyset {x_i}}$
  by our assumptions $\termrel {u_s} {\conf {B_i} {u_{t,i}}}$.

  We now consider the map $\phi$ that maps each $y_k \in B_k$ for
  some $k$ to the corresponding $y \in B$, and all other variables
  to themselves. By this definition we have that:
  \begin{itemize}
  \item $\dom{\phi} \subseteq \heapjoinfam i {B_i}$;
  \item $\phi$ is the identity on $H$, so it is
    compatible with $H$;
  \item $\phi(B_i) = B$ and $\fam i {\phi({u_{t,i}}) = {u_t}}$;
  \item $\phi$ is functional on $\heapjoinfam i {B_i}$;
    definedness trivially holds on value heaps,
    so $\phi$ is compatible with $\heapjoinfam i {B_i}$.
  \end{itemize}

  Furthermore, $H$ and $\phi(\heapjoinfam i {B_i} = B$ have disjoint domains,
  so by Fact~\ref{fact:union-compatibility} we have that $\phi$ is compatible
  with $H \heapjoinfam i {B_i}$.

  As a consequence, from
  \begin{mathline}
  \termrel
    {\subst {t_s} {\by {u_s} x}}
    {\conf
      {H \heapjoinfam {i \in I} {B_i}}
      {\substfam {t_t} {i \in I} {\by {u_{t,i}} {x_i}}}}
  \end{mathline}
  we can apply the \Rule{heap-copy} rule with $\phi$ to deduce
  \begin{mathline}
  \termrel
    {\subst {t_s} {\by {u_s} x}}
    {\conf
      {\phi(H \heapjoinfam {i \in I} {B_i})}
      {\phi(\substfam {t_t} {i \in I} {\by {u_{t,i}} {x_i}})}}
  \end{mathline}
  that is our goal
  \begin{mathline}
  \termrel
    {\subst {t_s} {\by {u_s} x}}
    {\conf
      {\heapjoin H B}
      {\subst {t_t} {\by {u_t} x}}}
  \end{mathline}
\end{proof}


\begin{theorem*}[\outlineref{thm:backward-simulation}]~\\
If
\begin{mathline}
  \termrel {t_s} {\conf H {t_t}}

  \rew {\conf H {t_t}} {\conf {H'} {t'_t}}
\end{mathline}
then $\exists t'_s$,
\begin{mathline}
  \rewopt {t_s} {t'_s}

  \termrel {t'_s} {\conf {H'} {t'_t}}
\end{mathline}
\end{theorem*}

\begin{proof}
We proceed by induction on the derivation of $\termrel {t_s} {\conf H {t_t}}$.

The ``simple rules'' are the easy cases, with the interesting cases
coming heap-management or letrec-related cases.

\subsubsection*{Simple rules}

The cases of variables and $\lambda$-abstractions are immediate, as no
reduction can occur. For application, matching and constructors, the
same reasoning works in all cases, so we will only consider
application.

\begin{mathline}
\infer
  {\termrel {f_s} {\conf {H_f} {f_t}}
   \\
   \termrel {u_s} {\conf {H_u} {u_t}}}
  {\termrel
    {\app {f_s} {u_s}}
    {\conf {\heapjoin {H_f} {H_u}} {\app {f_t} {u_t}}}}
\end{mathline}

We proceed by case analysis on the reduction
$\rew {\conf H {t_t}} {\conf {H'} {t'_t}}$ with
$t_t = \app {f_t} {u_t}$ and $H = \heapjoin {H_f} {H_u}$.
The reduction is a \Rule{ctx} rule of the form
\begin{mathline}
  \infer
  {\rewhead {\conf {H_\square} {t_h}} {\conf {H'_\square} {t'_h}}}
  {\rew
    {\conf {\heapjoin {H_E} {H_\square}} {\plug E {t_h}}}
    {\conf {\heapjoin {H_E} {H'_\square}} {\plug E {t'_h}}}}
\end{mathline}

Either $E$ is the empty context $\hole$, which is the ``head
reduction'' case, or $E$ is non-empty, the ``non-head'' case.

\paragraph{Head reduction}

The interesting case is when we have a head reduction in the target
term: $f_t$ is of the form $(\lam x {g_t})$, and the reduction
happens on the redex $\app {(\lam x {g_t})} {u_t}$.

In this case $u_t$ must be a value $v_t$, and $f_t$ is also a value
(a $\lambda$-abstraction), so by
\myfullref{Fact}{fact:target-value-inversion} we know that:
\begin{itemize}
\item $\conf {H_u} {u_t}$ is a value configuration $\conf {B_v} {v_t}$
  and $u_s$ is a related value:\\ $u_s = \termrel {v_s} {\conf {B_v} {v_t}}$.
\item $\conf {H_f} {f_t}$ is a value configuration
  $\conf {B_f} {\lam x {g_t}}$.
\end{itemize}

More precisely, we know that $f_s$ is of the form
$\plug L {\lam x {g_s}}$, and that the subderivation of
$\termrel {\lam x {g_s}} {\lam x {g_t}}$ within the derivation of
$\termrel {f_s} {\conf {H_f} {f_s}}$ uses a simple rule. This is
necessarily the $\lambda$-abstraction rule, so we have
$\comp {g_s} = g_t$.

Note that if we replace this subderivation
$\termrel {\lam x {g_s}} {\conf \emptyset {\lam x {\comp {g_s}}}}$
by a derivation of $\termrel {g_s} {\conf \emptyset {\comp {g_s}}}$
(Fact~\ref{fact:translation-related}), we obtain a derivation of
$\termrel {\plug L {g_s}} {\conf {B_f} {\comp {g_s}}}$.

Our initial hypotheses
\begin{mathline}
  \termrel {t_s}
  {\rew {\conf H {t_t}} {\conf {H'} {t'_t}}}
\end{mathline}
became
\begin{mathline}
\termrel
  {\app {\plug L {\lam x {g_s}}} {v_s}}
  {\rew
    {\conf {\heapjoin {B_f} {B_v}} {\app {(\lam x {\comp {g_s}})} {v_t}}}
    {\conf {\heapjoin {B_f} {B_v}} {\subst {\comp {g_s}} {\by {v_t} x}}}}
\end{mathline}

We conclude this case of the proof with
\begin{mathline}
\rew
  {\app {\plug L {\lam x {g_s}}} {v_s}}
  {\termrel
    {\plug L {\subst {g_s} {\by {v_s} x}}}
    {\conf {\heapjoin {B_f} {B_v}} {\subst {\comp {g_s}} {\by {v_t} x}}}}
\end{mathline}
where the last relation is obtained from
$\termrel {\plug L {g_s}} {\conf {B_f} {\comp {g_s}}}$
and $\termrel {v_s} {\conf {B_v} {v_t}}$
by \myfullref{Lemma}{lem:relation-substitution}.

\paragraph{Non-head reduction}

If $\app {f_t} {u_t}$ is not the main redex of the reduction, we want
to proceed by induction on $f_t$ or $u_t$, depending on where the
reduction happens. Our evaluation context $\plug E \hole$ must be of
the form $\app {\plug {E_f} \hole} {u_t}$ or
$\app {f_t} {\plug {E_u} \hole}$; the proof of both cases is symmetric,
so we only discuss the case $\app {\plug {E_f} \hole} {u_t}$.

\begin{mathline}
  \infer
  {\termrel {f_s} {\conf {H_f} {f_t}}
   \\
   \termrel {u_s} {\conf {H_u} {u_t}}}
  {\termrel
    {\app {f_s} {u_s}}
    {\conf {\heapjoin {H_f} {H_u}} {\app {f_t} {u_t}}}}
\quad
  \infer
  {\rewhead {\conf {H_\square} {t_h}} {\conf {H'_\square} {t'_h}}}
  {\rew
    {\conf {\heapjoin {H_E} {H_\square}} {\app {\plug {E_f} {t_h}} {u_t}}}
    {\conf {\heapjoin {H_E} {H'_\square}} {\app {\plug {E_f} {t'_h} {u_t}}}}}
\end{mathline}

To proceed by induction on our premise
$\termrel {f_s} {\conf {H_f} {f_t}}$ we need the fact that $H_{\hole}$
is a sub-heap of $H_f$ (rather than $H_u$). This is ``obviously'' the
case, but we found it subtle to justify precisely.

For any variable $x \in \dom{H_\hole}$, let us show that $x$ belongs
to the domain of $H_f$ and not $H_u$ --- the two heaps are disjoint. By
\myfullref{Fact}{fact:relevant-head-reduction}, we know that $H_\hole$
is included in the free variables of $t_h$, so in particular $x$ is
free in $f_t$. Remember that we assume that, in each syntactic object
(term, configuration, derivation...), bound variables are distinct
from free variables; this can always be chosen to be the case by
performing $\alpha$-renamings appropriately. In the derivation of
$\termrel {\app {f_s} {u_s}} {\conf {\heapjoin {H_f} {H_u}} {\app {f_t} {u_t}}}$,
if $x$ was not in $\dom{H_f}$ it would be free in $\conf {H_f} {f_t}$;
in particular, it could not be bound in $\conf {H_u} {u_t}$, so we
know $x \notin \dom {H_u}$.

We thus have that $H_f$ is of the form $H_{E_f} \uplus H_\hole$,
so we can use our induction hypothesis on
\begin{mathline}
\termrel
  {f_s}
  {\infer*
    {\rewhead {\conf {H_\hole} {t_h}} {\conf {H'_\hole} {t'_h}}}
    {\rew
      {\conf {H_f} {\plug {E_f} {t_h}}}
      {\conf {H_{E_f} \uplus {H'_\hole}} {\plug {E_f} {t'_h}}}}}
\end{mathline}
to get a $f'_s$ such that
\begin{mathline}
\rewopt
  {f_s}
  {\termrel
    {f'_s}
    {\conf {H_{E_f} \uplus {H'_\hole}} {\plug {E_f} {t'_h}}}}
\end{mathline}
which lets us conclude our goal with
\begin{mathline}
\rewopt
  {\app {f_s} {u_s}}
  {\termrel
    {\app {f'_s} {u_s}}
    {\conf {\heapjoin {H_{E_f} \uplus {H'_\hole}} {H_u}} {\app {\plug {E_f} {t'_h}} {u_t}}}}
\end{mathline}

\subsubsection*{Letrec rules}~

Simulating the letrec rules is fairly simple, because the relation was
precisely designed to correspond to the dynamic semantics of the
translation of $\kwd{let rec}$ in the target language, with four rules
\Rule{init}, \Rule{write}, \Rule{done}, \Rule{further} that correspond
to four possible states of reduction of this translation, with
reductions from one state only to itself or to the next state in the
list.

\paragraph{Rule} \Rule{init}
\begin{mathline}
  \infer
  {J \neq \emptyset}
  {
    {\letrecin
      {\fam {i \in I} {x_i = t_i}, \fam {j \in J} {x_j = t_j}}
      u}
    \\ \termrel {}
    {\conf {\heapextfam {} {i \in I} {x_i} \bot}
      {\newin {\fam {j \in J} {x_j}}
        {\seq {\para {\fam {k \in I \uplus J} {\setref {x_k} {t_k}}}} {\comp u}}}
    }
  }
\end{mathline}

Any reduction of this target term is of the form below, with
$J = 1 \uplus {J'}$ where $1$ is some singleton set $\{\star\}$:

\begin{mathline}
  {\conf {\heapextfam {} {i \in I} {x_i} \bot}
    {\newin {x_\star} {\newin {\fam {j \in {J'}} {x_j}}
      {\seq {\para {\fam {k \in I \uplus 1 \uplus {J'}} {\setref {x_k} {t_k}}}} {\comp u}}}}}

\rew {}
  {\conf {\heapextfam {} {i \in I \uplus 1} {x_i} \bot}
    {\newin {\fam {j \in {J'}} {x_j}}
      {\seq {\para {\fam {k \in I \uplus 1 \uplus {J'}} {\setref {x_k} {t_k}}}} {\comp u}}}}
\end{mathline}

If $J' \neq \emptyset$, we have $\termrel {t_s} {\conf {H'} {t'_t}}$
again an instance of the \Rule{init} rule. If $J' = \emptyset$ then
$\newin {\fam {j \in J'} {x_j}} t$ is just $t$, so our reduced term is
of the form
\begin{mathpar}
  {\conf {\heapextfam {} {i \in I \uplus 1} {x_i} \bot}
    {\seq {\para {\fam {k \in I \uplus 1} {\setref {x_k} {t_k}}}} {\comp u}}}
\end{mathpar}
which is related to the same source term by the \Rule{write} rule,
with an empty set of committed values $\fam {i \in I} {v_{s,i}, v_{t,i}}$.

\paragraph{Rule} \Rule{write}
\begin{mathline}
\infer
  {\fam {i \in I} {\termrel {v_{s,i}} {\conf {B_i} {v_{t,i}}}}
   \\
   \fam {j \in J} {\termrel {t_{s,j}} {\conf {H_j} {t_{t,j}}}}
  }
  {
   {\letrecin {\fam {i \in I} {x_i = v_{s,i}}, \fam {j \in J} {y_j = {t_{s,j}}}} u}
   \\ \termrel {}
    {\conf
      {\heapjoin
        {\heapextfam
          {\heapextfam {} {i \in I} {x_i} {v_{t,i}}}
          {j \in J} {y_j} \bot}
        {\heapjoin {\fam {i \in I} {B_i}} {\fam {j \in J} {H_j}}}}
      {\seq {\para {(\fam {i \in I} \Done, \fam {j \in J} {\setref {y_j} {t_{t,j}}})}}
        {\comp u}}
    }}
\end{mathline}

There are three possible kinds of reduction for a related target term of this form:
\begin{enumerate}
\item We may be doing a reduction within one of the $\fam {j \in J} {t_{t,j}}$.
\item If one of the $\fam {j \in J} {t_{t,j}}$ is a value $v_{t,j}$,
  the write $\setref {y_j} {v_{t,j}}$ may be committed.
\item If the set of uncommitted bindings $\fam {j \in J} {\setref {y_j} {t_{t,j}}}$ is empty,
the subterm $\para {\fam {i \in I} \Done}$ reduces to $\Done$.
\end{enumerate}

In the first case, we have
$\rew {\conf {H_j} {t_{t,j}}} {\conf {H'_j} {t'_{t,j}}}$.\footnote{It
  is not immediate that the part of the global heap that is modified
  is precisely $H_j$, but it the same reasoning that we detailed in
  the ``Simple rules'' case. We will omit discussing similar instances
  of this point in the rest of the proof.} By induction hypothesis we
get $\termrel {t'_{s,j}} {\conf {H'_j} {t'_{t,j}}}$, and can conclude
by using the \Rule{write} rule again.

In the second case, we have $J = 1 \uplus J'$ where $t_{t,\star}$ is
a value $v_{t,\star}$, and thus $H_\star$ a value heap $B_\star$ by
\myfullref{Fact}{fact:target-value-inversion}. The reduction is thus
of the form
\begin{mathline}
  {\hconf
    {\heapjoin
      {\heapextfam
        {\heapext
          {\heapextfam {} {i \in I} {x_i} {v_{t,i}}}
          {y_\star} \bot}
        {j \in J'} {y_j} \bot}
      {\heapjoin {\fam {i \in I} {B_i}}
        {\heapjoin {B_\star} {\fam {j \in {1 \uplus J'}} {H_j}}}}}
    {\seq {\para {(\fam {i \in I} \Done,
                   \fam {\star \in 1} {\setref {y_\star} {v_{t,\star}}},
                   \fam {j \in {J'}} {\setref {y_j} {t_{t,j}}})}}
      {\comp u}}}

\rew {}
  {\hconf
    {\heapjoin
      {\heapextfam
        {\heapext
          {\heapextfam {} {i \in I} {x_i} {v_{t,i}}}
          {y_\star} {v_{t,\star}}}
        {j \in J'} {y_j} \bot}
      {\heapjoin {\fam {i \in I} {B_i}}
        {\heapjoin {B_\star} {\fam {j \in J'} {H_j}}}}}
    {\seq {\para {(\fam {i \in I} \Done,
                   \fam {\star \in 1} \Done,
                   \fam {j \in {J'}} {\setref {y_j} {t_{t,j}}})}}
      {\comp u}}}
\end{mathline}
and we can use the \Rule{write} rule again (with $I+1$ and $J'$) to
relate to the same source term.

In the third case, the reduction is
\begin{mathline}
\rew
  {\conf
    {\heapjoin
      {\heapextfam {} {i \in I} {x_i} {v_{t,i}}}
      {\fam {i \in I} {B_i}}}
    {\seq {\para {\fam {i \in I} \Done}}
      {\comp u}}}
  {\conf
    {\heapjoin
      {\heapextfam {} {i \in I} {x_i} {v_{t,i}}}
      {\fam {i \in I} {B_i}}}
    {\seq \Done {\comp u}}}
\end{mathline}
and the reduced target term is related to the initial source term by
\Rule{done}.

\paragraph{Rule} \Rule{done}
\begin{mathline}
  \infer
  {\fam {i \in I} {\termrel {v_{s,i}} {\conf {B_i} {v_{t,i}}}}}
  {\termrel
    {\letrecin {\fam {i \in I} {x_i = v_{s,i}}} u}
    {\conf
      {\heapjoin
        {\heapextfam {} {i \in I} {x_i} {v_{t,i}}}
        {\fam {i \in I} {B_i}} H}
      {\seq \Done {\comp u}}
    }
  }
\end{mathline}

The only possible reduction of the target term is
\begin{mathline}
\rew
  {\conf
    {\heapjoin
      {\heapextfam {} {i \in I} {x_i} {v_{t,i}}}
      {\fam {i \in I} {B_i}} H}
    {\seq \Done {\comp u}}}
  {\conf
    {\heapjoin
      {\heapextfam {} {i \in I} {x_i} {v_{t,i}}}
      {\fam {i \in I} {B_i}} H}
    {\comp u}}
\end{mathline}
The resulting target term is related to the initial source term by the
rule \Rule{further}, using \myfullref{Fact}{fact:translation-related}
to relate $u$ and $\comp u$.

\paragraph{Rule} \Rule{further}
\begin{mathline}
  \infer
  {\fam {i \in I} {\termrel {v_{s,i}} {\conf {B_i} {v_{t,i}}}}
   \\
   \termrel {u_s} {\conf H {u_t}}
  }
  {\termrel
    {\letrecin {\fam {i \in I} {x_i = v_{s,i}}} {u_s}}
    {\conf
      {\heapjoin
        {\heapextfam {} {i \in I} {x_i} {v_{t,i}}}
        {\heapjoin {\fam {i \in I} {B_i}} H}}
      {u_t}
    }
  }
\end{mathline}

The only possible reduction of the target term comes from a reduction
$\rew {\conf H {u_t}} {\conf {H'} {u'_t}}$. The proof in this case is
by immediate induction hypothesis on $\termrel {u_s} {\conf H {u_t}}$.

\subsubsection*{Heap rules}

\paragraph{Rule} \Rule{heap-weaken}
\begin{mathline}
\infer
{\termrel {t_s} {\conf H {t_t}}
 \\
 \forall x \in \dom{B},\ x \notin \conf H {t_t}}
{\termrel {t_s}
  {\conf {\heapjoin H B} {t_t}}}
\end{mathline}

By \myfullref{Fact}{fact:relevant-head-reduction}, any head reduction
$\rewhead {\conf {H_\hole} {t_h}} {\conf {H'_\hole} {t_h}}$ has
$H_\hole$ included in the free variables of $t_t$, and $H'_\hole$ in
its free or bound variables. In particular, those two sub-heaps of
$\heapjoin H B$ can be assumed disjoint from $B$, so they are
sub-heaps of $H$, with $H = \heapjoin {H_E} {H_\hole}$ and
$H_E, H'_\hole, B$ disjoint.

We can conclude by induction: from
\begin{mathline}
\termrel {t_s} {\rew
  {\conf {\heapjoin {H_E} {H_\hole}} {\plug E {t_h}}}
  {\conf {\heapjoin {H_E} {H'_\hole}} {\plug E {t'_h}}}}
\end{mathline}
we have by induction a $t'_s$ such that
\begin{mathline}
\rewopt {t_s} {\termrel
  {t'_s}
  {\conf {\heapjoin {H_E} {H'_\hole}} {\plug E {t'_h}}}}
\end{mathline}
and we can conclude (as $H'_\hole$ is disjoint from $B$) with the
\Rule{heap-weaken} rule again ($H_E, H'_\hole, B$ are disjoint)
\begin{mathline}
\rewopt {t_s} {\infer*
  {\termrel {t'_s}
    {\conf {\heapjoin {H_E} {H'_\hole}} {\plug E {t'_h}}}}
  {\termrel {t'_s}
    {\conf {\heapjoin {\heapjoin {H_E} {H'_\hole}} B} {\plug E {t'_h}}}}}
\end{mathline}

\paragraph{Rule} \Rule{heap-copy}
\begin{mathline}
\infer
  {\termrel {t_s} {\conf H {t_t}}
   \\
   \compatible \phi H
   \\
   \dom \phi \subseteq \dom H
  }
  {\termrel {t_s} {\conf {\phi(H)} {\phi(t_t)}}}
\end{mathline}
This is the delicate case of proof, testing our definition of
$(\compatible \phi H)$.

Our reduction must be a \Rule{ctx} rule, so $\phi(t_t)$ is of the form
$\plug {E_0} {t_{h,0}}$. Variable renamings preserve the term
structure, so the preimage by $\phi$ of $\plug {E_0} {t_{h,0}}$ must
itself be of the form $\plug E {t_h}$. Without loss of generality, we
can thus assume that $\phi(t_t)$ is of the form
$\plug {\phi(E)} {\phi(t_h)}$.

We then reason by case analysis on the possible head reduction rules
involving $\phi(t_h)$. In each case, we will refine this
without-loss-of-generality reasoning by inverting $\phi$ on the
specific structure of the head reduction rule.

\paragraph{Rule case: pure reductions}

Without loss of generality, we can assume that any pure reduction
from $\phi{\conf H {t_t}}$ must be of the form
\begin{mathline}
\infer
  {\rewhead {\conf \emptyset {\phi(t_h)}} {\conf \emptyset {u'_h}}}
  {\rew
    {\conf {\phi(H)} {\plug {\phi(E)} {\phi(t_h)}}}
    {\conf {\phi(H)} {\plug {\phi(E)} {u'_h}}}}
\end{mathline}

Substitutions preserve reductions: if $t \to t'$ in the
$\lambda$-calculus then $t[\sigma] \to t'[\sigma]$ for any
variable-to-terms substitution $\sigma$. In the case of variable
renaming $\phi$ only, we have a converse property that if $\phi(t)$
is reducible, then $t$ itself is reducible (a variable-to-variable
substitution cannot introduce a redex). If $\phi(t)$
reduces to some $u'$, then $t$ reduces to some $t'$; as substitutions
preserve reductions we furthermore have that $\phi(t') = u'$.

From our assumption
$\rewhead {\conf \emptyset {\phi(t_h)}} {\conf \emptyset {u'_h}}$ we
thus have $t'_h$ such that $\phi(t'_h) = u'_h$ and
$\rewhead {\conf \emptyset {t_h}} {\conf \emptyset {t'_h}}$. We thus have
\begin{mathline}
\termrel {t_s} {\conf H {t_t}}
=
  {\infer*
    {\rewhead {\conf \emptyset {t_h}} {\conf \emptyset {t'_h}}}
    {\rew {\conf H {\plug E {t_h}}} {\conf H {\plug E {t'_h}}}}}
\end{mathline}
which gives by induction hypothesis a $t'_s$ such that
\begin{mathline}
\rewopt {t_s} {\termrel {t'_s} {\conf H {\plug E {t'_h}}}}
\end{mathline}
which lets us conclude by applying \Rule{heap-copy} again:
\begin{mathline}
\rewopt {t_s}
  {\infer*
    {\termrel {t'_s} {\conf H {\plug E {t'_h}}}}
    {\termrel {t'_s} {\conf {\phi(H)} {\plug {\phi(E)} {u'_h}}}}
  }
\end{mathline}

\paragraph{Rule case:} \Rule{new}

Without loss of generality, a \Rule{new} reduction of a configuration
in the image of a variable renaming $\phi$ must start from a term of
the form $\phi {\conf H {\plug E {\newin x u}}}$. The bound variable
$x$ can be assumed outside the domain of $H$ and the free variables of
$\plug E {\newin x u}$, as well as outside their image through
$\phi$. We have $\dom \phi \subseteq \dom H$, so we know that $x$ is
also outside the domain of $\phi$, which gives $\phi(x) = x$. Thus any
such reduction can be assumed to be of the form
\begin{mathline}
\termrel {t_s}
  {\infer*
    {\rewhead
      {\conf \emptyset {\phi(\newin x u)}}
      {\conf {\heapext {} x \bot} {\phi(u)}}}
    {\rew
      {\conf {\phi(H)} {\plug {\phi(E)} {\phi(\newin x u)}}}
      {\conf {\heapext {\phi(H)} x \bot} {\phi(\plug E u)}}}}
\end{mathline}

From our premise
\begin{mathline}
  \termrel {t_s} {\conf H {\plug E {\newin x u}}}
\end{mathline}
we have by induction hypothesis a $t'_s$ such that
\begin{mathline}
  \rewopt {t_s}
  {\termrel {t'_s} {\conf {\heapext H x \bot} {\plug E u}}}
\end{mathline}
and we can conclude with the \Rule{heap-copy} rule again
\begin{mathline}
  \rewopt {t_s}
  {\inferrule*
    {\termrel {t'_s} {\conf {\heapext H x \bot} {\plug E u}}
     \\
     \compatible \phi {\heapext H x \bot}}
    {\termrel {t'_s} {\conf {\phi(\heapext H x \bot)} {\phi(\plug E u)}} \hspace{12em}}}
\end{mathline}
where the compatibility of $\phi$ with $\heapext H x \bot$ is a direct consequence
of \myfullref{Fact}{fact:union-compatibility}.

\paragraph{Rule case:} \Rule{set}

Without loss of generality, a \Rule{set} reduction from a term in the
image of a variable renaming $\phi$ must be of the form
\begin{mathline}
\rew
  {\phi{\conf {\heapext H x \bot} {\plug E {\setref x v}}}}
  {\phi{\conf {\heapext H x v} {\plug E \Done}}}
\end{mathline}
We also have a premise
$
\termrel {t_s} {\conf {\heapext H x \bot} {\plug E {\setref x v}}}
$
where the target reduces as
\begin{mathline}
\termrel {t_s}
  {\rew
    {\conf {\heapext H x \bot} {\plug E {\setref x v}}}
    {\conf {\heapext H x v} {\plug E \Done}}}
\end{mathline}
which gives by induction hypothesis a $t'_s$ such that
\begin{mathline}
\rewopt {t_s}
  {\termrel
    {t'_s}
    {\conf {\heapext H x v} {\plug E \Done}}}
\end{mathline}

We can conclude by applying \Rule{heap-copy} again on this last relation,
\begin{mathline}
\rewopt {t_s}
  {\infer*
    {\termrel {t'_s} {\conf {\heapext H x v} {\plug E \Done}}}
    {\termrel {t'_s} {\phi{\conf {\heapext H x v} {\plug E \Done}}}}}
\end{mathline}
provided that $\phi$ is compatible with $\heapext H x v$.
It is (see \myref{Definition}{def:heap-compatibility}):
\begin{description}
\item[definedness] is preserved from $\heapext H x \bot$
\item[functionality] holds for any pair of references in $H$. For
  pairs containing $x$, we know by definedness of $\phi$ on
  $\heapext H x \bot$ that $z \neq x$ implies $\phi(z) \neq \phi(x)$,
  which implies functionality.
\end{description}

\begin{remark}
  This case is the reason why definedness was introduced in definition
  of compatible renamings. Functionality is required for the notation
  $\phi(H)$ to even make sense, but definedness is not an obvious
  requirement. Here is it essential for \Rule{set} to preserve
  functionality.
\end{remark}

\paragraph{Rule case:} \Rule{lookup}

Without loss of generality, a \Rule{lookup} reduction of a configuration
in the image of a variable renaming $\phi$ must start from a term of
the form
\begin{mathline}
\rew
  {\phi {\conf {\heapext H x v} {\plug E x}}}
  {\phi {\conf {\heapext H x v} {\plug E v}}}
\end{mathline}
we also have a premise
$
\termrel {t_s} {\conf {\heapext H x v} {\plug E x}}
$
so we have the reduction
\begin{mathline}
\termrel {t_s}
  {\rew
    {\conf {\heapext H x v} {\plug E x}}
    {\conf {\heapext H x v} {\plug E v}}}
\end{mathline}
and thus by induction hypothesis a $t'_s$ such that
\begin{mathline}
\rewopt {t_s}
  {\termrel {t'_s}
    {\conf {\heapext H x v} {\plug E v}}}
\end{mathline}
which lets us conclude by applying \Rule{heap-copy} again (we know by
assumption that $\phi$ is compatible with $\heapext H x v$)
\begin{mathline}
\rewopt {t_s}
  {\infer*
    {\termrel {t'_s}
      {\conf {\heapext H x v} {\plug E v}}}
    {\termrel {t'_s}
      {\phi {\conf {\heapext H x v} {\plug E v}}}}}
\end{mathline}
\end{proof}

\subsection{Viciousness and Segfaults}

Besides showing a tight correspondence between our local and global
store semantics, the backward simulation result is intended to prove
that this compilation strategy preserves safety: if our static
analysis accepts a source term, then its compilation will not fail due
to undefined values. To establish this, we relate the notions of
failures in the target language (forcing contexts and
$\mathsf{Segfault}$ terms from Figure~\ref{fig:target-syntax}) to
failures in the source language (forcing contexts and
$\mathsf{Vicious}$ from Figure~\ref{fig:failures}).

\begin{fact}[No value forcing]\label{fact:no-value-forcing}~\\
  A target value $v_t$ is never of the form $\plug {E_{f,t}} t$
  for any target forcing context $E_{f,t}$.
\end{fact}

\begin{lemma}[Target forcing inversion]\label{lem:target-forcing-inversion}~\\
If $t_s$ is related to a variable-forcing target configuration
\begin{mathline}
  \termrel {t_s} {\conf H {\plug {E_{f,t}} x}}
\end{mathline}
then
\begin{mathline}
  t_s = \plug {E_{f,s}} x
\end{mathline}
for some source forcing context $E_{f,s}$.
\end{lemma}

\begin{proof}
A target forcing context $E_{f,s}$ is defined
as the identity $\hole$ or the composition of an ordinary evaluation context
$E$ with a forcing frame $F_f$ (see Figure~\ref{fig:target-syntax}).

The proof is by induction on the relation; for each layer of structure
in the relation, it may be part of $E$, then we proceed by induction,
of the forcing frame $F_f$, then it is a base case.

\begin{itemize}
\item variable case: immediate.
\item abstraction case: impossible.
\item application case: by direct induction
  if part of $E$, and immediate if part of $F_f$.
\item constructor case: by direct induction.
\item match case: as the application case.
\item \Rule{init}: impossible as $\newin x \hole$
  cannot be the prefix of a forcing context.
\item \Rule{write}: the target term is a forcing context if the
  context goes into one of the $\fam j {\setref {y_j} {t_{t,j}}}$ being evaluated --
  $\comp u$ is not in reducible position. The corresponding source
  term $t_{s,j}$ is in reducible position in the source, so we proceed
  by induction.
\item \Rule{done}: impossible.
\item \Rule{further}: by direct induction.
\item \Rule{heap-weaken}, \Rule{heap-copy}: by direct induction.
\end{itemize}
\end{proof}

\begin{lemma}[Definedness relation]\label{lem:definedness-relation}~\\
If
\begin{mathline}
  \ctxrel {\plug {E_s} \hole} {\conf H {\plug {E_t} \hole}}

  (z = v) \in {E_s}
\end{mathline}
then $H(z)$ is defined and distinct from $\bot$.
\end{lemma}

\begin{remark}
  By $
    \ctxrel {\plug {E_s} \hole} {\conf H {\plug {E_t} \hole}}
  $ we mean the relation between contexts that extends $(\termrel{}{})$
  with the rule $\ctxrel \hole {\conf \emptyset \hole}$, or equivalently
  $\termrel {\plug {E_s} x} {\conf H {\plug {E_t} x}}$ for some variable
  $x$ fresh for $E_s, H, E_t$.
\end{remark}

\begin{proof}
  The proof is by induction on the derivation of $
    \ctxrel {\plug {E_s} \hole} {\conf H {\plug {E_t} \hole}}
  $. Most cases are immediate as they do not add bindings to the context.
  \begin{itemize}
  \item hole case $\hole$: immediate (there is no $z$ such that $(z = v) \in \hole$).
  \item variable case: impossible (not of the form $\plug E \hole$).
  \item application case: the holes must be on the same side in the
    source and target applications (as a hole is only related to
    a hole), so we can proceed by direct induction.
  \item construction case: similar to the application case.
  \item match case: by direct induction.
  \item \Rule{init}, \Rule{done}: impossible (the target is not of the form
    $\plug E \hole$).
  \item \Rule{write} and \Rule{further}: these are the interesting
    cases, proved below.
  \item \Rule{heap-weaken}, \Rule{heap-copy}: by direct induction.
  \end{itemize}

\paragraph{Rule} \Rule{write}
\begin{mathline}
  \infer
  {\fam {i \in I} {\termrel {v_{s,i}} {\conf {B_i} {v_{t,i}}}}
   \\
   \fam {j \in J} {\termrel {t_{s,j}} {\conf {H_j} {t_{t,j}}}}
  }
  {
   {\letrecin {\fam {i \in I} {x_i = v_{s,i}}, \fam {j \in J} {y_j = {t_{s,j}}}} u}
   \\ \termrel {}
    {\conf
      {\heapjoin
        {\heapextfam
          {\heapextfam {} {i \in I} {x_i} {v_{t,i}}}
          {j \in J} {y_j} \bot}
        {\heapjoin {\fam {i \in I} {B_i}} {\fam {j \in J} {H_j}}}}
      {\seq {\para {(\fam {i \in I} \Done, \fam {j \in J} {\setref {y_j} {t_{t,j}}})}} {\comp u}}
    }}
\end{mathline}

In the target term, $\comp u$ is not in reducible position. The only
related reducible subterms are the $\fam {j \in J} {t_{t,j}}$ in the
target, which correspond to an evaluation context frame $F$ of the
form $\letrecin {b, x = \hole, b'} u$ in the source. We never have
$(x = v) \in F$ for this partly-evaluated frame, so $z$ must come from
the corresponding $H_j$ and we proceed by induction on the
corresponding premise $\termrel {t_{s,j}} {\conf {H_j} {t_{t,j}}}$.

\paragraph{Rule} \Rule{further}
\begin{mathline}
  \infer
  {\fam {i \in I} {\termrel {v_{s,i}} {\conf {B_i} {v_{t,i}}}}
   \\
   \termrel {u_s} {\conf H {u_t}}
  }
  {\termrel
    {\letrecin {\fam {i \in I} {x_i = v_{s,i}}} {u_s}}
    {\conf
      {\heapjoin
        {\heapextfam {} {i \in I} {x_i} {v_{t,i}}}
        {\heapjoin {\fam {i \in I} {B_i}} H}}
      {u_t}
    }
  }
\end{mathline}
If $z$ is one of the $\fam {i \in I} x_i$, we have $\heapext {} z {v_{t,i}}$ in the
target heap so $H(z) \neq \bot$ as claimed. Otherwise $z$ must be in a $B_I$ or $H$,
and we proceed by induction on the corresponding premise.
\end{proof}

\begin{corollary}[Segfaults are Vicious]\label{cor:segfaults-are-vicious}
  \begin{mathline}
    \termrel {t_s} {\conf H {t_t}}

    \wedge

    \conf H {t_t} \in \mathsf{Segfault}

    \implies

    t_s \in \mathsf{Vicious}
  \end{mathline}
\end{corollary}

\begin{proof}
  $\conf H {t_t}$ is in $\mathsf{Segfault}$, so by definition it is of
  the form $\conf {\heapext {H'} x \bot} {\plug {E_{t,f}} x}$ for some
  target forcing context $E_{t,f}$
  (Figure~\ref{fig:target-syntax}). In particular, we have
  $H(x) = \bot$.

  By \myfullref{Lemma}{lem:target-forcing-inversion}, the related source term $t_s$
  must be of the form $\plug {E_{s,f}} x$ for some source forcing context $E_{s,f}$.

  We cannot have $(x = v) \in E_{s,f}$, as by
  \myfullref{Lemma}{lem:definedness-relation} this would imply
  $H(x) \neq \bot$. Therefore $t_s$ is in $\mathsf{Vicious}$
  (Figure~\ref{fig:failures}).
\end{proof}

\begin{theorem*}[\outlineref{thm:no-segfault}]
\begin{mathline}
  \der \emptyset {t_s} \Return

  \wedge

  \rewstar {\conf \emptyset {\comp  {t_s}}} {\conf H {t'_t}}

  \implies

  {\conf H {t'_t}} \notin \mathsf{Segfault}
\end{mathline}
\end{theorem*}

\begin{proof}
  The metatheory of our static analysis tells us that
  a $\Return$-typed closed program $t_s$ cannot reduce to a vicious term: if
  $\rewstar {t_s} {t'_s}$, then $t'_s \notin \mathsf{Vicious}$.

  We have $\termrel {t_s} {\comp \emptyset {t_t}}$; by
  \myfullref{Theorem}{thm:backward-simulation}, for any
  $\rewstar {\conf \emptyset {\comp {t_s}}} {t'_t}$ there is a $t'_s$
  such that $\rewstar {t_s} {t'_s}$ and
  $\termrel {t'_s} {\conf H {t'_t}}$. In particular,
  $\conf {H} {t'_t}$ cannot be in $\mathsf{Segfault}$, as then the
  related $t'_s$ would be in $\mathsf{Vicious}$ by
  \myfullref{Corollary}{cor:segfaults-are-vicious}.
\end{proof}

\renewcommand{\theTotPages}{29}
\end{document}